%% file: 0main.tex
\documentclass[9pt]{article}
\usepackage[utf8]{inputenc}
\usepackage[english]{babel}
\usepackage{wrapfig}
\usepackage{overpic}
\usepackage[a4paper,margin=1in]{geometry}
\usepackage{authblk}
\usepackage[colorlinks=true, linkcolor=black, citecolor=black, urlcolor=blue, filecolor=blue,breaklinks=true]{hyperref}
\usepackage{mdframed}
\usepackage{array}
\usepackage{microtype}
\usepackage{float}
\usepackage{setspace}
\usepackage{setspace}
\usepackage{csquotes}
\usepackage{ragged2e}
\usepackage{algorithm}
\usepackage{algpseudocode}
\usepackage{amssymb}
\usepackage{graphicx}
\graphicspath{{./main_figs/}{./supplement/suppl_figs/}}
\DeclareGraphicsExtensions{.pdf,.jpeg,.JPG,.png,.PNG, .eps, .tiff}
\usepackage{subcaption}
\DeclareCaptionLabelFormat{bold}{\textbf{(#2)}}
\newcommand{\labelphantom}[1]{
  \parbox{0pt}{\phantomsubcaption\label{#1}}
}
\usepackage{pgffor}
\usepackage{alphalph}
\usepackage[labelfont=bf, textfont=bf, singlelinecheck=off, textfont=footnotesize]{caption}
\usepackage{booktabs}
\usepackage{multirow}
\usepackage{rotating}
\usepackage{amsmath}
\usepackage{float}
\usepackage{makecell}
\usepackage{arydshln}

\input{helperFunctions}

\usepackage[dvipsnames]{xcolor}
\usepackage{changepage}
\usepackage{booktabs,tabularx}
\usepackage{xcolor}
\usepackage{enumitem}
\usepackage{tikz}
\usetikzlibrary{positioning, shapes, arrows.meta, shadows, fit, calc, backgrounds, shadows.blur, decorations.pathmorphing,patterns}
\definecolor{instBlue}{RGB}{102,153,204}
\definecolor{exposetGreen}{RGB}{153,204,153}
\definecolor{outcomeOrange}{RGB}{255,179,102}
\definecolor{confoundGrey}{RGB}{200,200,200}
\definecolor{arrowColor}{RGB}{80,80,80}

\definecolor{exposure}{RGB}{70,130,180}
\definecolor{trait1}{RGB}{220,20,60}
\definecolor{trait2}{RGB}{50,205,50}
\definecolor{drug}{RGB}{255,140,0}
\definecolor{coheterogeneity}{RGB}{147,112,219}

\tikzset{
    instrument/.style={
        draw=instBlue!70!black,
        fill=instBlue!20,
        ellipse,
        minimum width=2.2cm,
        minimum height=1cm,
        font=\small\sf,
        drop shadow
    },
    exposure/.style={
        draw=exposetGreen!70!black,
        fill=exposetGreen!20,
        rounded rectangle,
        rounded corners=3pt,
        minimum width=2.2cm,
        minimum height=1cm,
        font=\small\sf,
        drop shadow
    },
    outcome/.style={
        draw=outcomeOrange!70!black,
        fill=outcomeOrange!20,
        rounded rectangle,
        rounded corners=3pt,
        minimum width=2.2cm,
        minimum height=1cm,
        font=\small\sf,
        drop shadow
    },
    confounder/.style={
        draw=confoundGrey!70!black,
        fill=confoundGrey!10,
        ellipse,
        minimum width=2.4cm,
        minimum height=1cm,
        font=\small\sf,
        dashed
    },
    causalArrow/.style={
        ->,
        thick,
        draw=arrowColor,
        >=Stealth[length=3mm,width=2mm]
    },
    violationArrow/.style={
        ->,
        thick,
        draw=red!70!black,
        dashed,
        >=Stealth[length=3mm,width=2mm]
    },
    edgeLabel/.style={
        font=\footnotesize\sf,
        fill=white,
        inner sep=1pt
    },
    legendBox/.style={
        draw=black!50,
        fill=white,
        rounded corners=2pt,
        font=\footnotesize\sf
    }
}

\tikzset{
  setCircle/.style 2 args={draw, thick, fill=#1!30, opacity=0.6, drop shadow={shadow blur steps=5,shadow xshift=0.5pt,shadow yshift=-0.5pt,shadow scale=1}} ,
  exposure/.style       ={rectangle, draw, thick, rounded corners, minimum width=1.6cm, minimum height=1cm, fill=white},
  outcome/.style        ={rectangle, draw, thick, rounded corners, minimum width=1.6cm, minimum height=1cm, fill=white},
  confounder/.style     ={ellipse, draw, thick, minimum width=1.8cm, minimum height=1cm, fill=white},
  arrow/.style          ={-{Latex[length=3mm]}, thick},
  violation/.style      ={dashed, -{Latex[length=3mm]}, thick},
  small/.style          ={font=\footnotesize, align=center}
}

\usepackage{newunicodechar}
\newunicodechar{ }{~}
\usepackage{placeins}
\usepackage{algpseudocode}
\usepackage{rmathbr}

\usepackage[sorting=none, backend=biber, style=numeric-comp,maxbibnames=10,minbibnames=3, maxcitenames= 2,giveninits=true, natbib=true, url=false,hyperref=true]{biblatex}
\usepackage{rotating} 
\usepackage{caption}
\usepackage{adjustbox}
\usepackage{lscape}

\usepackage{adjustbox}
\newcommand{\ignore}[1]{}

\usepackage{amsmath,amsthm,amssymb}
\usepackage{enumitem}

\usepackage[capitalise, noabbrev, nameinlink]{cleveref}
\crefdefaultlabelformat{#2\textbf{#1}#3}
\Crefname{figure}{\textbf{Figure}}{\textbf{Figures}}
\Crefname{section}{\textbf{Section}}{\textbf{Sections}}
\Crefname{table}{\textbf{Table}}{\textbf{Tables}}

\newtheorem{theorem}{Theorem}
\newtheorem{lemma}{Lemma}
\newtheorem{slemma}{Lemma}

\crefname{slemma}{Lemma}{Lemmas}
\Crefname{slemma}{Lemma}{Lemmas}

\theoremstyle{definition}

\theoremstyle{remark}
\newtheorem{remark}{Remark}

\DeclareMathOperator{\Var}{Var}
\DeclareMathOperator{\Cov}{Cov}

\input{1titlepage}
\begin{document}

    \setstretch{1.15}
    % \linenumbers  % disabled for arXiv submission
    
    \maketitle
    \makeAbstract
    \input{2intro}

    \input{6methods}

\input{3results}

    \input{5discussion}

    \begin{singlespace}
        \printbibliography
    \end{singlespace}
    \clearpage
    \input{Tables}
    \clearpage
    \section*{Supplementary Information}
    \setcounter{page}{1}
    \input{supplement/0suppl_text}

    \input{supplement/1suppl_figs}

    \input{supplement/2suppl_tabs}

\end{document}

%% file: helperFunctions.tex
\newcommand{\generateFigSubpanels}[6][0]{
     \expandafter\newcommand\csname#2\endcsname{
      \begin{figure}[htbp]
        \foreach [count=\i] \x in #6{
            \labelphantom{fig:#2:\AlphAlph{\i}}
        }
        \centering
        \includegraphics[width=#4\linewidth,angle=#1]{#3}
        \caption{\textbf{\normalsize #5}}\label{fig:#2}
        \footnotesize
        \justifying
        \foreach [count=\i] \x in #6{
            \noindent\subref{fig:#2:\AlphAlph{\i}}~\x\space
        }
      \end{figure}
    }
}

\newcommand{\generateFig}[6][0]{
     \expandafter\newcommand\csname#2\endcsname{
      \begin{figure}[htbp]
        \centering
        \includegraphics[width=#4\linewidth,angle=#1]{#3}
        \caption{\textbf{\normalsize #5}\label{fig:#2}
        \footnotesize
        \normalfont
        #6
        }
        
      \end{figure}
    }
}

\newcommand{\generateSidewaysFigSubpanels}[6][0]{
     \expandafter\newcommand\csname#2\endcsname{
      \begin{sidewaysfigure}[htbp]
        \foreach [count=\i] \x in #6{
            \labelphantom{fig:#2:\AlphAlph{\i}}
        }
        \centering
        \includegraphics[width=#4\linewidth,angle=#1]{#3}
        \caption{\textbf{\normalsize #5}}\label{fig:#2}
        \footnotesize
        \justifying
        \foreach [count=\i] \x in #6{
            \noindent\subref{fig:#2:\AlphAlph{\i}}~\x\space
        }
      \end{sidewaysfigure}
    }
}

\newcommand{\generateSidewaysFig}[6][0]{
     \expandafter\newcommand\csname#2\endcsname{
      \begin{sidewaysfigure}[htbp]
        \centering
        \includegraphics[width=#4\linewidth,angle=#1]{#3}
        \caption{\textbf{\normalsize #5}\label{fig:#2}
        \footnotesize
        \normalfont
        #6
        }
        
      \end{sidewaysfigure}
    }
}

\newcommand{\generateTab}[6][]{
     \expandafter\newcommand\csname#2\endcsname{ \begin{table}[ht]
          \centering
                \resizebox{#4\textwidth}{!}{
                \input{#3}
            }
            \caption{\textbf{\normalsize #5}\label{tab:#2}
            \footnotesize
            \normalfont
            #6
            }
            
        \end{table}
    }
}

\newcommand{\generateSidewaysTab}[6][]{
     \expandafter\newcommand\csname#2\endcsname{ \begin{sidewaystable}[ht]
          \centering
                \resizebox{#4\textwidth}{!}{
                \input{#3}
            }
            \caption{\textbf{\normalsize #5}\label{tab:#2}
            \footnotesize
            \normalfont
            #6
            }
            
        \end{sidewaystable}
    }
}

%% file: 1titlepage.tex
\title{Improving Mendelian Randomization Analysis by Instrument Borrowing from Auxiliary Outcome Traits}

\author[1]{Anagh Chattopadhyay}
\author[1,2,*]{Nilanjan Chatterjee}
\affil[1]{Department of Biostatistics, Johns Hopkins Bloomberg School of Public Health, Baltimore, MD}
\affil[2]{Department of Oncology, School of Medicine, Johns Hopkins University, Baltimore, MD}

\affil[ ]{ }
\affil[*]{\textit{Correspondence to }\underline{nilanjan@jhu.edu}}

\date{}

\newcommand{\makeAbstract}{
\begin{center}
    \textbf{Summary}
\end{center}

Mendelian randomization (MR) is a widely used approach for inferring causal effects of exposures on outcomes using genetic variants as instrumental variables; however, existing methods remain vulnerable to bias and/or loss of power in the presence of invalid instruments. We hypothesize that closely related outcome traits are likely to have a large overlap in underlying valid instruments in relation to a given exposure and that instrument borrowing (IB) across such traits can therefore yield more robust MR inference. To operationalize this idea, we first introduce a novel coheterogeneity statistic and its asymptotic theory under different paradigms which can be used to identify secondary outcome traits that are likely to share valid instruments with the primary trait. We then propose extensions of two popular methods, MR-Mode and MR-PRESSO, that improve MR inference for the primary trait, taking advantage of a secondary trait. Extensive simulation studies demonstrate that the proposed coheterogeneity statistic has expected finite-sample properties and IB-based methods consistently outperform their standard counterparts, with gains increasing as the degree of overlap in valid instruments across outcome traits grows. Applications to established positive and negative control hypotheses suggest that IB-based methods offer improved control of type I error and increased power. We further illustrate the practical utility of these methods by revisiting the long-debated hypothesis on the causal effect of vitamin D on cardiometabolic traits.
}

%% file: 2intro.tex
\newpage
\section{Introduction}

\noindent
As genome-wide association studies (GWAS) have led to the identification of increasingly large numbers of genetic variants associated with complex traits and molecular biomarkers, so has also increased the popularity of Mendelian randomization (MR) methods, which exploit these discoveries to explore downstream causal effects of these traits \citep{smith2004mendelian}. MR has been instrumental in the elucidation of numerous relationships between risk factors and diseases \citep{pierce2018mendelian, mordi2021type, silva2024mendelian, yarmolinsky2024association, xu2024socioeconomic}, contributing to sociological research \citep{viinikainen2022does, rogne2024mediating} and investigating biological pathways \citep{xiuyun2022network}. In particular, MR approaches based on genetic variants associated with molecular traits such as proteins have increasingly gained traction to identify potential drug targets and therapeutic interventions \citep{walker2017mendelian, gill2021mendelian, storm2021finding}. However, it is also increasingly recognized that the indiscriminate use of MR with large numbers of genetic instruments and widely available summary-statistics data leads to biases similar to those of observational studies due to violations of key assumptions \citep{qi2021comprehensive}.

Common genetic variants identified in GWAS often do not have specific effects on index traits but instead exhibit pleiotropic effects on a variety of related and unrelated traits \citep{watanabe2019global, qi2024genome, spence2024specificity}. The presence of widespread pleiotropy often leads to violations of the key MR assumption that genetic instruments associated with an index ``exposure'' trait do not affect the outcome of interest through underlying confounders or through other independent mechanisms \citep{pickrell2016detection, visscher2016plethora, visscher201710, verbanck2018detection}. A wide variety of MR methods have emerged to tackle bias due to pleiotropy under alternative assumptions regarding the distribution of such pleiotropic effects across instruments \citep{bowden2015mendelian, hartwig2017robust, qi2019mendelian, burgess2020robust, lin2023robust}. Most methods assume that there exists a hidden set of valid instruments and use a statistical model to develop a data-driven approach that leverages the valid instruments while ``discarding'' the invalid ones to improve inference.

Large-scale simulation studies, as well as analyses using positive and negative controls, show that many of these methods provide improved bias control while retaining power, although to varying degrees \cite{qi2021comprehensive, slob2020comparison}. Nevertheless, negative-control experiments \cite{qi2019mendelian}, including our own, continue to show that even some of the best-performing methods can be susceptible to bias in complex scenarios. In this article, we will show that while exposure–outcome pleiotropy is a ``curse'' for MR analysis, pleiotropy across outcome traits can actually be utilized to improve existing MR methods. Intuitively, related outcome traits such as metabolic disorders and psychiatric conditions are likely to share common confounders with respect to the exposures of interest. Thus, the sets of valid and invalid instruments for exposure–outcome relationships are likely to overlap across related outcome traits (see \cref{graph_summ}). Standard MR methods analyze one outcome trait at a time and, therefore, miss the opportunity to take advantage of such a pleiotropic structure.

In this article, we first introduce a novel measure of coheterogeneity for MR analysis to identify auxiliary outcome trait(s) which will have large overlap with the primary trait with respect to the underlying valid/invalid instrument structure. We establish an asymptotic inferential theory for such a statistic, as an estimator for a population parameter defined by the selected instruments, in a framework that allows the instrument set size to increase with increasing GWAS sample size. We also study the property of the proposed estimator in relation to a limiting population parameter defined {\em by all instruments}. Finally, we propose simple extensions of popular MR methods, MR-Mode and MR-PRESSO, for the joint analysis of a primary trait with selected auxiliary trait(s). We used extensive simulation studies to show that instrument borrowing generally leads to an improvement in power without compromising type-I error. In addition, applications of proposed and existing methods for the investigation of a class of well-studied hypotheses, including positive and negative controls, demonstrate that the IB approach can better control bias, increase efficiency, or both. The framework also naturally lends itself to extensions of other existing MR methods for the joint analysis of multiple outcome traits.

%% file: 6methods.tex
\section{Methods}

\subsection{Background Setup and Notations}

\cref{graph_summ} provides a graphical schema of the key steps of the proposed method. We assume that GWAS summary-level data are available for index exposure \(X\) and two outcomes: a primary trait \(Y_1\) and an auxiliary trait \(Y_2\). For simplicity, we assume that there is no sample overlap between the GWAS for the exposure and the GWAS for either outcome, but we allow potential sample overlap across the GWASs of the two outcome traits, \(Y_1\) and \(Y_2\), as related traits are often studied together.

Let \(N_X=N\) denote the effective sample size for the GWAS of \(X\). We assume that the effective sample sizes for \(Y_1\) and \(Y_2\) grow in fixed ratios with respect to \(N_X\), and therefore index the overall asymptotic regime by \(N\). More precisely, let \(N_l\) denote the GWAS sample size for \(Y_l\), and define \(\kappa_l:=N_l/N\). We assume that \(K_N\) independent SNPs, \(G_1,\dots,G_{K_N}\), have been chosen as instruments for \(X\), which typically require them to achieve a strict level of genome-wide significance, for example \(p<5\times 10^{-8}\). Here, we explicitly allow the instrument set size to depend on sample size because the number of available instruments is expected to increase with increasing GWAS sample size. In the following, for simplicity of notation, we suppress the dependence of \(K\) and other quantities on \(N\) unless it is explicitly needed.

Let
\[
\hat{\beta}_X = (\hat{\beta}_{X,1}, \ldots, \hat{\beta}_{X,K}), \quad 
\hat{\beta}_{Y_1} = (\hat{\beta}_{Y_1,1}, \ldots, \hat{\beta}_{Y_1,K}), \quad
\hat{\beta}_{Y_2} = (\hat{\beta}_{Y_2,1}, \ldots, \hat{\beta}_{Y_2,K})
\]
denote the estimates of the standardized effects of SNPs in \(G\) on \(X\), \(Y_1\), and \(Y_2\), respectively. For \(l\in\{1,2\}\) and \(k\in\mathcal K\), define the Wald ratio
\[
\hat{\theta}_{l,k}:=\frac{\hat{\beta}_{Y_l,k}}{\hat{\beta}_{X,k}}.
\]
In vector form, let
\[
\hat{\theta}_{l}=(\hat{\theta}_{l,1},\ldots,\hat{\theta}_{l,K})
\]
denote the vector of ratio-estimates for outcome \(Y_l\). 

Valid instruments for the \(X\to Y_l\) pathway, \(l\in\{1,2\}\), are those with no direct effect on \(Y_l\) and no association with unmeasured confounders of the \(X\)--\(Y_l\) relationship. Under standard assumptions, the Wald ratios at valid instruments are consistent for the corresponding causal effects, and if all instruments were valid the inverse variance-weighted estimator (IVW) \citep{burgess2013mendelian} would produce efficient estimates. In reality, the set of available instruments is a mixture of valid and invalid instruments, and many alternative MR methods are available to account for invalid instruments under different assumptions about the distribution of effects associated with horizontal pleiotropy, i.e., the direct effects of the instruments on the outcome trait that are not mediated through the exposure.

For related outcomes, validity status may overlap because of shared biological pathways and confounding structures; see \cref{graph_summ}. Let \(\mathcal I_1\) and \(\mathcal I_2\) denote the sets of instruments invalid for \(Y_1\) and \(Y_2\), respectively, with respect to \(X\). We quantify the overlap of invalid instruments by
\[
D^{\mathrm{inv}}_{\mathrm{ov}}
=
\frac{|\mathcal I_1\cap \mathcal I_2|}
{|\mathcal I_1\cup \mathcal I_2|},
\]
where \(|\cdot|\) denotes the cardinality of the set.
An analogous overlap measure can be defined for valid instruments, and it can be easily shown that they are monotonically related to each other.  For brevity, we write
\[
D_{\mathrm{ov}}\equiv D^{\mathrm{inv}}_{\mathrm{ov}}.
\]

A summary of the notation used throughout the paper is provided in \Cref{tab:notation} in the Appendix.

\subsection{A Measure of Coheterogeneity}
\label{subsec:coheterogeneity}

Our first goal is to identify pairs of outcome traits \((Y_1,Y_2)\) that may have substantial overlap in invalid instruments, indicated by \(D_{\mathrm{ov}}\). Because \(D_{\mathrm{ov}}\) is not directly observable, we introduce a summary-statistic-based measure designed to track the shared pleiotropic structure between outcomes. The construction is motivated by the heterogeneity parameter \(\tau^2\) in the meta-analysis of random-effects \citep{dersimonian1986meta}, which quantifies the variation in the estimates of the ratios between instruments. 

Consider the following first-order decomposition of the ratio estimate
\[
\hat\theta_{l,k}=\theta_l+\alpha_{l,k}+\epsilon_{l,k},
\]
where \(\theta_l\) is the causal effect of \(X\) on \(Y_l\), \(\alpha_{l,k}\) is the pleiotropic bias on the ratio-scale, and \(\epsilon_{l,k}\) is the sampling error, with \(\mathbb E(\epsilon_{l,k})=0\). Let \(\sigma_{l,k}^2:=\Var(\epsilon_{l,k})\). In practice, \(\sigma_{l,k}^2\) is estimated from GWAS standard errors, typically by using a delta-method approximation.

Define normalized precision weights as
\[
w_k=
\frac{(\sigma_{1,k}^2\sigma_{2,k}^2)^{-1/2}}
{\sum_{j\in\mathcal K}(\sigma_{1,j}^2\sigma_{2,j}^2)^{-1/2}},
\qquad
\sum_{k\in\mathcal K}w_k=1.
\]
Let \(\bar\alpha_l:=\sum_{k\in\mathcal K}w_k\alpha_{l,k}\) and \(\tilde\alpha_{l,k}:=\alpha_{l,k}-\bar\alpha_l\). Conditional on the selected set of instruments, define the target parameter in the form of a weighted correlation as
\[
\rho_{CH}^{(N)}
=
\frac{C_{12}}{\tau_1\tau_2},
\]
where \(C_{12}:=\sum_{k\in\mathcal K}w_k\tilde\alpha_{1,k}\tilde\alpha_{2,k}\) and \(\tau_l^2:=\sum_{k\in\mathcal K}w_k\tilde\alpha_{l,k}^2\). Large values of \(|\rho_{CH}^{(N)}|\) indicate a strong cross-trait alignment of pleiotropic deviations.

To estimate \(\rho_{CH}^{(N)}\), let \(\hat\sigma_{l,k}^2\) be an estimate of \(\sigma_{l,k}^2\), which are readily available from GWAS, and further let \(\hat\sigma_{12,k}\) be an estimate of \(\Cov(\epsilon_{1,k},\epsilon_{2,k})\), which can be obtained from genome-wide GWAS summary-statistics data by application of methods such as bivariate LD-score regression \citep{bulik2015atlas} (see \cref{sec:cov_est} for details). Define the normalized precision weights as
\[
\hat w_k 
= 
\frac{(\hat\sigma_{1,k}^2\hat\sigma_{2,k}^2)^{-1/2}}
{\sum_{j\in\mathcal K}(\hat\sigma_{1,j}^2\hat\sigma_{2,j}^2)^{-1/2}}.
\]
With \(\bar\theta_l:=\sum_{k\in\mathcal K}\hat w_k \hat\theta_{l,k}\) and \(\Delta_{l,k}:=\hat\theta_{l,k}-\bar\theta_l\), set
\[
\hat\rho_{CH}^{(N)}
=
\frac{\hat C_{12}}{\hat\tau_1\hat\tau_2},
\]
where
\begin{align}
\hat C_{12}
&=
\sum_{k\in\mathcal K}
\hat w_k 
\left(
\Delta_{1,k}\Delta_{2,k}
-
\hat\sigma_{12,k}
\right),
\label{eq:C12_est}\\
\hat\tau_l^2
&=
\left[
\sum_{k\in\mathcal K}
\hat w_k 
\left(
\Delta_{l,k}^2-\hat\sigma_{l,k}^2
\right)
\right]_+,
\qquad l=1,2,
\label{eq:tau_est}
\end{align}
and \([a]_+:=\max(a,0)\).
The subtractions in \eqref{eq:C12_est}--\eqref{eq:tau_est} debias the Wald-ratio moments for the sampling error, so that \(\hat C_{12}\) and \(\hat\tau_l^2\) target the underlying pleiotropic moments. 

The estimator \(\hat\rho_{CH}^{(N)}\) is a smooth functional of the stacked GWAS summary statistics \(\hat{\mathbf b}=(\hat{\mathbf b}_1,\dots,\hat{\mathbf b}_K)\), with \(\hat{\mathbf b}_k=(\hat\beta_{X,k},\hat\beta_{Y_1,k},\hat\beta_{Y_2,k})\); we write \(\hat\rho_{CH}^{(N)}=T_N(\hat{\mathbf b})\) for this map, and note that the precision weights \(\hat w_k\) are themselves formed from \(\hat{\mathbf b}\). Let \(\Omega_k:=\Cov\{\sqrt N(\hat{\mathbf b}_k-\mathbf b_k)\}\) be the per-SNP sampling covariance of the GWAS estimators, and let \(\nabla_k T_N\) denote the gradient of \(T_N\) in the associations of SNP \(k\). The following theorem gives the conditional large-sample distribution of \(\hat\rho_{CH}^{(N)}\) under the joint limit \(N\to\infty\) and \(K=K_N\to\infty\); the regularity conditions and the proof are given in Appendix~\ref{app:proof_main}.

\begin{theorem}[Consistency and asymptotic normality]
\label{thm:main}
Under conditions \textup{(C1)}--\textup{(C6)} in Appendix~\ref{app:proof_main}, the following hold as \(N\to\infty\) with \(K=K_N\to\infty\).
\begin{enumerate}[label=\textup{(\roman*)}, leftmargin=*, itemsep=3pt]
\item \textbf{Consistency.} \(\hat\rho_{CH}^{(N)}-\rho_{CH}^{(N)}\xrightarrow{p}0\).

\item \textbf{Asymptotic normality.}
\[
\sqrt{NK_N}\,\big(\hat\rho_{CH}^{(N)}-\rho_{CH}^{(N)}\big)\xrightarrow{d}\mathcal{N}(0,\sigma^{*2}),
\qquad
\sigma^{*2}:=\lim_{N\to\infty}K_N\sigma_N^2,
\]
where,
\begin{equation}
\sigma_N^2=\sum_{k=1}^K \nabla_k T_N(\mathbf b)^\top\,\Omega_k\,\nabla_k T_N(\mathbf b).
\label{eq:exact_var}
\end{equation}

\item \textbf{Variance estimation.} The plug-in estimator \(\hat\sigma_N^2\), obtained by evaluating \(\nabla_k T_N\) and \(\Omega_k\) at \(\hat{\mathbf b}\), satisfies \(K_N\hat\sigma_N^2\xrightarrow{p}\sigma^{*2}\).
\end{enumerate}
\end{theorem}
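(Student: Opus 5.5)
The argument is a delta method for a triangular array of independent SNP-level summary statistics. Conditional on the selected instruments, the vectors \(\hat{\mathbf b}_k\), \(k=1,\dots,K_N\), are independent, since the SNPs are independent and there is no overlap between the exposure and outcome samples. Each is approximately Gaussian with mean \(\mathbf b_k\) and covariance \(\Omega_k/N\). The estimator \(\hat\rho_{CH}^{(N)}=T_N(\hat{\mathbf b})\) is a smooth function of a few weighted averages over \(k\). So I would write \(T_N=g(\hat M)\). Here \(\hat M\) collects the normalized sums entering \(\hat w_k\), \(\bar\theta_l\), \(\hat C_{12}\), \(\hat\tau_1^2\) and \(\hat\tau_2^2\), and \(g(c,t_1,t_2)=c/\sqrt{t_1t_2}\). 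The corresponding population moments are \(M\), with \(g(M)=\rho_{CH}^{(N)}\). The conditions (C1)--(C6) will be used as follows:
\begin{itemize}
\item instrument strength: \(|\beta_{X,k}|\) is bounded below relative to \(N^{-1/2}\), so Wald ratios admit second-order expansions;
\item \(\Omega_k\) is uniformly bounded and its eigenvalues are bounded below;
\item bounded moments of \(\alpha_{l,k}\);
\item the normalized weights satisfy \(\max_k w_k=O(K^{-1})\);
\item \(\tau_l^2\) stays bounded away from zero, so the positive part \([\cdot]_+\) is inactive with probability tending to one;
\item consistency of \(\hat\sigma_{l,k}^2\) and \(\hat\sigma_{12,k}\) at a rate faster than \((NK)^{-1/2}\).
\end{itemize}

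For consistency (i), I would show \(\hat M-M\xrightarrow{p}0\) coordinatewise. Each coordinate is a sum \(\sum_k w_k f(\hat{\mathbf b}_k)\). Its bias comes from three sources, each of which is \(o(1)\):
\begin{itemize}
\item the debiasing terms \(\hat\sigma_{12,k}\) and \(\hat\sigma_{l,k}^2\), which remove \(\mathbb E(\epsilon_{1,k}\epsilon_{2,k})\) and \(\mathbb E\epsilon_{l,k}^2\) up to higher-order Wald-ratio remainders;
\item the replacement of \(w_k\) by \(\hat w_k\);
\item the centering at \(\bar\theta_l\) rather than \(\theta_l+\bar\alpha_l\), whose contribution is \(O_p(\sum_k w_k^2\sigma^2_{l,k})\).
\end{itemize}
Its variance is \(O(\sum_k w_k^2/N)=O(1/(NK))\) by independence, so Chebyshev's inequality gives convergence. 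Then the continuous mapping theorem applied to \(g\), which is continuous at \(M\) because \(\tau_1\tau_2\) is bounded away from \(0\), yields (i).

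For (ii), I would Taylor-expand \(T_N(\hat{\mathbf b})-T_N(\mathbf b)=\sum_k\nabla_kT_N(\mathbf b)^\top(\hat{\mathbf b}_k-\mathbf b_k)+R_N\). The linear term is a sum of independent mean-zero terms with variance exactly \(\sigma_N^2/N\), and \(K\nabla_kT_N\) is uniformly bounded. Scaling by \(\sqrt{NK}\), the Lyapunov ratio is of order \(\sum_k\|\nabla_kT_N\|^{3}N^{-3/2}/(\sigma_N^2/N)^{3/2}=O(K^{-1/2})\to0\), using bounded third moments of \(\sqrt N(\hat{\mathbf b}_k-\mathbf b_k)\). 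The Lindeberg--Feller theorem then gives the normal limit, and \(K_N\sigma_N^2\to\sigma^{*2}\) is assumed in the conditions.

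The main obstacle is showing that \(\sqrt{NK}\,R_N\to0\) in probability, together with \(\sqrt{NK}\,(T_N(\mathbf b)-\rho_{CH}^{(N)})\to0\). The second quantity is the deterministic bias, because \(T_N(\mathbf b)\) evaluates the moments at the true associations. The quadratic remainder has a diagonal part \(\sum_k\tfrac12(\hat{\mathbf b}_k-\mathbf b_k)^\top\nabla^2_{kk}T_N(\hat{\mathbf b}_k-\mathbf b_k)\). This part has mean of order \(K\cdot K^{-1}\cdot N^{-1}=N^{-1}\), so after scaling it is \(O(\sqrt{K/N})\). I would therefore use the condition \(K_N/N\to0\), which I expect is among (C1)--(C6). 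I would also check that the debiasing subtractions cancel the leading diagonal mean exactly, leaving only Wald-ratio nonlinearity of order \((N\beta_{X,k}^2)^{-1}\), which is controlled by the strength condition. Off-diagonal cross terms arise through \(\bar\theta_l\) and \(\hat w_k\). They are degenerate U-statistic-type sums with variance \(O(1/(NK)^2\cdot K^2/K^2)\), so they are negligible. Error in the plug-in estimators \(\hat\sigma_{12,k}\) and \(\hat\sigma_{l,k}^2\) enters additively and is negligible by the assumed rate.

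For (iii), I would show that \(K\nabla_kT_N(\hat{\mathbf b})-K\nabla_kT_N(\mathbf b)\to0\) uniformly in \(k\), in probability. This follows from the bounded second derivatives together with \(\max_k\|\hat{\mathbf b}_k-\mathbf b_k\|=O_p(\sqrt{\log K/N})\), which holds under sub-Gaussian tails. Combined with uniform consistency of \(\hat\Omega_k\), averaging over \(k\) gives \(K_N\hat\sigma_N^2-K_N\sigma_N^2\xrightarrow{p}0\).
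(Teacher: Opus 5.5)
Your proposal is correct and has the same structure as the paper's proof. You Taylor-expand $T_N$ around $\mathbf b$ with the full gradient (including the weights), apply a Lyapunov CLT to the linear term, and show that the remainder and $T_N(\mathbf b)-\rho_{CH}^{(N)}$, each of order $N^{-1}$, vanish after $\sqrt{NK}$-scaling because $K_N=o(N)$. Your diagonal/degenerate-U-statistic split of the remainder is a finer version of the paper's bound $\|\nabla^2T_N\|_{\mathrm{op}}\|\hat{\mathbf b}-\mathbf b\|^2=O(K^{-1})\cdot O_p(K/N)=O_p(N^{-1})$.

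Two bookkeeping points. First, (C2) allows a post-selection bias $\boldsymbol\delta_{k,N}$, so the linear term is not exactly mean zero. You need the separate bound $\sqrt K\,\bigl|\sum_k\nabla_kT_N(\mathbf b)^\top\boldsymbol\delta_{k,N}\bigr|=o(\sqrt{K/N})$. Second, (C2) gives only fourth moments, not sub-Gaussian tails. A union bound still yields $\max_k\|\hat{\mathbf b}_k-\mathbf b_k\|=O_p(K^{1/4}N^{-1/2})=o_p(1)$, which is all (iii) requires.
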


\begin{remark}
We use the results of \cref{thm:main} for hypothesis testing and confidence interval construction for \(\rho_{CH}^{(N)}\). The substantive assumptions are Conditions (C1) and (C3), which require \(K_N = o(N)\) and a fixed lower bound on instrument strength, \(|\beta_{X,k}| \geq c_x > 0\) uniformly in \(k\). In Appendix~\ref{app:proof_main} (Lemma~\ref{lem:weak_instrument_consistency}), we further sketch the consistency of the estimator under a weak-instrument regime in which \(|\beta_{X,k}|\) is allowed to vanish at a rate \(c_N^\ast\), provided \(\sqrt{N}\, c_N^\ast / K^{1/4} \to \infty\). 
\end{remark}

\begin{remark}[Fixed-weight variance]
The plug-in estimator \(\hat\sigma_N^2\) differentiates \(T_N\) through all of its data dependence, including the precision weights \(\hat w_k\). A simpler alternative treats the weights as fixed and differentiates only through the Wald ratios, giving a closed-form expression \(\sigma_{N,\mathrm{fix}}^2\), derived in \cref{app:proof_main}. Its plug-in estimator evaluates that expression at the sample associations, weights, and moments,
\[
\hat\sigma_{N,\mathrm{fix}}^2
=\frac{1}{\hat\tau_1^2\hat\tau_2^2}\sum_{k=1}^{K}\Big(\frac{\hat w_k}{\hat\beta_{X,k}}\Big)^2
\Big[\big(\hat r_{1,k}\hat{\tilde D}_{2,k}+\hat r_{2,k}\hat{\tilde D}_{1,k}\big)^2
+\frac{\hat{\tilde D}_{2,k}^2}{\kappa_1}+\frac{\hat{\tilde D}_{1,k}^2}{\kappa_2}
+\frac{2\hat\gamma\,\hat{\tilde D}_{1,k}\hat{\tilde D}_{2,k}}{\sqrt{\kappa_1\kappa_2}}\Big],
\]
with \(\hat{\tilde D}_{l,k}=\hat{\tilde\alpha}_{l,k}-\hat\rho_{CH}^{(N)}(\hat\tau_l/\hat\tau_{3-l})\hat{\tilde\alpha}_{3-l,k}\), \(\hat r_{l,k}=\hat\beta_{Y_l,k}/\hat\beta_{X,k}\), \(\kappa_l=N_l/N\), and \(\hat\gamma\) the estimated cross-trait overlap correlation. This alternative is exact when the weights are specified externally rather than estimated from the data; when the weights are estimated, it omits a first-order weight-estimation term and therefore serves as an approximation, which in our simulations was consistently conservative relative to \(\hat\sigma_N^2\). We accordingly use the full-gradient estimator \eqref{eq:exact_var} by default and reserve \(\sigma_{N,\mathrm{fix}}^2\) for user-specified fixed weights.
\end{remark}

\begin{remark}
We also note that \(\rho_{CH}^{(N)}\) is itself a random variable: it is defined conditionally on the selected instrument set \(\mathcal{K}_N\), which in turn depends on the GWAS noise. Under additional assumptions on the distribution of SNP effects and the instrument-selection process, \(\rho_{CH}^{(N)}\) can be shown to converge, at a suitable rate, to a genome-wide coheterogeneity parameter \(\rho_{CH}\) defined over all instruments (see \cref{thm:genome_wide_limit} in \cref{app:proof_genome_wide_limit}). In principle, one can also develop an asymptotic theory for \(\rho_{CH}^{(N)}\) centered around \(\rho_{CH}\) under additional conditions, but because MR analyses are typically conducted on a fixed set of selected SNPs, we recommend performing conditional coheterogeneity inference on the selected \(\mathcal{K}\) and treating \(\rho_{CH}^{(N)}\) as the target estimand.
\end{remark}

Our original motivation for the coheterogeneity statistic is to detect pairs of outcome traits that share a large fraction of underlying valid instruments. The next lemma shows that, under a polygenic trait architecture, \(\rho_{CH}^{(N)}\) is expected to converge to a limiting parameter that is monotone in \(D_{\mathrm{ov}}\).

\begin{lemma}
\label{lem:dov_scaling}
Consider a polygenic architecture for complex traits where a proportion $(\pi)$ of the selected instruments is valid, i.e, those that affect $Y_1$ and $Y_2$ only through $X$, and the remaining fraction, $\bar{\pi}=(1-\pi)$, of the instruments can be further fractionated into $q_0=D_{\mathrm{ov}}$, $q_1$ and $q_2$, with $q_0+q_1+q_2=1$, depending on whether the invalidity of the instruments arises through a confounder that is shared,  $Y_1$-specific or $Y_2$-specific, respectively.  Then, under a random-effect model for effect-sizes for SNPs (see \cref{sec:simulation}) and regularity conditions listed in
Appendix~\ref{subsec:scale}, as \(N\to\infty\), one can show 
\begin{equation}
   \rho_{CH}^{(N)}
\xrightarrow{p} \rho_{CH}=
\frac{
\bar\pi q_{0}(ab)}{
\sqrt{
\left[
c^2+
\bar\pi(q_0+q_1)a^2
\right]
\left[d^2+
\bar\pi(q_0+q_2)b^2
\right]}},
\label{eq:rho_dov_main}
\end{equation}
where $a$, $b$, $c$, and $d$ are constants defined by a combination of variance component parameters of the random-effect model and effect-sizes for the confounders on the outcome traits. As a result, one can further show that $\rho_{CH}$ increases or decreases monotonically as a function of $q_0$ depending on whether the shared confounder affects $Y_1$ and $Y_2$ in the same or opposite directions, respectively. 
\end{lemma}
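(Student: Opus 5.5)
We propose to derive \eqref{eq:rho_dov_main} by writing $\rho_{CH}^{(N)}$ as a ratio of weighted empirical moments of the pleiotropic ratio-scale biases $(\alpha_{1,k},\alpha_{2,k})$, and then applying a weighted law of large numbers under the random-effect model of \cref{sec:simulation}. The plan is to specify the model as follows. Each selected instrument $k$ carries a latent class label $Z_k\in\{\text{valid},\text{shared},Y_1\text{-only},Y_2\text{-only}\}$, with probabilities $\pi,\bar\pi q_0,\bar\pi q_1,\bar\pi q_2$. An invalid SNP acts on a confounder $U_0$ (shared), $U_1$, or $U_2$ with random effect $\gamma_k$. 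Each confounder has effects $\eta_{l}$ on $Y_l$, and we allow independent mean-zero direct-effect noise on the outcomes (variance-component parameters) that does not depend on class. On the ratio scale, $\alpha_{l,k}=(\gamma_k\eta_{l,Z_k}+\xi_{l,k})/\beta_{X,k}$. Here the confounder term is present for $Y_1$ only in the shared and $Y_1$-only classes, and for $Y_2$ only in the shared and $Y_2$-only classes.

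\textbf{Step 1: reduce to i.i.d.\ weighted moments.} Conditional on the weights, the triples $(Z_k,\gamma_k,\xi_{\cdot,k})$ are i.i.d.\ and independent of $w_k$. The regularity conditions of Appendix~\ref{subsec:scale} should give us $\max_k w_k\to0$, together with uniformly bounded fourth moments (these also follow from the instrument strength lower bound, as in (C3)). A weighted weak law of large numbers, via Chebyshev with $\sum_k w_k^2\to0$, then gives
\[
\sum_k w_k\alpha_{1,k}\alpha_{2,k}\xrightarrow{p}\mathbb E^w[\alpha_1\alpha_2],\qquad
\sum_k w_k\alpha_{l,k}^2\xrightarrow{p}\mathbb E^w[\alpha_l^2],\qquad
\bar\alpha_l\xrightarrow{p}\mathbb E^w[\alpha_l],
\]
where $\mathbb E^w$ denotes the limiting weight-averaged expectation. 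If the random effects are mean zero, then $\bar\alpha_l\to0$ and the centering is negligible. Otherwise it contributes a mean-product term; we will absorb this into the constants $a,b,c,d$ by working with centered $\gamma$. The continuous mapping theorem then yields $\rho_{CH}^{(N)}\xrightarrow{p}C_{12}^\infty/(\tau_1^\infty\tau_2^\infty)$.

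\textbf{Step 2: compute the moments by conditioning on $Z_k$.} The cross moment is nonzero only in the shared class, since independence of $\xi_{1}$, $\xi_2$ and of the separate confounders kills all other terms. This gives $C_{12}^\infty=\bar\pi q_0\,\eta_{1,0}\eta_{2,0}\,\mathbb E^w[\gamma^2/\beta_X^2]=\bar\pi q_0 ab$, with $a=\eta_{1}\sqrt{\mathbb E^w[\gamma^2/\beta_X^2]}$ and $b$ defined analogously. We will assume, as a regularity condition, that the confounder effect magnitudes are common to the shared and specific confounders for each outcome. The variance is $\tau_1^{2,\infty}=c^2+\bar\pi(q_0+q_1)a^2$, where $c^2=\mathbb E^w[\xi_1^2/\beta_X^2]$ is the class-independent direct-effect component; $\tau_2^2$ is treated the same way with $d$. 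Substituting these moments gives \eqref{eq:rho_dov_main}.

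\textbf{Step 3: monotonicity.} Substitute $q_1=1-q_0-q_2$, or more naturally hold the specific fractions proportional, and differentiate in $q_0$. With $q_1,q_2$ fixed in ratio, or with $q_0+q_1$ and $q_0+q_2$ held fixed as the total invalid fraction per trait, the denominator is constant or nondecreasing at a slower rate. The sign of $\partial\rho_{CH}/\partial q_0$ is then the sign of $ab$, i.e.\ the sign of $\eta_{1,0}\eta_{2,0}$, which is the claim. \textbf{Main obstacle.} The delicate point is this step: the denominator also depends on $q_0$, so monotonicity requires specifying which quantities are held fixed. We would take the parametrisation $q_1=q_2=(1-q_0)/2$. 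A direct derivative computation shows that $|\rho_{CH}|$ increases in $q_0$, because the numerator grows linearly while each bracket grows at most like $(1+q_0)/2$ with a positive constant $c^2$, $d^2$. We would verify this inequality explicitly.
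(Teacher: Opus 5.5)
Your argument has the same structure as the paper's proof. You split $\alpha_{l,k}$ by confounder class, note that only the shared class contributes to the cross moment, replace weighted moments by mixture moments, and differentiate in $q_0$ after fixing how $1-q_0$ is split between $q_1$ and $q_2$. The step that fails as written is the reduction in Step~1. You assert that $(Z_k,\gamma_k,\xi_{\cdot,k})$ are independent of $w_k$, but this is false for the precision weights the estimator actually uses. By the delta method, $N\sigma_{l,k}^2\approx(\kappa_l^{-1}+r_{l,k}^2)/\beta_{X,k}^2$ with $r_{l,k}=\theta_l+\alpha_{l,k}$, so $w_k$ depends on the class label and on the pleiotropic effects themselves. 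The weighted class frequencies are then not $\pi,\bar\pi q_0,\bar\pi q_1,\bar\pi q_2$, and $\max_k w_k\to0$ plus Chebyshev does not deliver the moments you compute in Step~2. The paper closes this by assuming asymptotically uniform weights, $\max_m|Kw_m-1|\xrightarrow{p}0$ (its \textup{(C**2)}). Under that condition every weighted moment differs from its unweighted counterpart by $o_p(1)$, so only ordinary mixture moments remain. You need that condition, or an explicitly stated asymptotic-independence assumption, in place of the claimed independence.

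Second, the centering term cannot be absorbed into $a,b,c,d$. Centering $\gamma_k$ does not center $\gamma_k/\beta_{X,k}$, because for invalid SNPs $\beta_{X,k}$ itself contains $\theta_{U\to X}\gamma_k$. In the Gaussian simulation model, regressing $\gamma_k$ on $\beta_{X,k}$ gives mean $m=\theta_{U\to X}\sigma_U^2/\sigma_X^2$ among selected SNPs; this is the built-in InSIDE violation. Because $m$ enters only in the classes where a confounder acts, it adds $\eta_1\eta_2m^2\bar\pi\{q_0-\bar\pi(q_0+q_1)(q_0+q_2)\}$ to the covariance, where $\eta_1,\eta_2$ are your confounder effects. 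It also adds $\eta_l^2m^2\bar\pi(q_0+q_l)\{1-\bar\pi(q_0+q_l)\}$ to each variance. Neither term is of the form in \eqref{eq:rho_dov_main}. The paper reaches \eqref{eq:rho_dov_main} only by postulating in \textup{(C**1)} centered ratio-scale components with the stated variances and zero cross-covariances. You must likewise assume mean-zero ratio-scale components rather than claim absorption.

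Step~3 is fine once the parametrization is fixed, with two remarks. Holding both $q_0+q_1$ and $q_0+q_2$ fixed is impossible under $q_0+q_1+q_2=1$, since it pins $q_0$. Your choice $q_1=q_2=(1-q_0)/2$ is the special case $r_1=r_2=\tfrac12$ of the paper's $q_l=(1-q_0)r_l$, and the same argument handles any fixed $r_1+r_2=1$. Write the brackets as $A_1+B_1q_0$ and $A_2+B_2q_0$, with $A_1=c^2+\bar\pi r_1a^2$, $B_1=\bar\pi r_2a^2$, $A_2=d^2+\bar\pi r_2b^2$, $B_2=\bar\pi r_1b^2$. Then $\tfrac{d}{dq_0}\log|\rho_{CH}|=\tfrac{1}{q_0}-\sum_{l=1}^2\tfrac{B_l}{2(A_l+B_lq_0)}>0$, because each $B_l/(A_l+B_lq_0)<1/q_0$ when $A_l>0$.
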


\subsection{IB-Methods}

\subsubsection{IB-Mode}

Next, we consider extending existing MR methods to utilize secondary outcome traits. We first consider extending MR-Mode \cite{hartwig2017robust}, which is one of the most robust methods to handle violation of the InSIDE assumption, although it can suffer power loss compared to alternative methods \citep{qi2021comprehensive, xue2021constrained}. The method requires the zero-modal pleiotropy assumption, which implies that the distribution of underlying horizontal pleiotropic effects have a unique mode at zero corresponding to the set of underlying valid instruments. Intuitively, the ratio-parameters across two traits will also have a unique mode for the set of shared valid instruments, and when there is a large overlap of valid instruments across the two traits, bivariate estimation of mode will be more efficient than its univariate counterpart (see \cref{graph_summ}). Based on this idea, we propose to obtain a joint mode estimator of the two sets of ratio estimates by maximizing the weighted bivariate kernel density estimation (KDE).  For each SNP $k$, denote the bivariate vector of two ratio-estimates $\hat{\boldsymbol{\theta}}^{(k)} = (\hat{\beta}_{Y_1,k}/\hat{\beta}_{X,k}, \; \hat{\beta}_{Y_2,k}/\hat{\beta}_{X,k})^\top$. The causal effect vector $\boldsymbol{\theta} = (\theta_1,\theta_2)^\top$ is then estimated by $\hat{\boldsymbol{\theta}}^{*}$, the maximizer of the weighted bivariate KDE:
\[
\hat{f}(\boldsymbol{\theta}; H,w)
= \sum_{k=1}^K w_k \, |H|^{-1/2}(2\pi)^{-1}
\exp\!\left[-\tfrac{1}{2}\big(\boldsymbol{\theta}-\hat{\boldsymbol{\theta}}^{(k)}\big)^\top
    H^{-1}\big(\boldsymbol{\theta}-\hat{\boldsymbol{\theta}}^{(k)}\big)\right],
\]
where $H$ is a positive-definite bandwidth matrix $2\times 2$ and weights ($w_k \propto 1/(\sigma_{1,k}\sigma_{2,k}))$. We initialize $H$ by first defining $H_0 = \operatorname{diag}(\sigma_{\theta_1}^2 K^{-1/3}, \; \sigma_{\theta_2}^2 K^{-1/3})$ obtained from Silverman's rule for $d=2$, and set $H = \phi H_0$ with the tuning parameter $\phi > 0$, and
\(\sigma_{\theta_l}^2\) for \(l=1,2\) is the empirical variance of the \(K\) ratio estimates corresponding to trait $l$. The default parameter for standard MR-Mode is $\phi=1$ and we found that the same value also provides strict type-I error rate control for IB-Mode for a wide variety of settings (more details in simulations).  

For inference, IB-Mode adopts the same parametric bootstrap as MR-Mode, modified to respect the dependence between the two outcomes. In each bootstrap replicate, the bivariate ratio vector $\hat{\boldsymbol\theta}^{(k)}=(\hat\theta_{1,k},\hat\theta_{2,k})^\top$ at each instrument $k$ is redrawn jointly from a bivariate normal centred at its observed value, $\hat{\boldsymbol\theta}^{(k)}_{(b)}\sim \mathcal{N}_2(\hat{\boldsymbol\theta}^{(k)},\widehat{\boldsymbol\Sigma}_k)$, where $\widehat{\boldsymbol\Sigma}_k$ has diagonal entries equal to the delta-method variances $\hat\sigma_{1,k}^2$ and $\hat\sigma_{2,k}^2$ and off-diagonal entry equal to the cross-trait sampling covariance $\hat\sigma_{12,k}$ (estimated as in \cref{sec:cov_est}); the weighted KDE is then re-fitted to obtain the modal estimate. The standard error of each effect estimate is the median absolute deviation of its bootstrap replicates; as in MR-Mode \citep{hartwig2017robust}, confidence intervals use a normal reference and $p$-values a $t$ reference with $K-1$ degrees of freedom. When the two outcome GWAS are estimated on non-overlapping samples, $\hat\sigma_{12,k}=0$ and the joint draw reduces to the independent redrawing used by MR-Mode; this is the case in our simulation study, so its results are unaffected by this distinction. IB-Mode therefore generalizes MR-Mode by leveraging auxiliary outcomes through a weighted multivariate KDE, while retaining robustness to pleiotropy via modal estimation.

\subsubsection{IB-PRESSO}

MR-PRESSO is an MR method that implements a series of outlier-detection steps to remove potentially invalid instruments \citep{verbanck2018detection}. Intuitively, since valid/invalid instruments are likely to be shared across related outcomes, the outlier detection can be improved by their joint analysis. For each SNP $k = 1, \ldots, K$ and trait index $l \in \{1, 2\}$, let
$\hat\theta_{\mathrm{IVW},l}^{(-k)}$ denote the IVW estimator for outcome $Y_l$
computed with the $k$-th SNP excluded, and let
\[
\hat\beta_{Y_l,-k} 
:= 
\hat\theta_{\mathrm{IVW},l}^{(-k)} \times \hat\beta_{X,k}
\]
denote the predicted effect of the $k$-th SNP on $Y_l$ implied by the
leave-one-out IVW fit. 

Define the bivariate residual at SNP $k$ as
\[
\mathbf{r}^{(k)} = 
\begin{pmatrix}
  \hat\beta_{Y_1,k} - \hat\beta_{Y_1,-k} \\
  \hat\beta_{Y_2,k} - \hat\beta_{Y_2,-k}
\end{pmatrix}.
\]

We propose a joint outlier-detection statistic based on the squared Mahalanobis
distance at each SNP,
$
D^2_k = (\mathbf{r}^{(k)})^\top \Sigma^{-1} \mathbf{r}^{(k)},
$
where $\Sigma$ is the robust covariance matrix of
$\{\mathbf{r}^{(k)}\}_{k=1}^K$ obtained using the minimum covariance
determinant method \citep{rousseeuw1985multivariate}.  This extends naturally to the global statistic
$
\mathrm{RSS}_{\mathrm{IB\text{-}PRESSO}} 
= 
\sum_{k=1}^K D^2_k,
$
whose significance is assessed against a parametric-bootstrap null distribution, as in the original PRESSO. When global heterogeneity is detected, individual outliers are flagged as the SNPs whose $D^2_k$ exceeds the upper-$\alpha$ quantile $\chi^2_{2,\,1-\alpha}$: under the null that instrument $k$ is valid, $\mathbf{r}^{(k)}$ is approximately bivariate Gaussian, so $D^2_k \sim \chi^2_2$.

After the outlying SNPs are removed, the causal effect on the primary outcome $Y_1$ is re-estimated exactly as in MR-PRESSO: by the inverse-variance weighted regression of $\hat\beta_{Y_1,k}$ on $\hat\beta_{X,k}$ through the origin over the retained instruments, with weights $1/\mathrm{se}(\hat\beta_{Y_1,k})^2$, and its standard error is taken as the standard error of this weighted-regression slope. The auxiliary outcome $Y_2$ thus enters only the joint outlier detection (through the bivariate residual $\mathbf{r}^{(k)}$); the final causal estimate and its standard error pertain to $Y_1$.

\cref{algo} summarizes the suggested workflow for an instrument-borrowing implementation.

\begin{algorithm}
\caption{Instrument Borrowing MR Procedure}
\label{algo}
\begin{algorithmic}[1]
  \State \textbf{Identify traits}: Select an exposure $X$, primary outcome $Y_1$, and candidate auxiliary outcomes $Y_2, Y_3, \ldots$ with GWAS summary statistics available for each.
  \State \textbf{Select instruments}: Choose $K$ independent SNPs strongly associated with $X$ (e.g., $p < 5\times 10^{-8}$ after LD clumping).
  \For{each SNP $k = 1, \ldots, K$}
    \For{each trait $Y_l$}
      \State Compute the Wald ratio $\hat\theta_{l,k} = \hat\beta_{Y_l,k}/\hat\beta_{X,k}$ and its delta-method standard error $\hat\sigma_{l,k}$.
    \EndFor
  \EndFor
  \For{each candidate auxiliary trait $Y_l$ ($l \geq 2$)}
    \State Compute the coheterogeneity statistic $\hat\rho_{CH}^{(N)}(Y_1, Y_l)$ between the primary outcome $Y_1$ and the candidate auxiliary $Y_l$, correcting for any sample overlap between their GWAS.
  \EndFor
  \State \textbf{Select auxiliary traits}: Rank auxiliary outcomes by $|\hat\rho_{CH}^{(N)}|$ and retain those for which the test of $H_0:\rho_{CH}^{(N)} = 0$ is statistically significant.
  \State \textbf{Joint MR analysis}: Apply an instrument-borrowing MR method (e.g., IB-Mode or IB-PRESSO) using the top-ranked auxiliary trait; report sensitivity analyses with the next-ranked auxiliaries.
\end{algorithmic}
\end{algorithm}

%% file: 3results.tex
\input{4maintextFigs}

\section{Simulation studies}
\label{sec:simulation}

We designed simulations to evaluate the robustness of Mendelian Randomization (MR) estimators under varied genetic architecture, pleiotropy, and trait overlap. Our framework generalizes \cite{qi2021comprehensive} by jointly simulating summary-level GWAS data for an exposure \(X\) and two outcomes, \(Y_1\) and \(Y_2\), incorporating unmeasured confounding with overlapping structure. For computational efficiency, similar to \cite{qi2021comprehensive}, we directly simulate summary statistics throughout the genome instead of individual-level data, with parameters chosen to align with realistic heritability and effect-size distributions for the corresponding traits.

We simulate genome-wide data on \(M=200{,}000\) independent SNPs. Unmeasured confounding is represented through three latent factors: a shared confounder \(U_0\) that affects both outcomes, and two outcome-specific confounders \(U_1\) and \(U_2\) that affect only \(Y_1\) and \(Y_2\), respectively; see \cref{fig:simulation_structure}. For SNP \(m\), let \(\gamma_m\) denote its direct effect on exposure \(X\), and let \(\phi_{0m}\), \(\phi_{1m}\), and \(\phi_{2m}\) denote its direct effects on \(U_0\), \(U_1\), and \(U_2\), respectively. In addition, let \(\delta_{1m}\) and \(\delta_{2m}\) denote the intrinsic direct effects of SNP \(m\) on \(Y_1\) and \(Y_2\), respectively, not mediated by \(X\) or the modeled confounders.

The total effects of SNP on the exposure and outcome traits are generated as
\begin{equation}
\begin{aligned}
\beta_{X,m}   
&= 
\gamma_m 
+ \theta_{U_0\to X}\phi_{0m} 
+ \theta_{U_1\to X}\phi_{1m} 
+ \theta_{U_2\to X}\phi_{2m}, \\
\beta_{Y_1,m} 
&= 
\delta_{1m} 
+ \theta_{U_0\to Y_1}\phi_{0m} 
+ \theta_{U_1\to Y_1}\phi_{1m} 
+ \theta_{X\to Y_1}\beta_{X,m}, \\
\beta_{Y_2,m} 
&= 
\delta_{2m} 
+ \theta_{U_0\to Y_2}\phi_{0m} 
+ \theta_{U_2\to Y_2}\phi_{2m} 
+ \theta_{X\to Y_2}\beta_{X,m}.
\end{aligned}
\label{sim_frame}
\end{equation}
Here, \(\theta_{U\to X}\) and \(\theta_{U\to Y}\) denote the regression effects of confounders on exposure and outcomes, and \(\theta_{X\to Y_1}\) and \(\theta_{X\to Y_2}\) are the true causal effects of \(X\) on \(Y_1\) and \(Y_2\), respectively. Without loss of generality, we set
\[
\theta_{U_0\to X}
=
\theta_{U_1\to X}
=
\theta_{U_2\to X}
\equiv
\theta_{U\to X},
\qquad
\theta_{U_0\to Y_k}
=
\theta_{U_k\to Y_k}
\equiv
\theta_{U\to Y_k},
\quad k\in\{1,2\}.
\]

We first assume that a fraction $p$ of all SNPs is associated with \(X\), that is, \(\beta_{X,m}\neq0\), and hence forms the underlying pool of potential instruments. This pool is further partitioned into valid instruments, with fraction \(\pi\), and invalid instruments, with fraction $\bar\pi=1-\pi$. For valid instruments, we assume \(\gamma_m\neq0\) and \(\phi_{\ell m}=0\) for \(\ell=0,1,2\), with \(\gamma_m\sim \mathcal{N}(0,\sigma_X^2)\). For invalid instruments, we set the one of \(\phi_{0m}\), \(\phi_{1m}\), or \(\phi_{2m}\) to be nonzero and generate \(\gamma_m\sim \mathcal{N}(0,\widetilde\sigma_X^2)\), where \(\widetilde\sigma_X^2=\sigma_X^2-\theta_{U\to X}^2\sigma_U^2\), so that valid and invalid instruments explain comparable amounts of exposure variation. Invalid instruments are further partitioned according to whether they are associated with \(U_0\), \(U_1\), or \(U_2\), with corresponding fractions \(q_0\), \(q_1\), and \(q_2=1-q_0-q_1\). The nonzero confounder effect, one of \(\phi_{0m}\), \(\phi_{1m}\), or \(\phi_{2m}\), is generated from a mean-zero normal distribution with common variance \(\sigma_U^2\). In this simulation model, the overlap measure for invalid instruments is $
D_{\mathrm{ov}}=q_0.
$

Finally, for each outcome \(Y_k\), the intrinsic direct outcome effect \(\delta_{km}\) is allowed to be nonzero for SNPs that are invalid instruments for \(X\), or for SNPs that are not associated with \(X\) but have direct effects on \(Y_k\). We generate
\[
\delta_{km}\sim \mathcal{N}(\mu_{Y_k},\sigma_{Y_k}^2),
\]
allowing \(\mu_{Y_k}\neq0\) to represent direct directional pleiotropy. SNPs that do not belong to any of the above categories are assigned no effect on \(X\), \(Y_1\), or \(Y_2\). For each SNP, observed GWAS summary statistics are generated as
\[
\hat\beta_{Z,m}
=
\beta_{Z,m}
+
\epsilon_{Z,m},
\qquad
\epsilon_{Z,m}\sim \mathcal{N}(0,N_Z^{-1}),
\qquad
Z\in\{X,Y_1,Y_2\},
\]
with error variances determined by trait-specific sample sizes. The noise terms \(\epsilon_{Z,m}\) are drawn independently across the three traits \(Z\in\{X,Y_1,Y_2\}\), so that the exposure and the two outcome GWAS are assumed to have no overlapping samples. Candidate MR instruments are defined as SNPs that reach genome-wide significance for exposure, \(p<5\times10^{-8}\), reflecting real GWAS discovery. The resulting set of instruments contains both valid and invalid SNPs according to the simulated mixture. Further implementation details are provided in Appendix~\ref{sec:sim_settings}.

\begin{figure}
    \centering
    \includegraphics[width=1\linewidth]{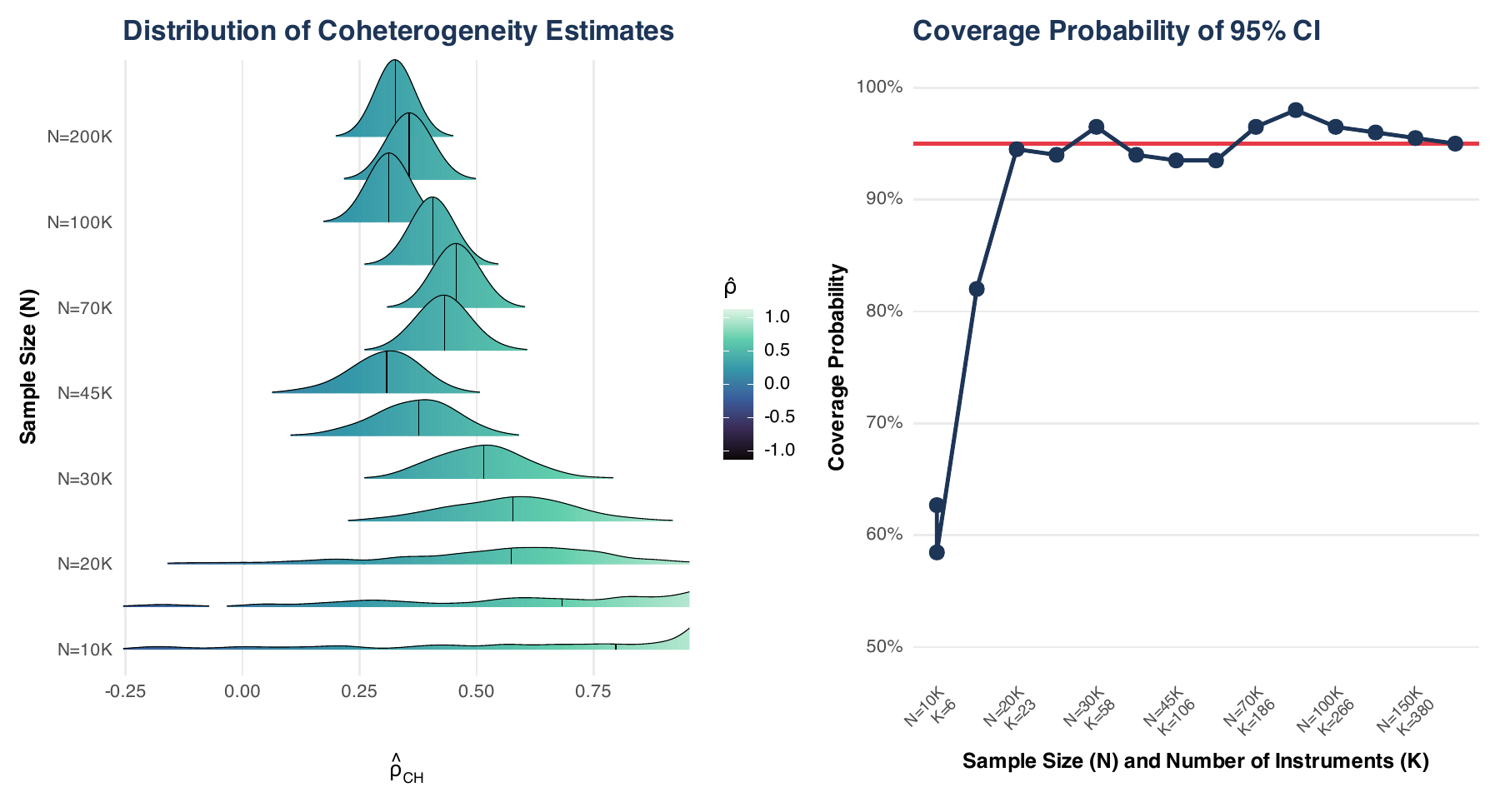}
    \caption{\textbf{Sampling distribution of the coheterogeneity statistics and coverage of 95\% confidence interval.}
\textbf{Left:} Ridge density plots of $\hat{\rho}_{CH}^{(N)}$ across replications with different sample sizes ($N$). \textbf{Right:} Empirical coverage of nominal 95\% confidence intervals for $\hat{\rho}_{CH}^{(N)}$ across sample sizes $N$
(with the corresponding mean number of selected instruments $K$). The horizontal red line marks 95\% nominal coverage.}
\label{fig:cohet_coverage}
\end{figure}

Simulation studies show that the confidence intervals for ${\rho}_{CH}^{(N)}$ based on established asymptotic theory maintain the probability of coverage at the desired level for a large sample size (N $>$ 20K) and the associated number of instruments is also large (K $>$ 20) (\cref{fig:cohet_coverage}).  For larger sample sizes,  the distribution of $\hat{\rho}_{CH}^{(N)}$ follows the normal distribution around ${\rho}_{CH}^{(N)}$ and ${\rho}_{CH}^{(N)}$ has random variation around ${\rho}_{CH}$. We also observe that $\hat{\rho}_{CH}^{(N)}$ follows $D_{\mathrm{ov}}$ in a wide range of scenarios. The relationship between $\hat{\rho}_{CH}^{(N)}$ and $D_{\mathrm{ov}}$ is stronger when the proportion of invalid instruments is larger relative to the total number of instruments (\cref{fig:cohet-Dcov}
). In addition, the trend is also stronger with an increasing sample size for the underlying GWAS. In the absence of common confounders, $\hat{\rho}_{CH}^{(N)}$ can track overlaps in invalid instruments that exhibit directional pleiotropy across traits (\cref{fig:cohet-Dirpleit}). When there are no shared confounders or directional pleiotropy, the $\hat{\rho}_{CH}^{(N)}$ statistics remain close to zero irrespective of the sample size of the underlying GWAS and the proportion of invalid instruments.

Both MR-Mode and IB-Mode involve a key underlying tuning parameter $\phi$, which controls the degree of smoothing in the estimation of the underlying density functions. In general, as $\phi$ increases, both the type-I error and the power of the methods are expected to increase. A default value of $\phi = 1$ has been recommended for standard MR-Mode \citep{yavorska2017mendelianrandomization, hartwig2017robust}. To objectively compare IB-Mode and MR-Mode, we varied the value of $\phi$ over 25 equally spaced values (on logarithmic scale) between 0.1 and 10, and  for each method we selected the largest \(\phi\) that maintains type-I error \(\leq 5\%\). Using this approach, we found that IB-Mode can improve the power of MR-Mode in a wide range of scenarios, including varying sample sizes and the proportion of invalid instruments (\cref{fig:power_comparison1}). The greatest driver of power gain, as expected, is the degree of overlap in the underlying invalid instruments ($D_{\mathrm{ov}}$). In addition, the power gain also depends on the sample size for the GWAS of the auxiliary trait, but not on the underlying causal effect of $X$ for the auxiliary trait  (\cref{fig:power_combined}).  For smaller sample sizes ($N=50$K), we observe that the IB-Mode can sometimes have a modest loss of power compared to the MR-Mode. 

In our simulations, we observed that the choice of $\phi=1$ consistently maintained the type-I error of IB-Mode at the desired level in a wide variety of settings (\cref{fig:type-1-error-calibration}). Although we found some higher values of $\phi$, for example $\phi=2$, also often maintained type-I error in most situations, in the future, we will implement IB-Mode with the more conservative choice of $\phi=1$. We observe that the type-I error of the method for the target trait remains valid even when the exposure has a non-zero causal effect on the secondary trait (\cref{fig:robust_aux_trait}). In other words, instrument borrowing, while leading to an improvement in power, does not result in ``slippage'' of the causal effect of exposure from the secondary to the primary trait. 

\begin{figure}[!h]
    \centering
    \includegraphics[width=\textwidth]{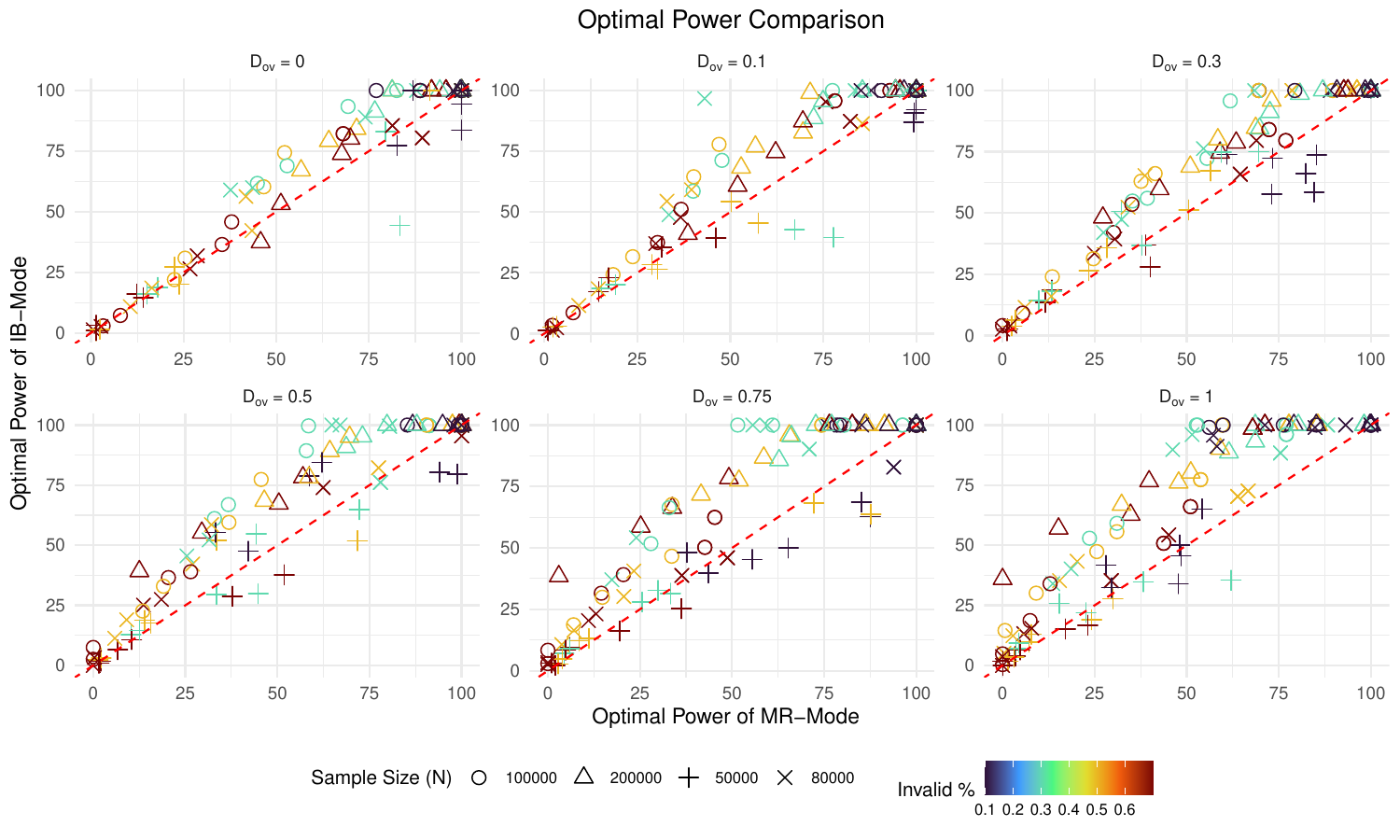}
    \caption{
        \textit{\textbf{Power comparison between MR-Mode and IB-Mode methods across diverse simulation scenarios.} For each method and scenario setting, the largest value for the tuning parameter \(\phi\) that maintains the Type-I error rate at or below 5\% is selected using simulation experiments under null. The statistical power of the two methods at the respective optimal \(\phi\) are then evaluated through simulations under alternative hypotheses. Simulation scenarios vary by sample size, the proportion of invalid instruments, and the level of overlap (\(D_{\mathrm{ov}}\)) for invalid instruments across two outcome traits. In the plot, the color gradient reflects the proportion of invalid instruments, while different point shapes correspond to varying sample sizes. The red dashed identity line ($y = x$) marks equal performance for both methods; points above this line indicate higher power for IB-Mode, whereas points below favor MR-Mode.
    }}
    \label{fig:power_comparison1}
\end{figure}

Consistent with previous reports \cite{qi2021comprehensive, slob2020comparison}, our simulation studies confirm that many existing MR methods, including those designed to account for invalid instruments under various assumptions, can lead to biased inference, i.e., inflated type-I error (\cref{fig:panel2x2_simulation}) or lower coverage of confidence intervals (\cref{fig:Main_simulation_results_coverage}),  in various scenarios, especially when the InSIDE assumptions are violated. In general, IVW, MRMedian, and MR-PRESSO tend to be more sensitive to the presence of invalid instruments.  IB-PRESSO can reduce the bias relative to MR-PRESSO, but the problem of highly inflated type-I error remains in many scenarios. MR-Egger, MR-Mix, MR-ConMix and MR-cML lead to substantially improved type-I error rate controls, but even these methods can sometimes have substantially inflated type-I error under small sample sizes or/and in the presence of a large fraction ( $>$ 0.5) of invalid instruments. The MR-Egger method has a large uncertainty associated with it and as a result has poor power (\cref{fig:panel2x2_simulation}) and large MSE (\cref{figsup:panel2x2_simulation}) in most scenarios. Finally, MR-Mode is the only existing method that can maintain type-I error across all scenarios. IB-Mode also maintains the type-I error in all the same scenarios and consistently leads to improved power (\cref{fig:panel2x2_simulation}) and smaller MSE (\cref{figsup:panel2x2_simulation}) except when the sample size of the underlying GWAS is small.  

In a separate analysis, we compared the power of IB-Mode with $\mathrm{MR}^2$, a Bayesian method that allows the joint analysis of multiple exposures and multiple outcomes simultaneously \cite{zuber2023multi}. We implemented $\mathrm{MR}^2$ using the available package ($\mathrm{MR}^2$) with a single exposure and two outcome traits. For fair comparison, we first conducted simulation studies under the null hypothesis of no exposure effect on the target trait to select the threshold for the underlying posterior inclusion probability that corresponds to a fixed type-I error (0.05 and 0.001). We then simulated data under an alternative to evaluate the power of the proposed method at the selected threshold for inclusion probabilities. The results (\cref{fig:power_comparison_0.001}) show that the power of $\mathrm{MR}^2$ is often much lower than IB-Mode, indicating that $\mathrm{MR}^2$ is not suitable for testing simple causal hypotheses related to a single exposure.

\begin{figure}[h]
    \centering
    \includegraphics[width=0.7\linewidth]{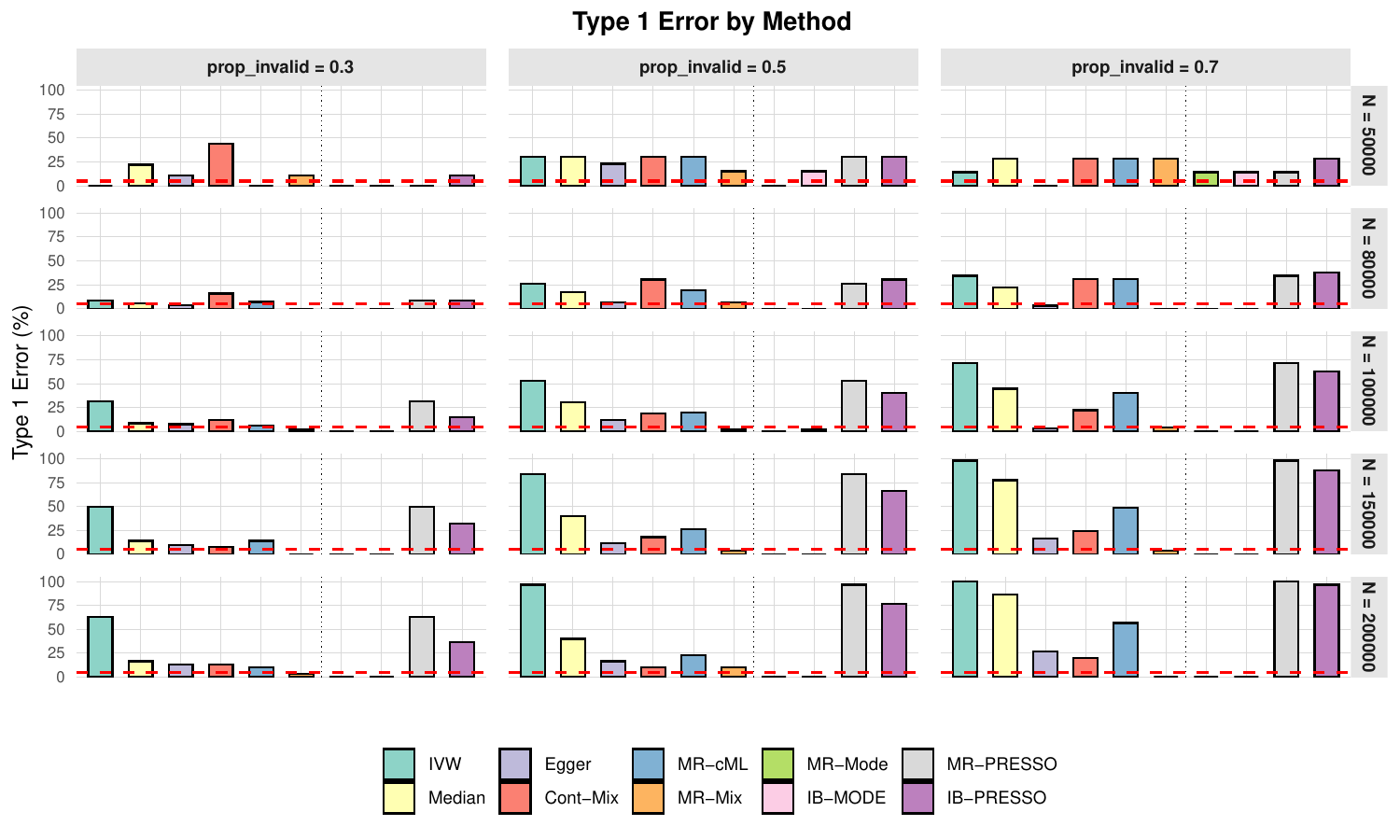}\\[0.5em]
    \includegraphics[width=0.7\linewidth]{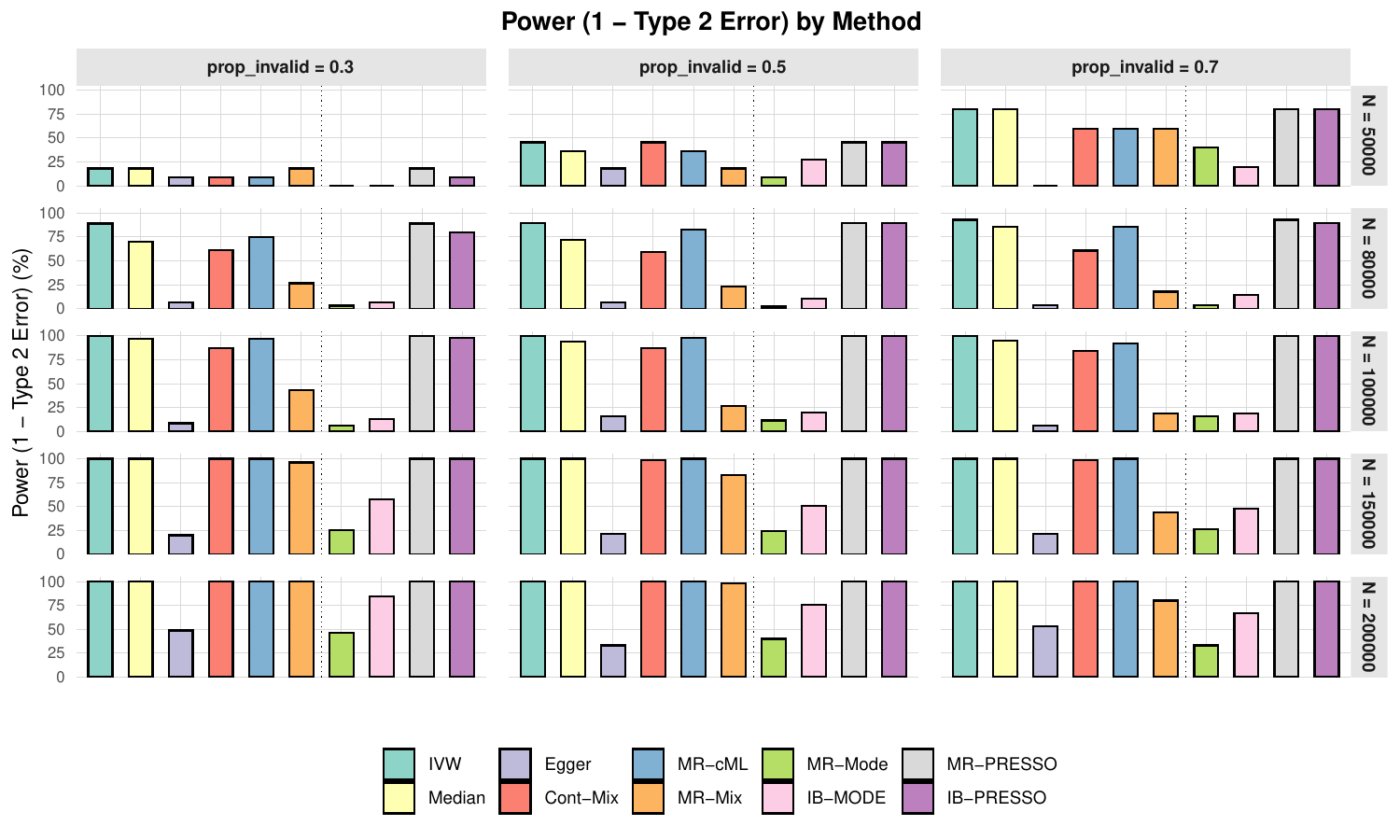}
    
    \caption{
        \textbf{Type-I error and power of different MR methods}. Simulations are conducted under models where invalid instruments arise due to hidden heritable confounders leading to violation of the InSIDE assumption. The overlap in underlying invalid instruments across two outcome traits due to shared confounders is assumed to be $D_{\mathrm{ov}}=0.75$. Additionally, invalid instruments are allowed to have directional pleiotropic effects. The exposure has a causal effect of $\theta_{X\to Y_2}=0.3$ on the secondary (auxiliary) outcome trait.
        \textbf{(Top)} Type-I error when the true causal effect for primary outcome trait is 0.
        \textbf{(Bottom)} Power when the true causal effect for the primary outcome trait is 0.1.
    }
    \label{fig:panel2x2_simulation}
\end{figure}

\section{Real data analysis}

\label{subsec:data_anal}

We first conducted MR analysis using the proposed and existing methods to investigate the causal relationship between cardio-metabolic diseases and various established risk factors. The set of hypotheses we studied included 48 exposure–outcome relationships which have been previously categorized by \cite{morrison2020mendelian} into four categories based on prior evidence: (i) causal (19 pairs), (ii) correlated but with unclear causality or conflicting evidence (17), (iii) implausible or unsupported (10), and (iv) non-causal (2); the full set of exposures and outcomes analyzed, including negative controls, is summarized in \cref{tab:traits_summary_compact}. Furthermore, we included vitamin D as an exposure of interest, given the ongoing scientific debate on its potential causal effect on various health outcomes.  

We obtained GWAS summary statistics for relevant outcome traits  from recently available data from the Million Veteran Program (MVP), a large-scale biobank of U.S. military veterans \citep{gaziano2016million}. The analyses were restricted to participants of genetically inferred European ancestry determined by the MVP ancestry pipeline. Specific outcomes included a panel of primarily (but not exclusively) cardio-metabolic traits, selected to include both clinical disease endpoints and quantitative biomarkers. Binary outcomes included coronary artery disease (CAD), type 2 diabetes (T2D), myocardial infarction (MI), hypertension (HTN), stroke, type 1 diabetes (T1D), and asthma. Continuous outcome traits included estimated glomerular filtration rate (eGFR), systolic and diastolic blood pressure (SBP, DBP), lipid levels, etc. A complete list of outcome traits, abbreviations, and other details is provided in \cref{Description}.  Exposures included anthropometric measures (BMI, percentage of body fat, height, birth weight), lipid traits (high-density lipoprotein cholesterol [HDL-C], low-density lipoprotein cholesterol [LDL-C], triglycerides [TG]), blood pressure traits (SBP, DBP), lifestyle and behavioral factors (alcoholic beverages per week, smoking status as regular smoker) and metabolic or biochemical markers (fasting glucose, vitamin D). For each exposure trait, we sought to obtain publicly available summary statistics from the largest meta-analysis  conducted in the European ancestry population but not including any MVP data - the source for our outcome trait GWAS. Data Sources and underlying sample size for all the outcome and exposure GWAS can be found in \cref{Supp-Tab}.

All summary GWAS statistics for exposures and outcomes were aligned to the human reference genome GRCh37 (hg19) and were harmonized for allele orientation, strand alignment, and variant annotation. Variants with a minor allele frequency (MAF $\leq 1\%$), an imputation quality score $\leq 0.8$, or inconsistent allele strand assignments across datasets were excluded. Instrumental variables were identified by LD-based clumping (genome-wide significance $p < 5 \times 10^{-8}$, 1 Mb window, LD threshold $r^2 < 0.1$), retaining the most significant SNP per locus. Harmonization of the data set was performed with the \texttt{TwoSampleMR} package. 

We first examine coheterogeneity statistics ($\hat{\rho}_{CH}^{(N)}$) for all pairs of cardio-metabolic outcome traits (e.g., T2D and CAD) anchored to individual exposures (e.g., BMI)(see \cref{fig:main_multi_panel}A). In all analyses, we adjust for underlying correlations in effect-size estimates for the outcome traits, which are all obtained from MVP (see Methods). We observe a wide-spread presence of statistically significant coheterogeneity across pairs of traits, indicating that the presence of shared confounders is a fairly common phenomenon. However, the pattern is exposure specific. For example, we find highly significant evidence of positive coheterogeneity ($\hat{\rho}_{CH}^{(N)}$) between CAD and T2D when BMI is the underlying exposure, suggesting that there are likely common confounders underlying the relationship between BMI $\rightarrow$ CAD and BMI $\rightarrow$ T2D. However, $\hat{\rho}_{CH}^{(N)}$ was not significant between these two outcome traits when LDL was the underlying exposure. In general, we find more wide-spread shared confounding across cardio-metabolic traits in relation to BMI than LDL. In subsequent analysis, for each outcome trait of interest, we choose the secondary trait that exhibits the highest value of the underlying statistics $\hat{\rho}_{CH}^{(N)}$, but we also perform sensitivity analysis with additional secondary traits.  

\begin{figure}[!htp]
    \centering
    \begin{minipage}[t]{0.40\textwidth}
        \vspace{0pt}
        \begin{overpic}[width=\linewidth]{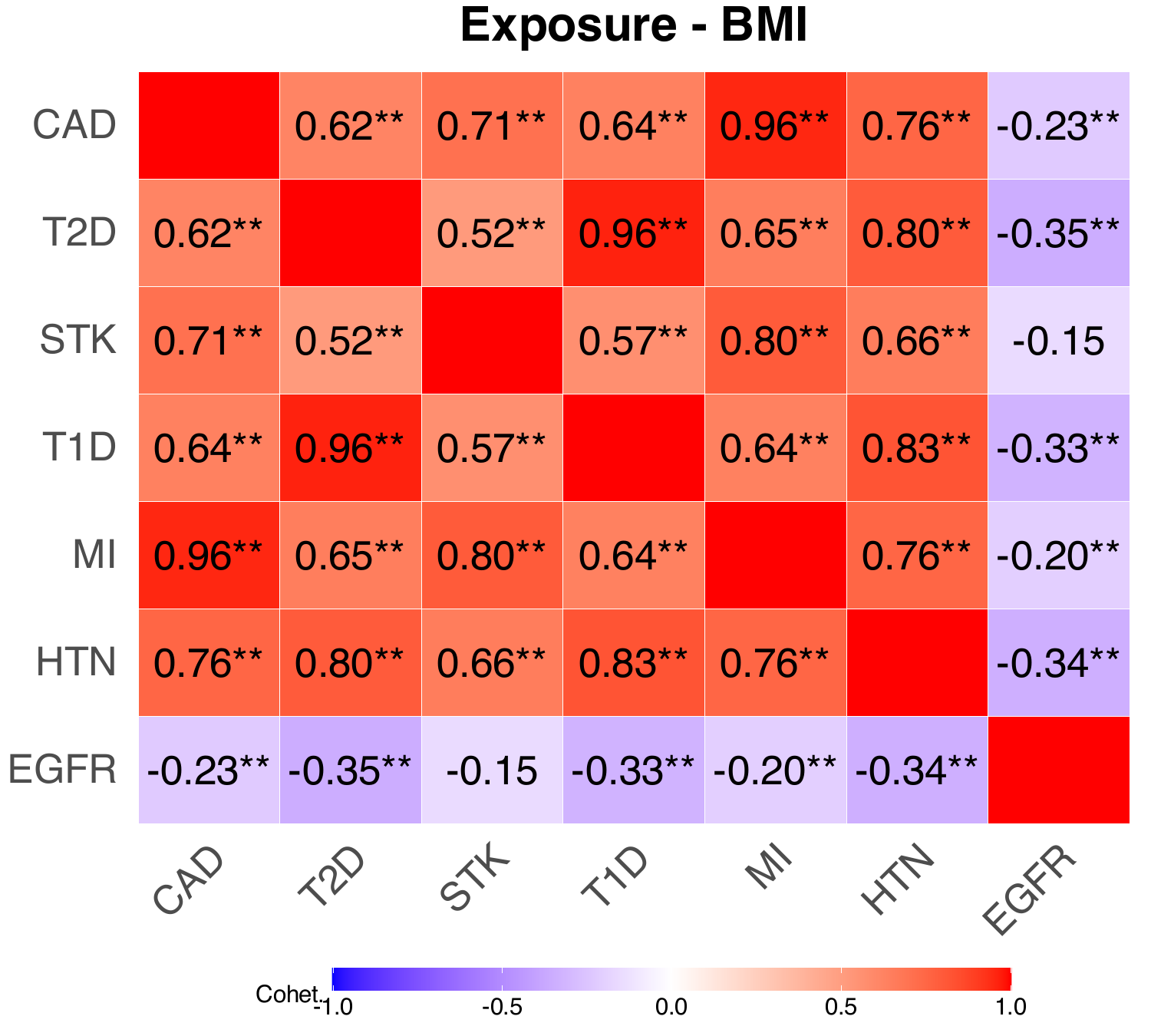}
            \put(2,95){\color{black}\large\textbf{A}}
        \end{overpic}
        \vspace{0.4em}
        \begin{overpic}[width=\linewidth]{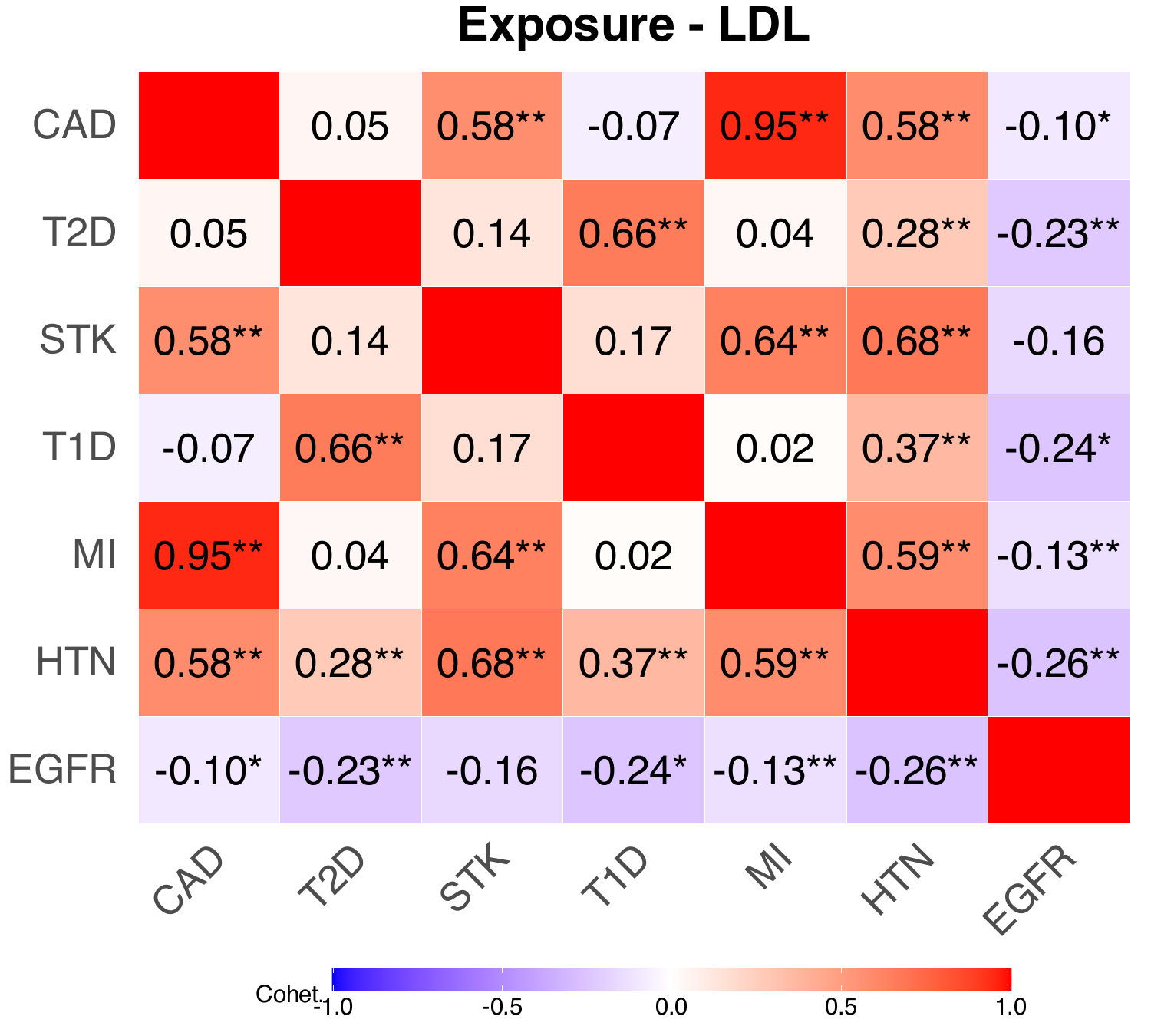}
        \end{overpic}
    \end{minipage}
    \hfill
    \begin{minipage}[t]{0.57\textwidth}
        \vspace{0pt}
        \begin{overpic}[width=\linewidth]{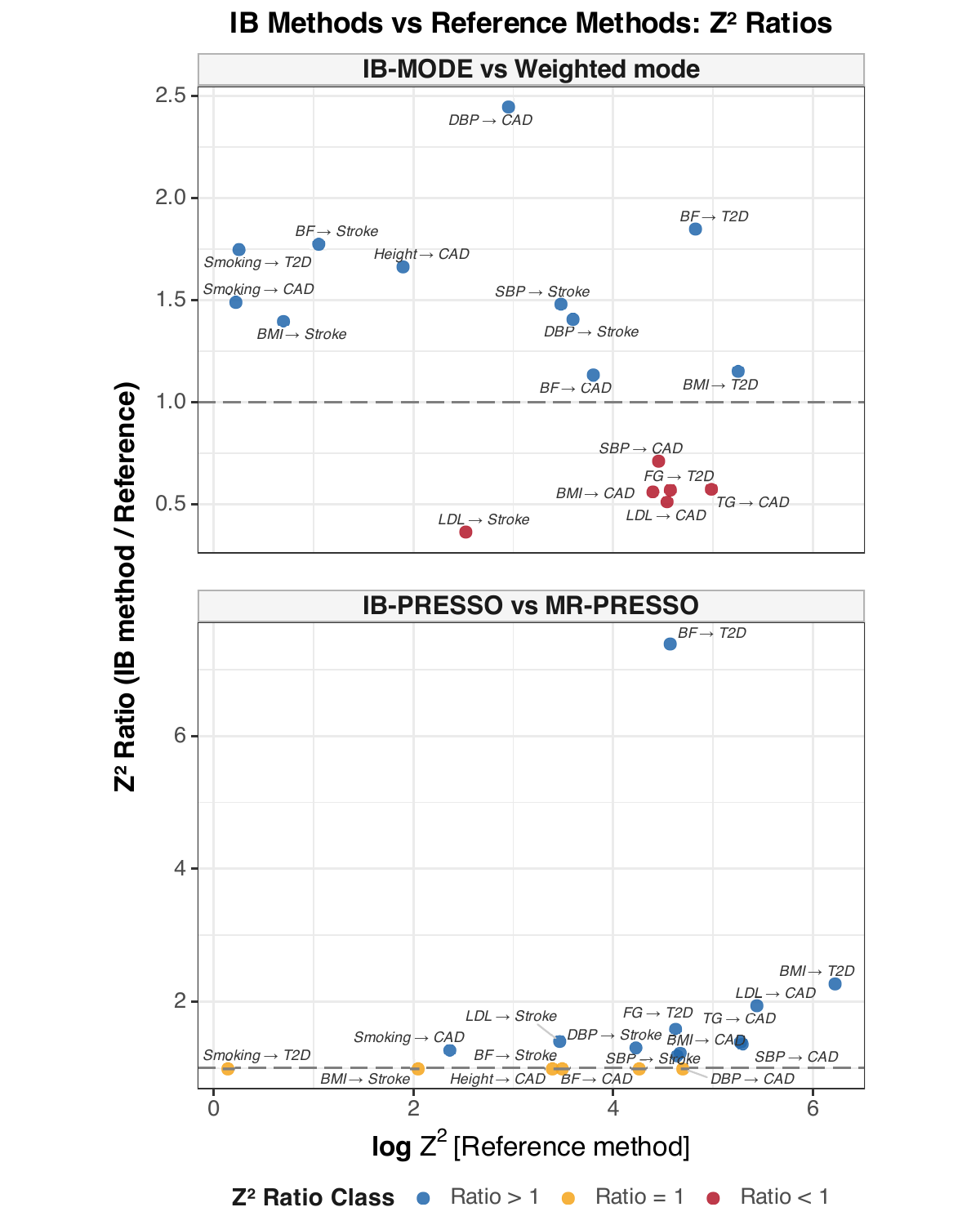}
            \put(2,103){\color{black}\large\textbf{C}}
        \end{overpic}
    \end{minipage}

    \vspace{1.5em}

    \begin{overpic}[width=\linewidth]{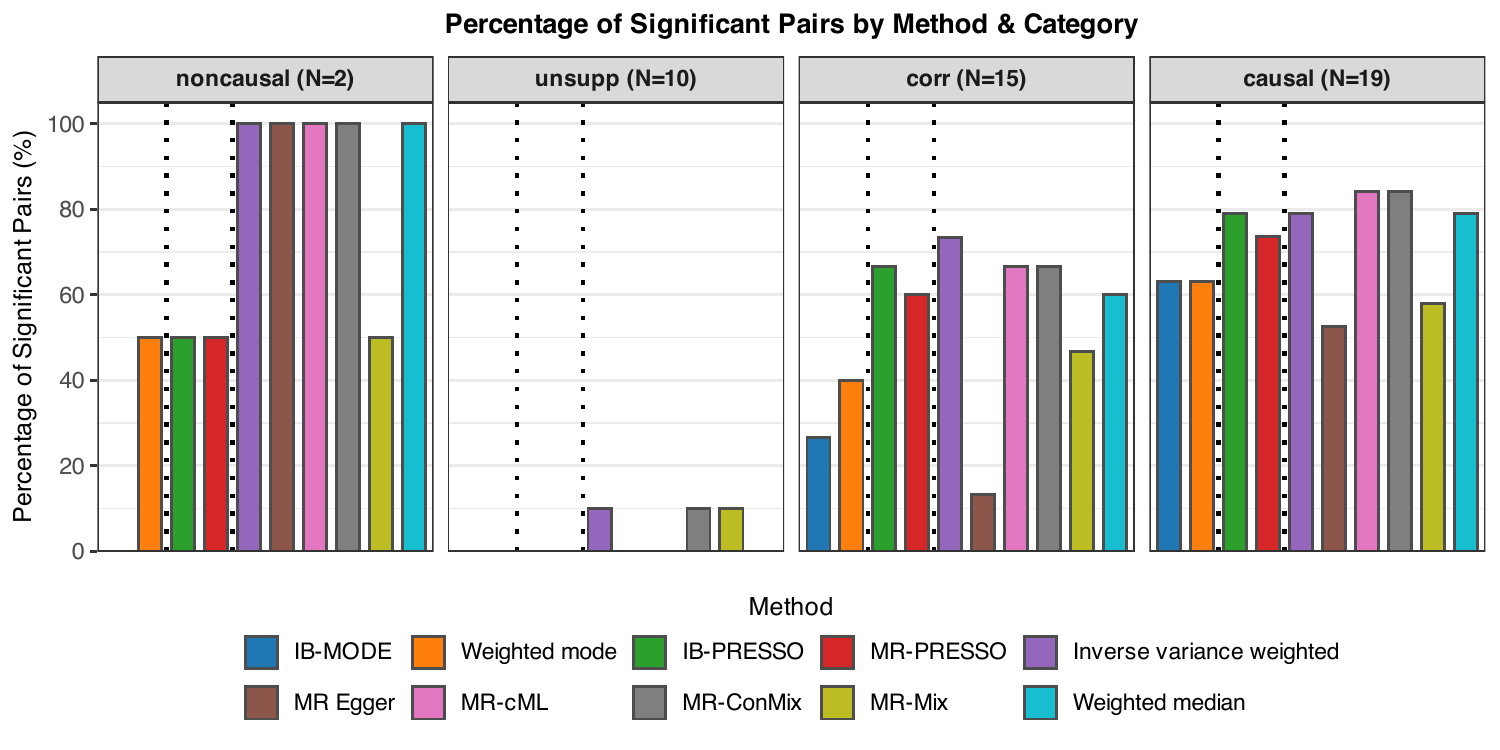}
        \put(2,52){\color{black}\large\textbf{B}}
    \end{overpic}

    \caption{
        \textbf{Data analysis results for MR analysis involving cardiometabolic traits and 48 exposure-outcome causal hypotheses}
        \textbf{(A)} \textit{Heatmaps representing values for coheterogeneity statistics for two representative exposures (BMI and LDL) across selected outcome traits. Statistically significant (p-value $<$ 0.001) are indicated by **.}
        \textbf{(B)} \textit{Proportion of positive findings (p-value $<$ 0.001) according to different MR methods within different categories of hypotheses. The selection of diseases and risk factors is adapted from \cite{morrison2020mendelian}.}
        \textbf{(C)} \textit{Efficiency gains or losses for IB methods compared to their non-IB counterparts are examined by squared ratio of Wald-statistics for hypotheses in the "causal" category.   Each data-point is labeled by its exposure–outcome pair and colored by ratio class (blue: IB stronger, red: reference stronger, orange: equal).}
    }
    \label{fig:main_multi_panel}
\end{figure}

We applied alternative MR methods, including IB-Mode and IB-PRESSO, to evaluate evidence of causal effects in 48 pairs of exposure-outcomes that had previously been categorized based on evidence of causality \cite{morrison2020mendelian}.  We observe that for two pairs ( HDL $\rightarrow$ CAD , HDL $\rightarrow$ Stroke) that are categorized as non-causal based on prior randomized trials, surprisingly, most existing MR methods identify them as causal (\cref{fig:main_multi_panel}B); a full breakdown of which methods support each exposure-outcome relationship, by category, is given in \cref{summ_method_morrison}. In particular, MR-Mix and MR-Mode each identify only one as causal (HDL $\rightarrow$ CAD and HDL $\rightarrow$ Stroke, respectively), while IB-Mode is the only method that does not identify either as causal. For 17 pairs of exposure-outcomes, which are categorized as ``Correlated but with unclear causality or conflicting evidence", many of the existing methods identified a high proportion of the relationships as causal, again raising suspicion about bias in these methods. Two methods, IB-Mode and Egger, identified a much lower fraction ($\leq$27\%) of these relationships as causal; notably, IB-Mode controlled this fraction better than its base method MR-Mode. For 10 pairs categorized as ``unsupported'', most methods found no causal evidence, except IVW (TG $\rightarrow$ Asthma), MR-Mix (Height $\rightarrow$ T2D) and MR-ConMix (Height $\rightarrow$ T2D), each detecting a significant effect. Overall, based on these ``negative control hypotheses", it appears that IB-Mode is the most robust method for controlling type-I error.

For the 19 pairs of exposure-outcome categorized as ``likely causal", several methods, including IVW, Weighted Median, MR-PRESSO, IB-PRESSO, MR-cML and MR-ConMix, identify nearly 80\% of the hypotheses as causal. In contrast, MR-Mode, IB-Mode, Egger, and MR-Mix identified 53-63$\%$ of the same hypotheses as causal. The number of significant findings identified at the specified significance threshold ($\alpha=0.001$) was similar between MR-Mode and IB-Mode, whereas IB-PRESSO detected one more significant finding than MR-PRESSO. Furthermore, analysis of the underlying chi-square statistics (\cref{fig:main_multi_panel}C) shows that IB-PRESSO improved efficiency over MR-PRESSO for most exposures, with relative gains up to roughly 210\% (BF), while IB-Mode showed substantial but exposure-dependent gains over MR-Mode, reaching up to roughly 90\% (DBP) for several exposures (see \cref{causalgain}). We also evaluated the robustness of the selection of auxiliary traits. When using the trait with the second highest statistic $\hat{\rho}_{CH}$ as the auxiliary result, the results remained qualitatively similar across all hypothesis categories (\cref{fig:sig_pair_IB2}).

Several methods, including IVW, MR-PRESSO, IB-PRESSO, MR-cML, MRMedian, and MR-ConMix, identified significant causal effects (p-value $<$ 0.05) of vitamin D level on multiple traits, including CAD, T2D, and eGFR (\cref{VITD_table}).  MR-Mix identified a significant effect of vitamin D only on T2D, whereas MR-Mode (Weighted mode) flagged nominally significant effects on several traits, most strongly HDL, further illustrating the susceptibility of standard methods to spurious detection. After multiple-testing correction, neither Egger nor IB-Mode detected evidence of a causal effect---IB-Mode showed only nominal associations with HDL and eGFR (smallest $p=0.017$) that did not survive correction---while Egger produced much wider confidence intervals, indicating the low power of the method. Given the lack of clear evidence of the causal effect of vitamin D on cardio-metabolic traits such as CAD and T2D in randomized trials \citep{pittas2019vitamin, manson2019vitamin}, this provides yet another example that many existing ``robust'' MR methods are still susceptible to bias in the presence of complex pleiotropic effects, and IB-Mode is a robust method to balance sensitivity (power) and specificity for detection of causal effects.   

\begin{table}[H]
\centering
\caption{\label{VITD_table}\textit{Estimates (top) and 95\% confidence intervals (bottom, in parentheses) for the causal effects of \textbf{vitamin D} on various outcomes from different MR methods. Effect sizes represent the log-odds ratio per standard deviation increase in vitamin D for binary outcomes, and the change in standard deviation units for continuous outcomes. Estimates significant after Bonferroni correction across the eight outcomes ($p<0.05/8$) are shown in \textbf{bold}.}}
\resizebox{\textwidth}{!}{
\fontsize{9}{11}\selectfont
\begin{tabular}[t]{lcccccccc}
\toprule
\textbf{Method} & \textbf{HTN} & \textbf{HDL} & \textbf{LDL} & \textbf{Asthma} & \textbf{CAD} & \textbf{eGFR} & \textbf{Stroke} & \textbf{T2D}\\
\midrule
IVW & \makecell{-0.03 \\ (-0.08, 0.02)} & \makecell{-0.05 \\ (-0.25, 0.15)} & \makecell{-0.08 \\ (-0.19, 0.02)} & \makecell{-0.00 \\ (-0.06, 0.06)} & \makecell{\textbf{-0.16} \\ (-0.24, -0.09)} & \makecell{\textbf{-0.08} \\ (-0.12, -0.03)} & \makecell{-0.02 \\ (-0.08, 0.05)} & \makecell{-0.07 \\ (-0.13, -0.01)}\\
Egger & \makecell{0.07 \\ (-0.02, 0.15)} & \makecell{0.06 \\ (-0.25, 0.37)} & \makecell{0.02 \\ (-0.14, 0.18)} & \makecell{-0.00 \\ (-0.10, 0.09)} & \makecell{0.06 \\ (-0.05, 0.18)} & \makecell{-0.04 \\ (-0.11, 0.04)} & \makecell{0.02 \\ (-0.08, 0.12)} & \makecell{0.04 \\ (-0.05, 0.13)}\\
MR-ConMix & \makecell{0.06 \\ (0.02, 0.10)} & \makecell{\textbf{0.11} \\ (0.08, 0.13)} & \makecell{0.02 \\ (0.01, 0.03)} & \makecell{0.02 \\ (-0.04, 0.08)} & \makecell{-0.01 \\ (-0.06, 0.05)} & \makecell{\textbf{-0.04} \\ (-0.05, -0.03)} & \makecell{0.03 \\ (-0.07, 0.13)} & \makecell{\textbf{-0.10} \\ (-0.15, -0.06)}\\
MR-Mix & \makecell{-0.05 \\ (-0.18, 0.08)} & \makecell{0.06 \\ (0.00, 0.12)} & \makecell{0.02 \\ (-0.03, 0.07)} & \makecell{-0.08 \\ (-0.23, 0.07)} & \makecell{-0.05 \\ (-0.12, 0.02)} & \makecell{-0.03 \\ (-0.05, -0.01)} & \makecell{-0.12 \\ (-0.29, 0.05)} & \makecell{\textbf{-0.16} \\ (-0.25, -0.07)}\\
MR-cML & \makecell{0.01 \\ (-0.03, 0.05)} & \makecell{\textbf{0.08} \\ (0.05, 0.12)} & \makecell{0.02 \\ (0.00, 0.04)} & \makecell{-0.02 \\ (-0.08, 0.04)} & \makecell{-0.08 \\ (-0.14, -0.02)} & \makecell{\textbf{-0.04} \\ (-0.05, -0.02)} & \makecell{-0.02 \\ (-0.08, 0.04)} & \makecell{-0.07 \\ (-0.12, -0.02)}\\
Wt. Median & \makecell{0.00 \\ (-0.06, 0.06)} & \makecell{\textbf{0.10} \\ (0.06, 0.13)} & \makecell{0.03 \\ (0.00, 0.06)} & \makecell{-0.05 \\ (-0.14, 0.03)} & \makecell{-0.01 \\ (-0.08, 0.07)} & \makecell{-0.03 \\ (-0.05, -0.00)} & \makecell{0.03 \\ (-0.07, 0.14)} & \makecell{\textbf{-0.10} \\ (-0.16, -0.04)}\\
Wt. Mode & \makecell{0.03 \\ (-0.02, 0.09)} & \makecell{\textbf{0.10} \\ (0.05, 0.15)} & \makecell{0.04 \\ (0.01, 0.08)} & \makecell{-0.04 \\ (-0.13, 0.05)} & \makecell{0.07 \\ (0.00, 0.14)} & \makecell{-0.03 \\ (-0.06, -0.01)} & \makecell{0.05 \\ (-0.05, 0.16)} & \makecell{-0.07 \\ (-0.12, -0.02)}\\
IB-Mode & \makecell{0.04 \\ (-0.02, 0.09)} & \makecell{0.17 \\ (0.01, 0.33)} & \makecell{0.04 \\ (-0.07, 0.15)} & \makecell{-0.04 \\ (-0.15, 0.06)} & \makecell{0.04 \\ (-0.03, 0.11)} & \makecell{-0.07 \\ (-0.13, -0.01)} & \makecell{0.06 \\ (-0.06, 0.18)} & \makecell{-0.03 \\ (-0.09, 0.03)}\\
MR-PRESSO & \makecell{-0.03 \\ (-0.08, 0.02)} & \makecell{-0.05 \\ (-0.25, 0.15)} & \makecell{-0.08 \\ (-0.19, 0.02)} & \makecell{-0.00 \\ (-0.06, 0.06)} & \makecell{\textbf{-0.16} \\ (-0.24, -0.09)} & \makecell{\textbf{-0.08} \\ (-0.12, -0.03)} & \makecell{-0.02 \\ (-0.08, 0.05)} & \makecell{-0.07 \\ (-0.13, -0.01)}\\
IB-PRESSO & \makecell{0.01 \\ (-0.04, 0.06)} & \makecell{\textbf{0.10} \\ (0.04, 0.15)} & \makecell{0.02 \\ (-0.01, 0.04)} & \makecell{-0.02 \\ (-0.07, 0.03)} & \makecell{\textbf{-0.10} \\ (-0.16, -0.04)} & \makecell{\textbf{-0.04} \\ (-0.07, -0.02)} & \makecell{-0.02 \\ (-0.08, 0.05)} & \makecell{\textbf{-0.07} \\ (-0.12, -0.02)}\\
\bottomrule
\end{tabular}}
\end{table}

%% file: 4maintextFigs.tex
\generateFigSubpanels{figureSuggestions}{figureSuggestions}{1}
    {Making good figures requires attention to detail.}
    {{
        {General principals of making and labelling a good figure. These should help facilitate understanding and reproducibility. },
        {Some more detailed technical specifics about making figures and their sizes. Inkscape is a great free vector figure editor and alternative to Adobe Illustrator.\captionTips}
    }}

%% file: 5discussion.tex
\section{Discussion}

In summary, we have introduced a new class of Mendelian Randomization (MR) methods, termed Instrument Borrowing (IB), which can take advantage of information from an auxiliary trait to strengthen causal inference for a primary outcome. The IB framework relies on the hypothesis that closely related outcome traits are likely to share common confounders in relation to various exposures. We introduce a novel correlation statistics to identify such overlap, and related asymptotic inference procedure that will help to select suitable auxiliary traits. Our data analysis using cardiometabolic traits reveals that shared confounding is a common phenomenon and has unique exposure-specific patterns. We propose extensions of two popular MR methods, MR-Mode and MR-PRESSO, that allow joint analysis of the primary and auxiliary outcome traits with respect to a single exposure. Extensive simulation studies and data analysis demonstrate that IB-based MR methods can lead to a reduction in bias, an improvement in power, or both.

We used recently available summary statistics from the MVP study to re-evaluate the causal underpinning of various exposure–outcome pairs in the realm of cardiometabolic traits. The analysis, which includes negative control hypotheses, reveals that almost all existing MR methods, including those designed specifically to handle bias in the presence of invalid instruments, are susceptible to major bias in the presence of complex pleiotropic effects of the underlying genetic instruments. These patterns were also corroborated in our extensive simulation studies involving violations of the InSIDE assumption. In general, both simulation studies and data analysis indicated that IB-Mode was the most robust method, able to control type-I error and bias across a wide set of scenarios while still producing reasonable power and precision.

The statistics $\hat\rho_{CH}^{(N)}$ we introduce, relating an exposure to a pair of outcome traits, are of interest beyond their use to select auxiliary traits for IB-based MR analysis. The measure can be used more broadly to identify shared unmeasured confounding structures between related traits in relation to specific exposures, and thus can have other applications, such as the development of causal directed acyclic graphs (DAGs). Furthermore, if certain exposures have heterogeneous causal effects \cite{kennedy2023towards, brand2013causal}, the coheterogeneity measure can also capture overlap among traits with respect to the same causal components of the exposure. For example, BMI may have heterogeneous causal effects due to the relative contributions of body fat versus muscle, or due to differences in fat deposition between internal organs. For such exposures, a high coheterogeneity measure between two outcome traits would indicate shared causal pathways and could lead to deeper insights into biological mechanisms.

The IB framework for MR has many potential extensions. In this article, we implement it in the context of two established methods, MR-Mode and MR-PRESSO, because of ease of implementation. However, a number of other recent methods, including MR-Mix, MR-ConMix, and MR-cML, can be adapted to the IB framework. For example, both MR-Mix and MR-ConMix, which assume normally distributed effect-size distributions for horizontal pleiotropy among invalid instruments, could potentially be extended by specifying a corresponding bivariate normal distribution for pleiotropic effects across a pair of outcome traits. In addition, the IB framework could potentially be extended to take advantage of more than two related outcome traits and thereby gain additional robustness. Given the abundance of shared confounding structure observed for cardiometabolic traits in our data application, higher-dimensional extensions of the IB framework merit further research.

The IB framework also has limitations. When studying two or more outcome traits simultaneously, the underlying MR model requires more complex estimation and modeling in higher dimensions. For example, compared with MR-Mode, IB-Mode requires estimation of a density function in two dimensions, and it is well known that nonparametric density estimation can quickly break down into higher dimensions. Even in two dimensions, we observe that IB-Mode can lead to loss of efficiency compared to MR-Mode when the sample size of the underlying GWAS is small. Thus, the increased dimensionality and complexity of the underlying estimation procedure for IB-based methods require careful balancing with the available sample sizes for the GWAS traits under consideration. We also note that the IB extension does not provide uniform advantages between different methods. In our application, IB-Mode improved on MR-Mode across both robustness and efficiency, whereas the benefit of IB-PRESSO over MR-PRESSO was concentrated in efficiency, with a more modest gain in type-I error control under the outlier rule used here.

\section*{Auxiliary Information}
\subsection*{Data availability}
All public GWAS summary statistics used in our analyses are freely accessible from well-established consortia. The BMI of the participants was obtained from the GIANT consortium (\url{https://portals.broadinstitute.org/collaboration/giant/index.php/GIANT_consortium_data_files#BMI_and_Height_GIANT_and_UK_BioBank_Meta-analysis_Summary_Statistics}); lipid trait data (HDL-C, LDL-C, TG) were obtained from the Global Lipids Genetics Consortium's 1.65M - participating multi-ancestry GWAS, available via \url{https://www.lipidgenetics.org/}; vitamin D summary statistics come from \url{https://cnsgenomics.com/data/revez_20/}; and variants of blood pressure GWAS (SBP, DBP) were taken from a recent UK Biobank meta-analysis of approximately one million European individuals. Additional exposures to GWAS included fasting glucose (\url{https://opengwas.io/datasets/ebi-a-GCST90002232}), height (\url{https://opengwas.io/datasets/ieu-a-89}), birth weight (\url{https://opengwas.io/datasets/ukb-a-198}), body fat percentage (\url{https://opengwas.io/datasets/ukb-b-8909}), being a smoker (\url{https://opengwas.io/datasets/ukb-b-20261}), and drinks per week (\url{https://opengwas.io/datasets/ieu-b-73}). All MR outcomes—CAD, T2D, MI, HTN, stroke, and eGFR, as well as the auxiliary outcomes—were drawn from MVP's European-ancestry cohort (see \cref{Description} for the definitions of traits and sample sizes). Data can be downloaded from \url{https://ftp.ncbi.nlm.nih.gov/dbgap/studies/phs002453/analyses/GIA/}.

\subsection*{Code availability}
The publicly available code for analysis are available in the following repositories: 

The custom software developed for this study is available as the R package \texttt{IBMR} at \url{https://github.com/achatto4/IBMR} and as a website at \url{https://achatto4.github.io/IBMR/index.html}. Other simulation and real data analysis code can be found in \url{https://github.com/achatto4/Robust_Mendelian_Randomization_via_Instrument_Borrowing}

\subsection*{Acknowledgements}

The research was funded by NIH grant R01HG013137. 
\subsection*{Author Contributions}
A.C. developed the methods, implemented the software code, and carried out all simulation studies and data analyses. N.C. conceptualized the study and was responsible for funding. A.C. and N.C. jointly drafted and revised the manuscript.

\subsection*{Competing interests}

The authors declare no competing interests.

%% file: Tables.tex
% Placeholder: the main-text tables are typeset inline in the Results section
% and in the Supplement. This file is intentionally kept non-empty so that arXiv
% does not strip a zero-byte file, which would break \input{Tables} at compile time.

%% file: supplement/0suppl_text.tex
\appendix
\section{Supplemental Notes}
\label{Supp-Not}

\subsection{Notation}
\label{sec:notation}

\begin{table}[H]
\centering
\caption{Summary of the notation used throughout the paper.}
\label{tab:notation}
\begin{tabularx}{\linewidth}{@{}lX@{}}
\toprule
Symbol & Description \\
\midrule
\multicolumn{2}{@{}l}{\emph{Traits and data}}\\
$X$ & Index exposure trait \\
$Y_1,\,Y_2$ & Primary and auxiliary (secondary) outcome traits \\
$Y_l,\ l\in\{1,2\}$ & Generic outcome-trait index \\
$G_1,\dots,G_{K_N}$ & Genetic instruments (SNPs) selected for $X$ \\
$k$ & Instrument (SNP) index \\
$\hat\beta_{X,k},\,\hat\beta_{Y_l,k}$ & Standardized GWAS effect estimates of SNP $k$ on $X$ and $Y_l$ \\
$\hat\theta_{l,k}$ & Wald ratio for $Y_l$ at SNP $k$, $\hat\beta_{Y_l,k}/\hat\beta_{X,k}$ \\
$\hat{\boldsymbol\theta}^{(k)}$ & Bivariate ratio vector, $(\hat\theta_{1,k},\hat\theta_{2,k})^\top$ \\
\addlinespace
\multicolumn{2}{@{}l}{\emph{Sample sizes (asymptotic regime)}}\\
$N_X=N$ & Effective sample size of the $X$ GWAS; indexes the asymptotic regime \\
$N_l$ & GWAS sample size for outcome $Y_l$ \\
$\kappa_l$ & Outcome-to-exposure sample-size ratio, $N_l/N$ \\
$K_N$ (abbrev.\ $K$) & Number of selected instruments; grows with $N$ \\
\addlinespace
\multicolumn{2}{@{}l}{\emph{Validity and pleiotropy}}\\
$\mathcal I_1,\,\mathcal I_2$ & Sets of invalid instruments for $Y_1,\,Y_2$ \\
$D_{\mathrm{ov}}$ & Overlap of invalid instruments, $|\mathcal I_1\cap\mathcal I_2|/|\mathcal I_1\cup\mathcal I_2|$ \\
$\theta_l$ & Causal effect of $X$ on $Y_l$ \\
$\alpha_{l,k}$ & Horizontal-pleiotropic effect of SNP $k$ on $Y_l$ (on the Wald-ratio scale) \\
$\epsilon_{l,k}$ & Sampling error of the Wald ratio $\hat\theta_{l,k}$, with $\mathbb E(\epsilon_{l,k})=0$ \\
$\sigma_{l,k}^2$ & Sampling variance of the Wald ratio, $\Var(\epsilon_{l,k})$ \\
\addlinespace
\multicolumn{2}{@{}l}{\emph{Coheterogeneity}}\\
$\bar\alpha_l$ & Weighted mean pleiotropic effect for $Y_l$, $\sum_k w_k\alpha_{l,k}$ \\
$\tilde\alpha_{l,k}$ & Centered pleiotropic effect, $\alpha_{l,k}-\bar\alpha_l$ \\
$\tau_l^2$ & Weighted heterogeneity of pleiotropy for $Y_l$, $\sum_k w_k\tilde\alpha_{l,k}^2$ \\
$C_{12}$ & Weighted cross-trait pleiotropy covariance, $\sum_k w_k\tilde\alpha_{1,k}\tilde\alpha_{2,k}$ \\
$\rho_{CH}^{(N)}$ & Coheterogeneity parameter for the selected instrument set (sample size $N$), $C_{12}/(\tau_1\tau_2)$ \\
$\hat\rho_{CH}^{(N)}$ & Estimator of $\rho_{CH}^{(N)}$ \\
$\rho_{CH}$ & Genome-wide limiting coheterogeneity parameter \\
$\tilde D_{l,k}$ & Coheterogeneity-adjusted pleiotropic deviation (fixed-weight variance), $\tilde\alpha_{l,k}-\rho_{CH}^{(N)}(\tau_l/\tau_{3-l})\tilde\alpha_{3-l,k}$ \\
\addlinespace
\multicolumn{2}{@{}l}{\emph{IB-Mode estimation}}\\
$\hat{\boldsymbol{\theta}}^{*}$ & Joint causal-effect estimate $(\hat\theta_1,\hat\theta_2)^\top$ from the bivariate KDE \\
$w_k$ & Instrument weight, $w_k\propto 1/(\sigma_{1,k}\sigma_{2,k})$ \\
$H=\phi H_0$ & KDE bandwidth matrix; $\phi$ is the tuning parameter \\
\addlinespace
\multicolumn{2}{@{}l}{\emph{Generative model (simulations)}}\\
$p$ & Fraction of SNPs associated with $X$ (instrument pool) \\
$\pi,\,1-\pi$ & Fraction of valid / invalid instruments \\
$q_0,q_1,q_2$ & Partition of invalid instruments by confounder ($U_0$ shared; $U_1,U_2$ outcome-specific); $D_{\mathrm{ov}}=q_0$ \\
\bottomrule
\end{tabularx}
\end{table}

\subsection{Estimation of the cross-trait sampling covariance \texorpdfstring{$\sigma_{12,k}$}{sigma 12k}}
\label{sec:cov_est}

The bias-corrected coheterogeneity estimator $\hat\rho_{CH}^{(N)}$ requires, for each
instrument $k$, the sampling variances $\sigma_{l,k}^2=\Var(\epsilon_{l,k})$ and the
cross-trait sampling covariance $\sigma_{12,k}=\Cov(\epsilon_{1,k},\epsilon_{2,k})$. The
variances are read off directly from the reported GWAS standard errors. The covariance
$\sigma_{12,k}$ is nonzero only when the two outcome GWAS are estimated on overlapping
samples. Let $N_1,N_2$ denote the outcome GWAS sample sizes and $N_s$ the number of shared
individuals. A delta-method expansion of the Wald ratios
$\hat\theta_{l,k}=\hat\beta_{Y_l,k}/\hat\beta_{X,k}$ gives
\begin{equation}
\sigma_{12,k}
= \frac{1}{\beta_{X,k}^2}\,\frac{N_s\,\rho_{12}}{N_1 N_2}
\;+\; \frac{\beta_{Y_1,k}\,\beta_{Y_2,k}}{N\,\beta_{X,k}^4}
\;+\; o(N^{-1}),
\label{eq:sigma12_est}
\end{equation}
where $\rho_{12}$ is the phenotypic correlation between $Y_1$ and $Y_2$ in the overlapping
samples and the second term is a negligible weak-instrument contribution. The constant
cross-trait term $N_s\rho_{12}/\sqrt{N_1 N_2}$ is exactly the intercept of bivariate
(cross-trait) LD-score regression \citep{bulik2015atlas}: regressing the products of the
$Y_1$ and $Y_2$ $z$-scores on the LD score identifies the genetic covariance through its
slope and the overlap-induced covariance through its intercept. Writing $\hat\iota$ for this
estimated intercept, we substitute $N_s\rho_{12}/(N_1 N_2)=\hat\iota/\sqrt{N_1 N_2}$,
together with the GWAS effect-size and standard-error estimates, into
\eqref{eq:sigma12_est} to obtain the plug-in $\hat\sigma_{12,k}$.

\subsection{\texorpdfstring{Proof of Theorem~\ref{thm:main}}{Proof of Theorem}}
\label{app:proof_main}

Let \(\mathcal M_N\) denote the finite set of candidate genome-wide SNPs
available at sample size \(N\). The selected instrument set is
\(\mathcal K_N\subseteq\mathcal M_N\), with \(K_N:=|\mathcal K_N|\). Throughout
this proof, we condition on the realized set \(\mathcal K_N\) and re-index its
elements as \(k=1,\ldots,K_N\). For each selected SNP \(k\), write
\(\mathbf b_k:=(\beta_{X,k},\beta_{Y_1,k},\beta_{Y_2,k})^\top\) for the vector of
true marginal effects of SNP \(k\) on the exposure \(X\) and the two outcomes
\(Y_1,Y_2\), and \(\hat{\mathbf b}_k:=(\hat\beta_{X,k},\hat\beta_{Y_1,k},\hat\beta_{Y_2,k})^\top\)
for the corresponding vector of GWAS effect-size estimators. We prove the result under the following
regularity conditions, imposed on the selected triangular array
\[
\{(\hat{\mathbf b}_k,\mathbf b_k):1\le k\le K_N\}.
\]

\begin{enumerate}[label=\textup{(C\arabic*)}, leftmargin=*]

\item
The number of selected instruments satisfies
\[
K_N\to\infty,
\qquad
K_N=o(N).
\]

\item
Conditionally on the selected instrument set \(\mathcal K_N\), the vectors
\(\hat{\mathbf b}_1,\ldots,\hat{\mathbf b}_{K_N}\) are independent across SNPs
and admit the decomposition
\[
\sqrt N(\hat{\mathbf b}_k-\mathbf b_k)=\mathbf Z_{k,N}+\boldsymbol\delta_{k,N},
\]
where the noise components \(\mathbf Z_{k,N}\) satisfy
\[
\mathbf Z_{k,N}\rightsquigarrow \mathcal{N}(0,\Omega_k),
\qquad
\sup_{N\ge1}\max_{1\le k\le K_N}\mathbb E\|\mathbf Z_{k,N}\|^4<\infty,
\]
and the post-selection bias terms \(\boldsymbol\delta_{k,N}\), non-random
conditional on \(\mathcal K_N\), satisfy
\[
\max_{1\le k\le K_N}\|\boldsymbol\delta_{k,N}\|=o(N^{-1/2}).
\]

\item
There exist constants \(0<c_x<C_x<\infty\) such that
\(c_x\le|\beta_{X,k}|\le C_x\) uniformly in \(N\) and \(k\).

\item
The causal effects \(\theta_l\) are fixed finite constants, and there exists
\(C_\alpha<\infty\) such that \(|\alpha_{l,k}|\le C_\alpha\) uniformly in
\(N,k,l\).

\item
There exists \(\tau_{\min}>0\) such that \(\tau_l\ge\tau_{\min}\) for
\(l=1,2\). Under \textup{(C4)}, this also gives
\(\tau_l\le\tau_{\max}<\infty\).

\item
For \(\sigma_N^2\) defined in \eqref{eq:exact_var},
\[
K_N\sigma_N^2\to\sigma^{*2}\in(0,\infty).
\]
\end{enumerate}

The auxiliary smoothness and weight-regularity properties used throughout the
proof are collected in the following lemma.

\begin{slemma}[Auxiliary regularity]
\label{lem:aux_reg}
Under Conditions \textup{(C1)}--\textup{(C5)}, and using the standard smooth
delta-method plug-in variance estimators defining \(\hat w_k\),
\begin{enumerate}[label=\textup{(\roman*)}]
\item \(\max_{1\le k\le K} w_k=O(K^{-1})\) and
\(\max_{1\le k\le K}\hat w_k=O_p(K^{-1})\);
\item \(\sum_{k=1}^K|\hat w_k-w_k|=o_p(1)\);
\item regarded as functions of the summary statistics \(\hat{\mathbf b}\), the maps
\(Q_N=(\hat C_{12},\hat\tau_1^2,\hat\tau_2^2)\) and
\(T_N=g\circ Q_N\), with \(g(C,u,v)=C/\sqrt{uv}\), are twice continuously
differentiable on a neighborhood of the oracle point.
\end{enumerate}
\end{slemma}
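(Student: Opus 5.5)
We prove the three parts of Lemma~\ref{lem:aux_reg} separately. Each follows from (C2)--(C5) by elementary bounds on the delta-method variance formulas.

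\emph{Part (i).} The delta-method variance of a Wald ratio is a smooth function of $\mathbf b_k$. In the case with no overlap it reads
\[
\sigma_{l,k}^2=\frac{1}{N_l\beta_{X,k}^2}+\frac{\beta_{Y_l,k}^2}{N\beta_{X,k}^4},
\]
and the overlap terms from \eqref{eq:sigma12_est} are of the same order. Write $\beta_{Y_l,k}=\beta_{X,k}(\theta_l+\alpha_{l,k})$. By (C3)--(C4) and $\kappa_l$ fixed, $N\sigma_{l,k}^2$ then lies in a compact interval $[m,M]\subset(0,\infty)$, uniformly in $k$ and $N$. Hence the unnormalized weights $(N^2\sigma_{1,k}^2\sigma_{2,k}^2)^{-1/2}$ lie in $[M^{-1},m^{-1}]$, which gives $w_k\le (M/m)K^{-1}$.

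For $\hat w_k$, (C2) with the uniform fourth moments gives
\[
\max_k\|\hat{\mathbf b}_k-\mathbf b_k\|=O_p\big((K/N^2)^{1/4}\big)
\]
by a union bound with Markov's inequality. This is $o_p(1)$ by (C1). So with probability tending to one every $\hat{\mathbf b}_k$ lies in a fixed compact neighbourhood where $|\hat\beta_{X,k}|\ge c_x/2$. On that neighbourhood the map $\mathbf b_k\mapsto N\hat\sigma_{l,k}^2$ is smooth and bounded away from $0$ and $\infty$. The same ratio argument then gives $\max_k\hat w_k=O_p(K^{-1})$.

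\emph{Part (ii).} Let $v_k=(N^2\sigma_{1,k}^2\sigma_{2,k}^2)^{-1/2}$ and let $\hat v_k$ be its plug-in. By the mean value theorem on the good event, $|\hat v_k-v_k|\le L\|\hat{\mathbf b}_k-\mathbf b_k\|$ with $L$ a uniform Lipschitz constant. Summing over $k$,
\[
K^{-1}\sum_k|\hat v_k-v_k|=O_p(N^{-1/2}).
\]
Now apply the identity
\[
\hat w_k-w_k=\frac{\hat v_k-v_k}{\sum_j\hat v_j}+v_k\Big(\frac1{\sum_j\hat v_j}-\frac1{\sum_j v_j}\Big).
\]
Both denominators are of order $K$ on the good event. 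The first term sums to $O_p(N^{-1/2})$. The second sums to at most $|\sum_j(\hat v_j-v_j)|/\sum_j\hat v_j=O_p(N^{-1/2})$. Hence $\sum_k|\hat w_k-w_k|=o_p(1)$.

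\emph{Part (iii).} On the good event each $\hat\theta_{l,k}$, $\hat\sigma^2_{l,k}$, $\hat\sigma_{12,k}$ and $\hat w_k$ is a $C^\infty$ function of $\hat{\mathbf b}$: they are rational in the coordinates, with denominators bounded away from zero. The quantities $\bar\theta_l$, $\Delta_{l,k}$ and $\hat C_{12}$ are then polynomial in these, so they are smooth. The only non-smooth ingredients are the positive part in \eqref{eq:tau_est} and the square root in $g$.

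At the oracle point ($\hat{\mathbf b}=\mathbf b$, noise terms set to their expectations), the unclipped argument of $[\cdot]_+$ equals $\tau_l^2+o(1)$. This is the main step. Expanding $\Delta_{l,k}^2$ gives
\[
\sum_k w_k\Delta_{l,k}^2=\tau_l^2+\sum_k w_k\sigma_{l,k}^2+(\text{cross terms}),
\]
where the cross terms are $O(N^{-1/2})$ and the debiasing cancels the $\sigma_{l,k}^2$ contribution. By (C5) the argument is therefore at least $\tau_{\min}^2/2>0$ on a neighbourhood of the oracle point. Continuity yields such a neighbourhood, uniformly in $N$ because the derivative bounds come from the compact-range bounds in part (i). On it the positive part is the identity map, and $g(C,u,v)=C/\sqrt{uv}$ is $C^\infty$ on $\{u,v>0\}$. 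So $Q_N$ and $T_N=g\circ Q_N$ are twice continuously differentiable there.

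The main obstacle is making the neighbourhood and the derivative bounds uniform in $N$ and $K$. I would handle this by expressing all constants through $c_x$, $C_x$, $C_\alpha$, $\tau_{\min}$ and the $\kappa_l$ only.
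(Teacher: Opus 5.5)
Your proof is correct and follows essentially the same route as the paper. The shared steps are $\sigma_{l,k}^2\asymp N^{-1}$ uniformly from (C3)--(C4), a fourth-moment union bound that places every $\hat{\mathbf b}_k$ in a region where the delta-method maps are smooth and bounded away from zero and infinity, and (C5) keeping the arguments of $[\cdot]_+$ and of $g$ away from their singular sets. There are two variations: in (ii) you use a Lipschitz/averaged-$L^1$ bound plus the normalization identity (which even gives the rate $O_p(N^{-1/2})$), whereas the paper uses uniform relative consistency $\hat w_k=w_k\{1+o_p(1)\}$ and $\sum_k|\hat w_k-w_k|\le\max_k|\hat w_k/w_k-1|$; and in (iii) you spell out the positive-part step that the paper only invokes later in the main proof (at $\hat{\mathbf b}=\mathbf b$ the unclipped argument is exactly $\tau_l^2-\sum_k w_k\sigma_{l,k}^2=\tau_l^2-O(N^{-1})$, with no cross terms, but your conclusion is unaffected).
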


\begin{proof}[Proof of Lemma~\ref{lem:aux_reg}]
By \textup{(C3)}--\textup{(C4)}, \(|\beta_{X,k}|\ge c_x>0\) and
\(r_{l,k}:=\beta_{Y_l,k}/\beta_{X,k}=\theta_l+\alpha_{l,k}=O(1)\) uniformly
in \(N,k,l\). Hence the delta-method variance formula gives
\(\sigma_{l,k}^2\asymp N^{-1}\) uniformly in \(k\). Since
\(w_k\propto(\sigma_{1,k}^2\sigma_{2,k}^2)^{-1/2}\), this implies
\(\max_k w_k=O(K^{-1})\).

Next, by \textup{(C2)} and \textup{(C1)},
\(\max_{k\le K}\|\hat{\mathbf b}_k-\mathbf b_k\|=o_p(1)\); indeed, the
fourth-moment bound gives
\(\sum_{k=1}^K\mathbb E\|\hat{\mathbf b}_k-\mathbf b_k\|^4=O(K/N^2)=o(1)\).
Because the delta-method variance estimators are smooth plug-in functions of
the GWAS beta estimates, with denominators bounded away from zero by
\textup{(C3)}, it follows that
\[
\max_{l=1,2}\max_{k\le K}
\left|
\frac{\hat\sigma_{l,k}^2}{\sigma_{l,k}^2}-1
\right|
=o_p(1).
\]
Writing \(a_k=(\sigma_{1,k}^2\sigma_{2,k}^2)^{-1/2}\) and
\(\hat a_k=(\hat\sigma_{1,k}^2\hat\sigma_{2,k}^2)^{-1/2}\), we therefore have
\(\hat a_k=a_k\{1+o_p(1)\}\) uniformly in \(k\). Hence
\[
\hat w_k
=
\frac{\hat a_k}{\sum_{j=1}^K\hat a_j}
=
\frac{a_k\{1+o_p(1)\}}
{\sum_{j=1}^K a_j\{1+o_p(1)\}}
=
w_k\{1+o_p(1)\}
\]
uniformly in \(k\). This gives \(\max_k\hat w_k=O_p(K^{-1})\), and
\[
\sum_{k=1}^K|\hat w_k-w_k|
\le
\max_{k\le K}
\left|
\frac{\hat w_k}{w_k}-1
\right|
\sum_{k=1}^K w_k
=o_p(1).
\]

Finally, \textup{(C3)} ensures smoothness of the Wald-ratio maps,
\textup{(C4)} gives uniform boundedness of the ratio-scale pleiotropic
components, and \textup{(C5)} gives \(\tau_l\ge\tau_{\min}>0\). Therefore
the weighted covariance/variance maps and the correlation map
\(g(C,u,v)=C/\sqrt{uv}\) are twice continuously differentiable on a
neighborhood of the oracle point.
\end{proof}

Let
\(\hat{\mathbf b}=(\hat{\mathbf b}_1^\top,\ldots,\hat{\mathbf b}_K^\top)^\top\)
and
\(\mathbf b=(\mathbf b_1^\top,\ldots,\mathbf b_K^\top)^\top\), and set
\(\mathbf Z=(\mathbf Z_{1,N}^\top,\ldots,\mathbf Z_{K,N}^\top)^\top\),
\(\boldsymbol\delta=(\boldsymbol\delta_{1,N}^\top,\ldots,\boldsymbol\delta_{K,N}^\top)^\top\),
so that \(\sqrt N(\hat{\mathbf b}-\mathbf b)=\mathbf Z+\boldsymbol\delta\).
Throughout, we work with the smooth version of the plug-in map
\(\hat{\mathbf b}\mapsto\hat\rho_{CH}^{(N)}\) guaranteed by
Lemma~\ref{lem:aux_reg}(iii), still denoted \(T_N\).

We first show consistency of the bias-corrected covariance and variance
estimators. By \textup{(C2)},
\[
\|\hat{\mathbf b}-\mathbf b\|^2
=
\sum_{k=1}^K\|\hat{\mathbf b}_k-\mathbf b_k\|^2
=
O_p(K/N)+o(K/N^2)
=
O_p(K/N)
=
o_p(1),
\]
where the \(o(K/N^2)\) term collects the deterministic bias contribution
\(\|\boldsymbol\delta\|^2/N\le K\max_k\|\boldsymbol\delta_{k,N}\|^2/N=o(K/N^2)\).
Recall the Wald-ratio decomposition
\(\hat\theta_{l,k}=\theta_l+\alpha_{l,k}+\epsilon_{l,k}\) with
\(\mathbb E(\epsilon_{l,k})=0\) and
\(\sigma_{l,k}^2=\Var(\epsilon_{l,k})\). Substituting this decomposition into
\(\hat\tau_l^2=\sum_k\hat w_k(\Delta_{l,k}^2-\hat\sigma_{l,k}^2)\) and
expanding around the oracle counterpart
\(\tau_l^2=\sum_k w_k\tilde\alpha_{l,k}^2\) gives
\[
\hat\tau_l^2-\tau_l^2
=
\sum_{k=1}^K(\hat w_k-w_k)\tilde\alpha_{l,k}^2
+
2\sum_{k=1}^K\hat w_k\tilde\alpha_{l,k}(\epsilon_{l,k}-\bar\epsilon_l)
+
\sum_{k=1}^K\hat w_k
\left[
(\epsilon_{l,k}-\bar\epsilon_l)^2-\sigma_{l,k}^2
\right]
+
o_p(1),
\]
where \(\bar\epsilon_l=\sum_k\hat w_k\epsilon_{l,k}\) and the bias
correction \(-\hat\sigma_{l,k}^2\) cancels the leading
\(\sigma_{l,k}^2\) term. The first term is \(o_p(1)\) by \textup{(C4)} and
Lemma~\ref{lem:aux_reg}(ii). The second has variance bounded by
\(
\max_k\hat w_k\sum_k\hat w_k\tilde\alpha_{l,k}^2\sigma_{l,k}^2
=O_p(K^{-1}N^{-1})=O_p((KN)^{-1}),
\)
using Lemma~\ref{lem:aux_reg}(i), \textup{(C4)}, and
\(\sigma_{l,k}^2=O(N^{-1})\) under \textup{(C2)}--\textup{(C3)}. The third
has variance \(O((KN^2)^{-1})\) by the fourth-moment bound on
\(\mathbf Z_{k,N}\) in \textup{(C2)}. The deterministic bias
\(\boldsymbol\delta\) contributes additional cross- and quadratic-terms of
order \(\max_k\|\boldsymbol\delta_{k,N}\|/\sqrt N=o(N^{-1})\) and
\(\max_k\|\boldsymbol\delta_{k,N}\|^2=o(N^{-1})\), respectively, both
absorbed into the \(o_p(1)\) remainder. Hence
\(\hat\tau_l^2\to_p\tau_l^2\), and the same argument gives
\(\hat C_{12}\to_p C_{12}\). 

Since \(\tau_l^2\ge\tau_{\min}^2>0\) and
\(\hat\tau_l^2\to_p\tau_l^2\), the positive-part operator in
\(\hat\tau_l^2\) is asymptotically inactive. Since \(\tau_1\) and \(\tau_2\) are bounded
away from zero by \textup{(C5)}, the continuous mapping theorem gives
\[
\hat\rho_{CH}^{(N)}-\rho_{CH}^{(N)}\xrightarrow{p}0.
\]

We next establish asymptotic normality. Let
\(\tilde\rho^{(N)}:=T_N(\mathbf b)\). A second-order Taylor expansion yields
\[
\hat\rho_{CH}^{(N)}-\tilde\rho^{(N)}
=
\nabla T_N(\mathbf b)^\top(\hat{\mathbf b}-\mathbf b)
+
\frac12
(\hat{\mathbf b}-\mathbf b)^\top
\nabla^2T_N(\mathbf b^*)
(\hat{\mathbf b}-\mathbf b),
\]
where \(\mathbf b^*\) lies between \(\hat{\mathbf b}\) and \(\mathbf b\).
Each component of \(Q_N\) is a weighted sum in which the \(k\)-th SNP
coordinate enters multiplied by a single weight \(w_k\); by
Lemma~\ref{lem:aux_reg}(i) and \textup{(C3)}--\textup{(C5)}, this gives
\[
\left\|
\frac{\partial Q_{N,i}}{\partial \mathbf b_j}
\right\|=O(K^{-1}),
\qquad
\left\|
\frac{\partial^2 Q_{N,i}}{\partial \mathbf b_j^2}
\right\|=O(K^{-1}),
\qquad
\left\|
\frac{\partial^2 Q_{N,i}}{\partial \mathbf b_j\partial \mathbf b_m}
\right\|=O(K^{-2}),\quad j\neq m.
\]
Composition with \(g\), whose derivatives are bounded on a neighborhood of
\(Q_N(\mathbf b)\) by Lemma~\ref{lem:aux_reg}(iii), gives
\(\|\nabla T_N(\mathbf b)\|=O(K^{-1/2})\) and
\(\|\nabla^2T_N(\mathbf b)\|_{\mathrm{op}}=O(K^{-1})\). The Taylor remainder
is therefore \(O_p(K^{-1})\cdot O_p(K/N)=O_p(N^{-1})\), which is
\(o_p((NK)^{-1/2})\) by \textup{(C1)}.

The leading term decomposes into noise and bias contributions. Substituting
\(\sqrt N(\hat{\mathbf b}-\mathbf b)=\mathbf Z+\boldsymbol\delta\):
\[
\sqrt{NK}\,\nabla T_N(\mathbf b)^\top(\hat{\mathbf b}-\mathbf b)
=
\sum_{k=1}^K\xi_k + B_N,
\qquad
\xi_k=\sqrt K\,\nabla_kT_N(\mathbf b)^\top\mathbf Z_{k,N},
\qquad
B_N=\sqrt K\,\nabla T_N(\mathbf b)^\top\boldsymbol\delta.
\]
The bias term satisfies
\[
|B_N|\le\sqrt K\cdot K\cdot\max_k\|\nabla_kT_N(\mathbf b)\|\cdot\max_k\|\boldsymbol\delta_{k,N}\|
=\sqrt K\cdot O(1)\cdot o(N^{-1/2})
=o\!\bigl(\sqrt{K/N}\bigr)=o(1)
\]
by \textup{(C1)} and the bias rate in \textup{(C2)}. Conditional on
\(\mathcal K_N\), the \(\xi_k\)'s are independent and mean zero by
\textup{(C2)}. By \textup{(C6)},
\(\sum_{k=1}^K\Var(\xi_k)=K\sigma_N^2\to\sigma^{*2}\). The fourth-moment
bound on \(\mathbf Z_{k,N}\) in \textup{(C2)}, together with
\(\|\nabla_kT_N(\mathbf b)\|=O(K^{-1})\), gives the Lyapunov condition. Hence
\(\sum_k\xi_k\rightsquigarrow \mathcal{N}(0,\sigma^{*2})\). The deterministic
difference \(T_N(\mathbf b)-\rho_{CH}^{(N)}\) is \(O(N^{-1})\), arising only
from finite-\(N\) bias-correction terms in \(Q_N(\mathbf b)\). Therefore
\(\sqrt{NK}\{T_N(\mathbf b)-\rho_{CH}^{(N)}\}=O(\sqrt{K/N})=o(1)\) by
\textup{(C1)}, and Slutsky's theorem completes the argument.

The exact variance \eqref{eq:exact_var} is the delta-method sandwich
\(\sum_k\nabla_k T_N(\mathbf b)^\top \Omega_k\, \nabla_k T_N(\mathbf b)\), with
\(\nabla_k T_N\) the gradient of \(T_N=C_{12}/(\tau_1\tau_2)\) through both the
Wald ratios and the data-dependent weights. It is instructive to isolate the two
channels. Chain-ruling only the Wald-ratio channel,
\(d\hat\theta_{l,k} = \beta_{X,k}^{-1}(d\hat\beta_{Y_l,k} - r_{l,k}\,d\hat\beta_{X,k})\)
with \(r_{l,k}=\beta_{Y_l,k}/\beta_{X,k}\) and holding the weights fixed, yields by
direct algebra the known-weights variance
\begin{equation}
\sigma_{N,\mathrm{fix}}^2
=\frac{1}{\tau_1^2\tau_2^2}\sum_{k=1}^K\Big(\frac{w_k}{\beta_{X,k}}\Big)^2
\Big[\big(r_{1,k}\tilde D_{2,k}+r_{2,k}\tilde D_{1,k}\big)^2
+\frac{\tilde D_{2,k}^2}{\kappa_1}+\frac{\tilde D_{1,k}^2}{\kappa_2}
+\frac{2\gamma\,\tilde D_{1,k}\tilde D_{2,k}}{\sqrt{\kappa_1\kappa_2}}\Big],
\label{eq:oracle_var_closed_form}
\end{equation}
where \(\tilde D_{1,k}=\tilde\alpha_{1,k}-\rho_{CH}^{(N)}(\tau_1/\tau_2)\tilde\alpha_{2,k}\),
\(\tilde D_{2,k}=\tilde\alpha_{2,k}-\rho_{CH}^{(N)}(\tau_2/\tau_1)\tilde\alpha_{1,k}\),
\(\kappa_l=N_l/N\) are the outcome-to-exposure sample-size ratios, and \(\gamma\) is the
cross-trait overlap correlation. Retaining the dependence of \(\hat w_k\) on \(\hat{\mathbf b}\)
contributes a weight-estimation channel of the \emph{same} order \((NK)^{-1/2}\);
consequently \eqref{eq:oracle_var_closed_form} omits a first-order term and is not,
in general, the asymptotic variance; only \eqref{eq:exact_var} is. The plug-in
estimator is obtained by replacing oracle quantities with their empirical
counterparts; uniform smoothness from Lemma~\ref{lem:aux_reg}(iii) and the
mean-value theorem give \(K|\hat\sigma_N^2 - \sigma_N^2| = o_p(1)\), and
together with \(K\sigma_N^2 \to \sigma^{*2}\) this proves
\(K\hat\sigma_N^2 \to_p \sigma^{*2}\). This completes the proof of Theorem~\ref{thm:main}.

\begin{remark}[Bias correction]
The exact leading bias correction for the centered quadratic term contains
\(\hat w_k(1-\hat w_k)\), whereas the estimator uses \(\hat w_k\). The
difference satisfies
\[
\sum_{k=1}^K\hat w_k^2\hat\sigma_{l,k}^2
\le
\max_k\hat w_k
\sum_{k=1}^K\hat w_k\hat\sigma_{l,k}^2
=
O_p((KN)^{-1}),
\]
which is negligible under Conditions \textup{(C1)}--\textup{(C4)}.
\end{remark}

\subsubsection{Behavior under weak instruments and robustness}

Theorem~\ref{thm:main} relies on the strong-instrument condition \textup{(C3)}.
Here we complement it with results describing the estimator's behavior when that
condition is relaxed, and clarifying the role of centering and weighting. These
are not required for Theorem~\ref{thm:main} itself but characterize the
robustness of \(\hat\rho_{CH}^{(N)}\) in regimes of practical interest.

\begin{slemma}[Consistency and asymptotic normality under weak instruments]
\label{lem:weak_instrument_consistency}
Suppose conditions \textup{(C1)}, \textup{(C2)}, \textup{(C4)}, and
\textup{(C5)} hold, and that \textup{(C3)} is replaced by
\begin{equation}
\label{eq:weak_inst_cond}
|\beta_{X,k}|\ge c_N^*
\quad\text{uniformly in }k,\qquad
\frac{\sqrt N\,c_N^*}{K^{1/4}}\to\infty,
\qquad
\max_{1\le k\le K} w_k\to0 .
\end{equation}
Then the following hold.
\begin{enumerate}
\item[\textup{(i)}] \emph{(Consistency)}
\(\hat\rho_{CH}^{(N)}-\rho_{CH}^{(N)}\xrightarrow{p}0.\)
\item[\textup{(ii)}] \emph{(Asymptotic normality)}
Define \(V_N:=\sum_{k=1}^K w_k^2/\beta_{X,k}^2\),
\(G_N:=\sum_{k=1}^K w_k/\beta_{X,k}^2\), and
\(D_N:=\sum_{k=1}^K w_k/|\beta_{X,k}|\). If, in addition,
\(\sigma_N^2/V_N\to\sigma_*^2\in(0,\infty)\),
\[
\frac{\max_{k\le K}w_k^2/\beta_{X,k}^2}{V_N}\to0,
\qquad
\frac{G_N}{\sqrt{NV_N}}\to0,
\qquad
\frac{D_N\max_{k\le K}\|\boldsymbol\delta_{k,N}\|}{\sqrt{V_N}}\to0,
\]
then
\[
\sqrt{\frac{N}{V_N}}
\left(
\hat\rho_{CH}^{(N)}-\rho_{CH}^{(N)}
\right)
\xrightarrow{d}
\mathcal{N}(0,\sigma_*^2).
\]
In particular, the weak-instrument estimation error is of order \((V_N/N)^{1/2}\);
when \(w_k\asymp K^{-1}\) and \(|\beta_{X,k}|\asymp c_N^*\), this becomes
\(\{c_N^*\sqrt{NK}\}^{-1}\).
\end{enumerate}
\end{slemma}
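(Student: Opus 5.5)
The plan is to repeat the argument for Theorem~\ref{thm:main} with every bound kept explicit in \(|\beta_{X,k}|\), instead of collapsing it through \(c_x\). The first step is to redo the building blocks of Lemma~\ref{lem:aux_reg}. Under the weak-instrument condition the delta-method variances are \(\sigma_{l,k}^2\asymp (N\beta_{X,k}^2)^{-1}\), because \(r_{l,k}=\theta_l+\alpha_{l,k}\) stays bounded by (C4). The relative error of the Wald ratio is \(|\hat\beta_{X,k}-\beta_{X,k}|/|\beta_{X,k}|=O_p\big((\sqrt N c_N^*)^{-1}\big)\), and over all \(K\) SNPs its maximum is controlled by the fourth moments in (C2). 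Since \(\sum_k \mathbb E\|\mathbf Z_{k,N}\|^4/(Nc_N^{*2})^2=O\big(K/(Nc_N^{*2})^2\big)\to0\), which is exactly \(\sqrt N c_N^*/K^{1/4}\to\infty\), we get \(\max_k|\hat\sigma_{l,k}^2/\sigma_{l,k}^2-1|=o_p(1)\). This yields \(\hat w_k=w_k\{1+o_p(1)\}\) uniformly in \(k\), hence \(\sum_k|\hat w_k-w_k|=o_p(1)\), and together with \(\max_k w_k\to0\) also \(\max_k\hat w_k=o_p(1)\). I expect this uniform control of the Wald-ratio linearization to be the main place where the \(K^{1/4}\) rate enters.

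For consistency (i), I would reuse the decomposition of \(\hat\tau_l^2-\tau_l^2\) and of \(\hat C_{12}-C_{12}\) from the proof of Theorem~\ref{thm:main}, with the following term-by-term bounds.
\begin{itemize}
\item The weight-error term \(\sum_k(\hat w_k-w_k)\tilde\alpha_{l,k}^2\) is \(o_p(1)\) by (C4).
\item The linear term \(2\sum_k\hat w_k\tilde\alpha_{l,k}(\epsilon_{l,k}-\bar\epsilon_l)\) has variance \(O\big(\sum_k w_k^2\sigma_{l,k}^2\big)=O(V_N/N)\). Since \(V_N\le \max_k w_k/c_N^{*2}\) and \(Nc_N^{*2}\to\infty\), this is \(o(1)\).
\item The quadratic term \(\sum_k\hat w_k[(\epsilon_{l,k}-\bar\epsilon_l)^2-\sigma_{l,k}^2]\) has variance \(O\big(\sum_k w_k^2/(N\beta_{X,k}^2)^2\big)\le O\big(V_N/(Nc_N^{*2})\big)\to0\).
\item The nonlinear remainder of the ratio expansion, of size \(\epsilon_{l,k}\cdot(\hat\beta_{X,k}-\beta_{X,k})/\beta_{X,k}\), is handled with the uniform relative-error bound above.
\item The post-selection bias \(\boldsymbol\delta\) enters through \(\sum_k w_k\|\boldsymbol\delta_{k,N}\|/(\sqrt N|\beta_{X,k}|)\le D_N\max_k\|\boldsymbol\delta_{k,N}\|/\sqrt N=o(1)\).
\end{itemize}
With (C5) the positive part \([\cdot]_+\) becomes inactive, and the continuous mapping theorem applied to \(g(C,u,v)=C/\sqrt{uv}\) gives (i).

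For asymptotic normality (ii), I would linearize \(T_N\) as in the proof of Theorem~\ref{thm:main}, now with gradient blocks \(\|\nabla_kT_N(\mathbf b)\|=O(w_k/|\beta_{X,k}|)\), up to the weight channel which has the same order. The leading term is then \(\sum_k\nabla_kT_N^\top\mathbf Z_{k,N}/\sqrt N\), a sum of independent mean-zero terms with total variance \(\sigma_N^2/N\asymp V_N/N\). After scaling by \(\sqrt{N/V_N}\):
\begin{itemize}
\item The Lyapunov ratio is bounded by \(\max_k(w_k^2/\beta_{X,k}^2)/V_N\to0\), using the fourth moments in (C2).
\item The bias contribution is \(\sqrt{N/V_N}\cdot O\big(D_N\max_k\|\boldsymbol\delta_{k,N}\|/\sqrt N\big)\to0\), which is the assumed condition on \(\boldsymbol\delta\).
\item The second-order term is dominated by the diagonal Hessian blocks \(O(w_k/\beta_{X,k}^2)\) times \(\|\hat{\mathbf b}_k-\mathbf b_k\|^2=O_p(N^{-1})\). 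Its mean is \(O(G_N/N)\), so after scaling it is \(O\big(G_N/\sqrt{NV_N}\big)\to0\). Its fluctuation is smaller, and the off-diagonal blocks are of order \(w_jw_m\).
\item The deterministic gap \(T_N(\mathbf b)-\rho_{CH}^{(N)}\) from the finite-\(N\) bias corrections is again \(O(G_N/N)\) and is negligible for the same reason.
\end{itemize}
Slutsky's theorem with \(\sigma_N^2/V_N\to\sigma_*^2\) then concludes. The rate statement follows by substituting \(w_k\asymp K^{-1}\) and \(|\beta_{X,k}|\asymp c_N^*\), which gives \(V_N\asymp (Kc_N^{*2})^{-1}\).

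The hardest step is the Taylor remainder in (ii). Here the third-order terms in \((\hat\beta_{X,k}-\beta_{X,k})/\beta_{X,k}\) are not uniformly small unless the \(K^{1/4}\) condition is used. The first thing I would try is to expand \(1/\hat\beta_{X,k}\) to second order with an explicit remainder, and to bound the remainder on the event \(\max_k|\hat\beta_{X,k}/\beta_{X,k}-1|\le1/2\), whose probability tends to one by the maximal bound established in the first step.
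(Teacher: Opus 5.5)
Your part (i) is the paper's argument. It uses the same fourth-moment union bound giving \(\max_k\|\hat{\mathbf b}_k-\mathbf b_k\|/c_N^*=o_p(1)\), which is where \(\sqrt N c_N^*/K^{1/4}\to\infty\) enters. It has the same uniform control \(\hat w_k=w_k\{1+o_p(1)\}\) and the same weight/linear/quadratic decomposition of \(\hat\tau_l^2\) and \(\hat C_{12}\). You only make explicit the ratio-linearization remainder and the \(\boldsymbol\delta\)-bias (via \(D_N\max_k\|\boldsymbol\delta_{k,N}\|/\sqrt N\)), which the paper folds into \(o_p(1)\).

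Part (ii) is where you genuinely differ. The paper keeps the component expansion and declares the weight-estimation and quadratic pieces \(o_p\{(V_N/N)^{1/2}\}\) ``by the same bounds as in part (i)''. It applies Lyapunov only to \(\sqrt{N/V_N}\sum_k w_k\tilde\alpha_{l,k}\epsilon_{l,k}\) and finishes with the delta method through \(g\). You instead linearize the whole map \(T_N\), with gradient blocks \(O(w_k/|\beta_{X,k}|)\) through both the Wald-ratio and the weight channels. You charge the second-order Taylor term (whose mean carries the \(O\{(N\beta_{X,k}^2)^{-1}\}\) Wald-ratio bias) and the gap \(T_N(\mathbf b)-\rho_{CH}^{(N)}\) to \(G_N/\sqrt{NV_N}\to0\).

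Your route is the one that matches the statement, for three reasons:
\begin{itemize}
\item \(\sigma_*^2=\lim\sigma_N^2/V_N\) is a full-gradient quantity, and the paper itself remarks in the proof of Theorem~\ref{thm:main} that the weight channel is first order, so it cannot in general be discarded.
\item The part (i) bound \(\sum_k|\hat w_k-w_k|=o_p(1)\) is nowhere near \(o_p\{(V_N/N)^{1/2}\}\).
\item The \(G_N\) condition, which the paper assumes but never visibly uses, is exactly what your second-order step consumes.
\end{itemize}
The paper's version is shorter but leaves these points implicit.

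Two small repairs are needed.
\begin{itemize}
\item The quadratic term's variance is \(\sum_k w_k^2/(N^2\beta_{X,k}^4)\le\max_k w_k/(N^2c_N^{*4})\to0\). The bound \(V_N/(Nc_N^{*2})\) you wrote loses a factor \(N\) and need not vanish. For \(w_k=K^{-1}\), \(|\beta_{X,k}|=c_N^*\) it equals \((KNc_N^{*4})^{-1}\), while only \(N^2c_N^{*4}/K\to\infty\) is assumed.
\item The off-diagonal Hessian blocks are \(O\{w_jw_m/(|\beta_{X,j}||\beta_{X,m}|)\}\), not \(O(w_jw_m)\). A crude bound then gives a contribution \(O_p(D_N^2/N)\). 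Since \(D_N^2\le G_N\) by Cauchy--Schwarz (using \(\sum_k w_k=1\)), after scaling it is again killed by \(G_N/\sqrt{NV_N}\to0\).
\end{itemize}
With these fixes, the third-order difficulty you anticipate is already handled by the Lagrange form of the second-order remainder. Work on the event \(\max_k|\hat\beta_{X,k}/\beta_{X,k}-1|\le1/2\), whose probability tends to one by your first step. On it, the per-SNP Hessian bounds continue to hold at the intermediate point up to constant factors.
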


\begin{proof}
The proof follows the same strategy as the consistency part of
Theorem~\ref{thm:main}. By
\textup{(C2)}, the fourth-moment bound and a union bound give, for every
\(\varepsilon>0\),
\[
P\left(
\max_{k\le K}\|\hat{\mathbf b}_k-\mathbf b_k\|>\varepsilon c_N^*
\right)
\le
C\frac{K}{N^2(c_N^*)^4}+o(1)
=
C\left(\frac{K^{1/4}}{\sqrt N\,c_N^*}\right)^4+o(1)
\to0 .
\]
Thus
\[
\max_{k\le K}
\frac{\|\hat{\mathbf b}_k-\mathbf b_k\|}{c_N^*}=o_p(1).
\]
Since \(|\beta_{X,k}|\ge c_N^*\), the Wald-ratio and delta-method variance
maps remain uniformly smooth over the selected instruments. Hence
\[
\max_{l=1,2}\max_{k\le K}
\left|
\frac{\hat\sigma_{l,k}^2}{\sigma_{l,k}^2}-1
\right|
=o_p(1).
\]
Writing \(a_k=(\sigma_{1,k}^2\sigma_{2,k}^2)^{-1/2}\) and
\(\hat a_k=(\hat\sigma_{1,k}^2\hat\sigma_{2,k}^2)^{-1/2}\), it follows that
\(\hat a_k=a_k\{1+o_p(1)\}\) uniformly in \(k\), and therefore
\[
\sum_{k=1}^K|\hat w_k-w_k|=o_p(1),
\qquad
\max_{k\le K}\hat w_k=o_p(1),
\]
where the second relation uses \(\max_k w_k\to0\).

Now expand \(\hat\tau_l^2\) around its oracle counterpart:
\[
\hat\tau_l^2-\tau_l^2
=
\sum_{k=1}^K(\hat w_k-w_k)\tilde\alpha_{l,k}^2
+
2\sum_{k=1}^K\hat w_k\tilde\alpha_{l,k}\epsilon_{l,k}
+
\sum_{k=1}^K\hat w_k
\left[
(\epsilon_{l,k}-\bar\epsilon_l)^2
-\mathbb E\{(\epsilon_{l,k}-\bar\epsilon_l)^2\}
\right]
+o_p(1).
\]
The first term is \(o_p(1)\) by \textup{(C4)} and
\(\sum_k|\hat w_k-w_k|=o_p(1)\). For the linear stochastic term,
\[
\Var\left(2\sum_k\hat w_k\tilde\alpha_{l,k}\epsilon_{l,k}\right)
\le
C\sum_k\hat w_k^2\sigma_{l,k}^2
\le
\frac{C\max_k\hat w_k}{N(c_N^*)^2}
=o_p(1),
\]
using \(\sigma_{l,k}^2=O\{1/(N(c_N^*)^2)\}\) uniformly and
\(\sum_k\hat w_k^2\le\max_k\hat w_k\). For the quadratic stochastic term,
the fourth-moment bound in \textup{(C2)} gives
\[
\Var\left(
\sum_k\hat w_k
\left[
(\epsilon_{l,k}-\bar\epsilon_l)^2
-\mathbb E\{(\epsilon_{l,k}-\bar\epsilon_l)^2\}
\right]
\right)
\le
C\sum_k\hat w_k^2\frac{1}{N^2(c_N^*)^4}
\le
\frac{C\max_k\hat w_k}{N^2(c_N^*)^4}
=o_p(1).
\]
Hence \(\hat\tau_l^2\to_p\tau_l^2\). The same argument, applied to the
corresponding expansion of \(\hat C_{12}\), gives
\(\hat C_{12}\to_p C_{12}\). Since \(\tau_l\ge\tau_{\min}>0\) by
\textup{(C5)}, the continuous mapping theorem yields
\[
\hat\rho_{CH}^{(N)}
=
\frac{\hat C_{12}}{\hat\tau_1\hat\tau_2}
\to_p
\frac{C_{12}}{\tau_1\tau_2}
=
\rho_{CH}^{(N)},
\]
which proves part~(i).

\smallskip\noindent
For part~(ii), the expansions of \(\hat\tau_l^2\) and \(\hat C_{12}\) above
decompose the estimation error into weight-estimation, bias, and linear and
quadratic stochastic terms. Under the additional conditions, the
weight-estimation and quadratic terms are \(o_p\{(V_N/N)^{1/2}\}\) by the same
bounds as in part~(i), while the leading bias term is negligible at this rate by
\textup{(C2)} together with
\(D_N\max_{k\le K}\|\boldsymbol\delta_{k,N}\|/\sqrt{V_N}\to0\). The normalized
linear term \(\sqrt{N/V_N}\sum_{k}w_k\tilde\alpha_{l,k}\epsilon_{l,k}\) is a sum
of independent, mean-zero contributions \(\chi_{k,N}\) whose total variance
converges to \(\sigma_*^2\). Its Lyapunov ratio satisfies
\[
\sum_{k=1}^K\mathbb E|\chi_{k,N}|^4
=O\!\Big(V_N^{-2}\sum_{k=1}^K w_k^4/\beta_{X,k}^4\Big)
=O\!\Big(\frac{\max_{k\le K} w_k^2/\beta_{X,k}^2}{V_N}\Big),
\]
using the elementary bound
\(\sum_k w_k^4/\beta_{X,k}^4\le(\max_k w_k^2/\beta_{X,k}^2)\sum_k w_k^2/\beta_{X,k}^2
=(\max_k w_k^2/\beta_{X,k}^2)\,V_N\).
This tends to zero by the uniform fourth-moment bound in \textup{(C2)} and the
negligibility condition \(\max_{k\le K}(w_k^2/\beta_{X,k}^2)/V_N\to0\) above, so
the Lyapunov---and hence Lindeberg--Feller---central limit theorem applies. Applying the
delta method to \((\hat C_{12},\hat\tau_1,\hat\tau_2)\mapsto
\hat C_{12}/(\hat\tau_1\hat\tau_2)\), as in the asymptotic-normality part of
Theorem~\ref{thm:main} with \(V_N\) replacing the corresponding variance
functional, then gives
\[
\sqrt{\frac{N}{V_N}}
\left(\hat\rho_{CH}^{(N)}-\rho_{CH}^{(N)}\right)
\xrightarrow{d}
\mathcal{N}(0,\sigma_*^2),
\]
which proves part~(ii).
\end{proof}

\begin{remark}[Robustness via centering and weighting]
\label{rmk:robustness_centering_weighting}
The estimator has two stabilizing features. First, \(T_N\) depends on the
Wald ratios only through the centered quantities
\(\Delta_{l,k}=\hat\theta_{l,k}-\sum_j\tilde w_j\hat\theta_{l,j}\). Hence
adding any SNP-common shift to \(\hat\theta_{l,k}\) leaves \(T_N\) unchanged.
Consequently, the component of post-selection or winner's-curse bias that is
approximately common across selected SNPs is removed by centering; only its
SNP-specific residual contributes to the leading bias term
\(B_N=\sqrt K\,\nabla T_N(\mathbf b)^\top\boldsymbol\delta\), which is
controlled by \textup{(C2)}.

Second, precision weighting reduces weak-instrument amplification. Since
\(\sigma_{l,k}^2\asymp\{N\beta_{X,k}^2\}^{-1}\), SNPs with smaller
\(|\beta_{X,k}|\) have larger ratio-scale variance and receive smaller
precision weight. In the weak-instrument regime
\(\inf_{k\le K}|\beta_{X,k}|\asymp c_N^*\), the weighted stochastic terms are
of order \(O\{1/(NK(c_N^*)^2)\}\) for the linear term and
\(O\{1/(KN^2(c_N^*)^4)\}\) for the quadratic term, as in
Lemma~\ref{lem:weak_instrument_consistency}. Thus centering attenuates
SNP-common selection bias, while weighting attenuates heteroskedastic
weak-instrument noise.
\end{remark}

\subsection{\texorpdfstring{Genome-wide Finite-Population Limit}{Genome-wide Finite-Population Limit}}
\label{app:proof_genome_wide_limit}

Theorem~\ref{thm:main} gives a CLT for \(\hat\rho_{CH}^{(N)}\) targeting the
selected-set oracle \(\rho_{CH}^{(N)}\), conditional on the selected instrument
set \(\mathcal K_N\). To relate this target to a genome-wide finite-population
parameter, write
\begin{equation}
\label{eq:error_decomposition}
\hat\rho_{CH}^{(N)}-\rho_{CH}
=
\{\hat\rho_{CH}^{(N)}-\rho_{CH}^{(N)}\}
+
\{\rho_{CH}^{(N)}-\rho_{CH}\}.
\end{equation}
The first term is controlled by Theorem~\ref{thm:main}; below we study the
second term, treating \(\mathcal K_N\) as random.

Let \(\mathcal M_{X,N}:=\{m\in\mathcal M_N:\beta_{x,m}\neq0\}\) and
\(M_{X,N}:=|\mathcal M_{X,N}|\). For \(m\in\mathcal M_{X,N}\), treat
\((\beta_{x,m},\alpha_{1,m},\alpha_{2,m})\) as fixed. Let \(S_{m,N}\) denote
selection, \(J_{m,N}:=1-S_{m,N}\), and \(q_{m,N}:=\Pr(J_{m,N}=1)\). Let
\(h_m>0\) be the unnormalized population-scale precision weight, and set
\(H_N:=\sum_{m\in\mathcal M_{X,N}}h_m\) and
\(D_N:=\sum_{m\in\mathcal M_{X,N}}S_{m,N}h_m\).

For \(\mathcal F:=\{\alpha_1,\alpha_2,\alpha_1^2,\alpha_2^2,\alpha_1\alpha_2\}\),
define, for \(f\in\mathcal F\),
\[
T_{N,f}:=\frac{\sum_m h_mf_m}{H_N},
\qquad
T_{N,f}^{\mathcal K}:=\frac{\sum_m S_{m,N}h_mf_m}{D_N},
\]
where all sums are over \(m\in\mathcal M_{X,N}\). Write
\(\mathbf T_N:=(T_{N,f})_{f\in\mathcal F}\) and
\(\mathbf T_N^{\mathcal K}:=(T_{N,f}^{\mathcal K})_{f\in\mathcal F}\), and define
\[
\rho_{CH}:=\rho(\mathbf T_N),\qquad
\rho_{CH}^{(N)}:=\rho(\mathbf T_N^{\mathcal K}),\qquad
\rho(t):=
\frac{t_{\alpha_1\alpha_2}-t_{\alpha_1}t_{\alpha_2}}
{\{(t_{\alpha_1^2}-t_{\alpha_1}^2)(t_{\alpha_2^2}-t_{\alpha_2}^2)\}^{1/2}}.
\]
Under the weight factorization used in Theorem~\ref{thm:main},
\(\rho(\mathbf T_N^{\mathcal K})\) coincides with the selected-set oracle
target denoted \(\rho_{CH}^{(N)}\) there.

For \(f\in\mathcal F\), set
\(a_{m,N}^{(f)}:=h_m(f_m-T_{N,f})\) and
\(a_{m,N}:=(a_{m,N}^{(f)})_{f\in\mathcal F}\). Then
\begin{equation}
\label{eq:zero-sum}
\sum_m a_{m,N}^{(f)}=0,\qquad f\in\mathcal F.
\end{equation}
Define \(B_N^{(f)}:=\sum_m q_{m,N}a_{m,N}^{(f)}\),
\(S_N^{(f)}:=\sum_m (J_{m,N}-q_{m,N})a_{m,N}^{(f)}\),
\(\psi_N:=\nabla\rho(\mathbf T_N)\),
\(\ell_{m,N}:=\psi_N^\top a_{m,N}\), and
\(\bar q_N^{(h)}:=H_N^{-1}\sum_m h_mq_{m,N}\). Finally, set
\[
\Sigma_N:=\sum_m q_{m,N}(1-q_{m,N})a_{m,N}a_{m,N}^\top,
\qquad
V_N:=\psi_N^\top\Sigma_N\psi_N
=\sum_m q_{m,N}(1-q_{m,N})\ell_{m,N}^2.
\]

We assume the following finite-population selection conditions.

\begin{enumerate}[label=\textup{(C*\arabic*)}, leftmargin=*]

\item \emph{Independent selection.}
Conditional on the finite population
\(\{(\beta_{x,m},\alpha_{1,m},\alpha_{2,m}):m\in\mathcal M_{X,N}\}\),
the variables \(\{J_{m,N}:m\in\mathcal M_{X,N}\}\) are mutually independent,
with \(J_{m,N}\sim\mathrm{Bernoulli}(q_{m,N})\).

\item \emph{Population-level regularity.}
There exist constants \(0<c<C<\infty\) such that, for all sufficiently large
\(N\), \(|T_{N,f}|\le C\) and
\(H_N^{-1}\sum_m h_m(f_m-T_{N,f})^2\le C\) for every \(f\in\mathcal F\), while
\(T_{N,\alpha_l^2}-T_{N,\alpha_l}^2\ge c\) for \(l=1,2\). Thus \(\rho\) is
twice continuously differentiable in a neighborhood of \(\mathbf T_N\), with
uniformly bounded first and second derivatives.

\item \emph{Rare-miss regime.}
\(\bar q_N^{(h)}\to0\) and \(M_{X,N}\bar q_N^{(h)}\to\infty\).

\item \emph{Bias, scale, and CLT regularity.}
For each \(f\in\mathcal F\), \(B_N^{(f)}=o(\sqrt{V_N})\). Moreover,
\[
\nu_N^2:=\frac{M_{X,N}V_N}{H_N^2\bar q_N^{(h)}}\to\nu^2\in(0,\infty),
\qquad
\frac{\max_m\|a_{m,N}\|^2}{V_N}\to0,
\qquad
\operatorname{tr}(\Sigma_N)=O(V_N).
\]

\end{enumerate}

\begin{theorem}[Population approximation for the selected-set target]
\label{thm:genome_wide_limit}
Under Conditions \textup{(C*1)}--\textup{(C*4)},
\[
\rho_{CH}^{(N)}-\rho_{CH}\to_p0,
\qquad
\sqrt{\frac{M_{X,N}}{\bar q_N^{(h)}}}
\bigl(\rho_{CH}^{(N)}-\rho_{CH}\bigr)
\rightsquigarrow \mathcal{N}(0,\nu^2).
\]
\end{theorem}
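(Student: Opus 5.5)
The proof is a delta method for $\rho$ applied to a ratio estimator under Poisson (independent Bernoulli) sampling, in which the selection indicators $J_{m,N}$ play the role of the random "misses". First linearize each selected-set moment around its population counterpart. Since $S_{m,N}=1-J_{m,N}$, we have $D_N=H_N-\sum_m J_{m,N}h_m$. By the zero-sum identity \eqref{eq:zero-sum},
\[
T_{N,f}^{\mathcal K}-T_{N,f}
=\frac{\sum_m S_{m,N}a_{m,N}^{(f)}}{D_N}
=-\frac{\sum_m J_{m,N}a_{m,N}^{(f)}}{D_N}
=-\frac{B_N^{(f)}+S_N^{(f)}}{D_N}.
\]
So every moment error is an exact ratio: a bias part $B_N^{(f)}$ plus a centered independent sum $S_N^{(f)}$, divided by the random denominator $D_N$. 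This identity is the backbone of the argument.

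\textbf{Step 1: the denominator.} Show that $D_N/H_N\to_p1$. We have $\mathbb E(H_N-D_N)/H_N=\bar q_N^{(h)}\to0$ by (C*3). The variance $\sum_m q_{m,N}(1-q_{m,N})h_m^2/H_N^2$ is at most $\max_m(h_m/H_N)\,\bar q_N^{(h)}$. I would need $\max_m h_m/H_N\to0$, and I expect to extract this from (C*2) together with $M_{X,N}\to\infty$. Failing that, Markov's inequality on the nonnegative quantity $H_N-D_N$ already gives $(H_N-D_N)/H_N=O_p(\bar q_N^{(h)})=o_p(1)$, which suffices.

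\textbf{Step 2: size of the numerator.} Show that $S_N^{(f)}=O_p(\sqrt{V_N})$ componentwise. Its covariance matrix is $\Sigma_N$, so $\mathbb E\|S_N\|^2=\operatorname{tr}(\Sigma_N)=O(V_N)$ by (C*4). Combined with $B_N^{(f)}=o(\sqrt{V_N})$, this gives $\|\mathbf T_N^{\mathcal K}-\mathbf T_N\|=O_p(\sqrt{V_N}/H_N)$. By the definition of $\nu_N^2$, $\sqrt{V_N}/H_N\asymp\sqrt{\bar q_N^{(h)}/M_{X,N}}$, which tends to $0$ by (C*3). Hence $\mathbf T_N^{\mathcal K}$ eventually lies in the neighborhood of $\mathbf T_N$ on which (C*2) makes $\rho$ $C^2$ with bounded derivatives, and consistency follows by the mean value theorem.

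\textbf{Step 3: delta method.} A second-order Taylor expansion gives
\[
\rho_{CH}^{(N)}-\rho_{CH}=\psi_N^\top(\mathbf T_N^{\mathcal K}-\mathbf T_N)+O_p(V_N/H_N^2).
\]
The linear term equals
\[
-\frac{H_N}{D_N}\cdot\frac{\psi_N^\top B_N+\sum_m (J_{m,N}-q_{m,N})\ell_{m,N}}{H_N}.
\]
Multiplying by $\sqrt{M_{X,N}/\bar q_N^{(h)}}=\nu_N H_N/\sqrt{V_N}$, the terms behave as follows:
\begin{itemize}
\item the bias term becomes $o(1)$, since $\psi_N$ is bounded;
\item the remainder becomes $O_p(\sqrt{V_N}/H_N)=o_p(1)$;
\item the factor $H_N/D_N\to_p1$ by Step 1;
\item the leading term is $\nu_N V_N^{-1/2}\sum_m (J_{m,N}-q_{m,N})\ell_{m,N}$.
\end{itemize}
The summands in the leading term are independent by (C*1) and mean zero, and the total variance of the normalized sum is $1$ by the definition of $V_N$.

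\textbf{Step 4: Lindeberg.} I would verify Lindeberg's condition for $V_N^{-1/2}\sum_m (J_{m,N}-q_{m,N})\ell_{m,N}$. The summands are bounded by $|\ell_{m,N}|/\sqrt{V_N}\le\|\psi_N\|\max_m\|a_{m,N}\|/\sqrt{V_N}\to0$ by (C*4). Bounded summands with vanishing maximum and unit total variance satisfy Lindeberg trivially. Slutsky's theorem with $\nu_N\to\nu$ then gives the limit $\mathcal N(0,\nu^2)$.

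\textbf{Main obstacle.} The delicate point is the scale bookkeeping: the remainder and the $D_N$ fluctuation must be negligible at the $\sqrt{V_N}/H_N$ rate. In particular, I must confirm that $(H_N/D_N-1)$ times the $O_p(\sqrt{V_N}/H_N)$ numerator is $o_p(\sqrt{V_N}/H_N)$. This reduces to Step 1, so I would establish $\max_m h_m/H_N\to0$, or the Markov bound, carefully first.
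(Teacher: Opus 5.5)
Your proposal is correct and follows essentially the same route as the paper's proof: the exact ratio identity $T_{N,f}^{\mathcal K}-T_{N,f}=-(S_N^{(f)}+B_N^{(f)})/D_N$, then $D_N/H_N\to_p1$, then the $O_p(\sqrt{V_N}/H_N)$ bound via $\operatorname{tr}(\Sigma_N)=O(V_N)$, then a second-order Taylor expansion, and finally a Lindeberg CLT for $\sum_m(J_{m,N}-q_{m,N})\ell_{m,N}/\sqrt{V_N}$ with Slutsky and $\nu_N\to\nu$. The concern in Step 1 is unnecessary: since $0<h_m\le H_N$, your own bound $\max_m(h_m/H_N)\,\bar q_N^{(h)}$ is already at most $\bar q_N^{(h)}\to0$, which is exactly the bound the paper uses, and your Markov fallback is also valid.
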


\begin{proof}
Throughout, sums are over \(m\in\mathcal M_{X,N}\). By
\eqref{eq:zero-sum} and \(S_{m,N}=1-J_{m,N}\), for each \(f\in\mathcal F\),
\begin{equation}
\label{eq:T-decomp}
T_{N,f}^{\mathcal K}-T_{N,f}
=
\frac{\sum_m S_{m,N}a_{m,N}^{(f)}}{D_N}
=
-\frac{\sum_m J_{m,N}a_{m,N}^{(f)}}{D_N}
=
-\frac{S_N^{(f)}+B_N^{(f)}}{D_N}.
\end{equation}

First, \(D_N/H_N\to_p1\). Indeed,
\(\mathbb E(D_N/H_N)=1-\bar q_N^{(h)}\to1\), and since \(0<h_m\le H_N\),
\[
\Var(D_N/H_N)
=
H_N^{-2}\sum_m h_m^2q_{m,N}(1-q_{m,N})
\le
H_N^{-1}\sum_m h_mq_{m,N}
=
\bar q_N^{(h)}
\to0.
\]

Next, for each \(f\in\mathcal F\),
\[
\Var(S_N^{(f)})
=
\sum_m q_{m,N}(1-q_{m,N})(a_{m,N}^{(f)})^2
\le
\operatorname{tr}(\Sigma_N)
=
O(V_N).
\]
Together with \(B_N^{(f)}=o(\sqrt{V_N})\), \eqref{eq:T-decomp}, and
\(D_N/H_N\to_p1\), this gives
\(T_{N,f}^{\mathcal K}-T_{N,f}=O_p(\sqrt{V_N}/H_N)\). Since
\(V_N/H_N^2=\nu_N^2\bar q_N^{(h)}/M_{X,N}=o(1)\), we have
\(\mathbf T_N^{\mathcal K}-\mathbf T_N=o_p(1)\). By \textup{(C*2)}, \(\rho\)
is continuous near \(\mathbf T_N\), and hence
\(\rho_{CH}^{(N)}-\rho_{CH}\to_p0\).

It remains to establish the limit distribution. Taylor expansion gives
\[
\rho(\mathbf T_N^{\mathcal K})-\rho(\mathbf T_N)
=
\psi_N^\top(\mathbf T_N^{\mathcal K}-\mathbf T_N)+R_N,
\qquad
|R_N|\le C\|\mathbf T_N^{\mathcal K}-\mathbf T_N\|^2
\]
with probability tending to one. Let
\(\mathbf S_N:=(S_N^{(f)})_{f\in\mathcal F}\) and
\(\mathbf B_N:=(B_N^{(f)})_{f\in\mathcal F}\). By \eqref{eq:T-decomp},
\(\mathbf T_N^{\mathcal K}-\mathbf T_N=-(\mathbf S_N+\mathbf B_N)/D_N\).
Since \(\mathbb E\|\mathbf S_N\|^2=\operatorname{tr}(\Sigma_N)=O(V_N)\) and
\(\|\mathbf B_N\|=o(\sqrt{V_N})\), we obtain
\[
\|\mathbf T_N^{\mathcal K}-\mathbf T_N\|^2
=
O_p(V_N/H_N^2),
\qquad
R_N=o_p(\sqrt{V_N}/H_N).
\]
Furthermore, using \(\|\psi_N\|=O(1)\),
\[
\psi_N^\top(\mathbf T_N^{\mathcal K}-\mathbf T_N)
=
-\frac{\sum_m (J_{m,N}-q_{m,N})\ell_{m,N}}{D_N}
+
o_p(\sqrt{V_N}/H_N).
\]
Therefore
\begin{equation}
\label{eq:linear-limit-reduction}
\frac{H_N}{\sqrt{V_N}}
\bigl(\rho_{CH}^{(N)}-\rho_{CH}\bigr)
=
-\frac{\sum_m (J_{m,N}-q_{m,N})\ell_{m,N}}{\sqrt{V_N}}
+
o_p(1).
\end{equation}

Let \(\xi_{m,N}:=(J_{m,N}-q_{m,N})\ell_{m,N}\). Conditional on the finite
population, the \(\xi_{m,N}\)'s are independent, mean-zero, and satisfy
\[
\sum_m\Var(\xi_{m,N})
=
\sum_m q_{m,N}(1-q_{m,N})\ell_{m,N}^2
=
V_N.
\]
Moreover, \(|\xi_{m,N}|\le|\ell_{m,N}|\le\|\psi_N\|\,\|a_{m,N}\|\), so by
\textup{(C*2)} and \textup{(C*4)},
\[
\frac{\max_m\xi_{m,N}^2}{V_N}
\le
O(1)\frac{\max_m\|a_{m,N}\|^2}{V_N}
\to0.
\]
The Lindeberg condition follows, and hence
\(\sum_m\xi_{m,N}/\sqrt{V_N}\rightsquigarrow \mathcal{N}(0,1)\). Together with
\eqref{eq:linear-limit-reduction},
\[
\frac{H_N}{\sqrt{V_N}}
\bigl(\rho_{CH}^{(N)}-\rho_{CH}\bigr)
\rightsquigarrow \mathcal{N}(0,1).
\]
Finally, since
\[
\sqrt{\frac{M_{X,N}}{\bar q_N^{(h)}}}\frac{\sqrt{V_N}}{H_N}
=
\nu_N\to\nu,
\]
Slutsky's theorem gives
\[
\sqrt{\frac{M_{X,N}}{\bar q_N^{(h)}}}
\bigl(\rho_{CH}^{(N)}-\rho_{CH}\bigr)
\rightsquigarrow \mathcal{N}(0,\nu^2).
\]
\end{proof}

\begin{remark}[Combined regimes and rates]
\label{rmk:regimes_combined}
The relative magnitudes of the two error sources in the decomposition~\eqref{eq:error_decomposition} depend on instrument strength. Under adaptive Bonferroni
selection with
\(q_{m,N}=\Phi(c_N-\sqrt N\,\beta_{x,m})+\Phi(-c_N-\sqrt N\,\beta_{x,m})\),
there are two useful regimes:
\begin{itemize}[leftmargin=*]
\item \emph{Strong-instrument regime}
(\(|\sqrt N\,\beta_{x,m}|-c_N\to\infty\) uniformly): \(q_{m,N}\) is
exponentially small, \(M_{X,N}\bar q_N^{(h)}\to0\), the selection error
vanishes faster than any polynomial rate, and Theorem~\ref{thm:main} alone
delivers genome-wide inference at rate \(\sqrt{NK}\).

\item \emph{Rare-miss regime}
(\(\bar q_N^{(h)}\to0\), \(M_{X,N}\bar q_N^{(h)}\to\infty\)): the missed-SNP
fraction vanishes but the effective number of missed SNPs diverges, yielding a
non-degenerate Gaussian selection-error limit at rate
\(\sqrt{M_{X,N}/\bar q_N^{(h)}}\) by
Theorem~\ref{thm:genome_wide_limit}.
\end{itemize}
\end{remark}
\subsection{\texorpdfstring{Proof of Lemma~\ref{lem:dov_scaling}}{Proof of Lemma}}
\label{subsec:scale}

We prove Lemma~\ref{lem:dov_scaling} under the simulation model in
Section~\ref{sec:simulation}. The lemma identifies the population
coheterogeneity \(\rho_{CH}\) of
Appendix~\ref{app:proof_genome_wide_limit} as a function of the
invalid-instrument overlap \(D_{\mathrm{ov}}\) and invalid-instrument
prevalence \(\pi_{\mathrm{inv}}\), under the simulation's data-generating
process and uniform weighting.

Recall that \eqref{sim_frame} defines the latent confounders
\(U_0,U_1,U_2\), their SNP effects \(\phi_{0,m},\phi_{1,m},\phi_{2,m}\), the
intrinsic direct outcome effects \(\delta_{1,m},\delta_{2,m}\), and the
conditional invalid-instrument proportions \(q_0,q_1,q_2\), with
\(D_{\mathrm{ov}}=q_0\). For \(a\in\{0,1,2\}\), let
\(\mathcal U_a\subseteq\{1,\ldots,K\}\) denote the subset of selected
instruments invalid through confounder \(U_a\). Write
\(q_0=D_{\mathrm{ov}}\), \(q_1=(1-D_{\mathrm{ov}})r_1\), and
\(q_2=(1-D_{\mathrm{ov}})r_2\), where \(r_1,r_2\ge0\) and \(r_1+r_2=1\). Thus
the corresponding unconditional proportions among all selected instruments are
\[
\pi_{\mathrm{inv}}D_{\mathrm{ov}},\qquad
\pi_{\mathrm{inv}}(1-D_{\mathrm{ov}})r_1,\qquad
\pi_{\mathrm{inv}}(1-D_{\mathrm{ov}})r_2.
\]
Let \(\theta_{U,l}:=\theta_{U\to Y_l}\), \(l=1,2\).

\paragraph{Regularity conditions.}
Beyond the simulation setup, we impose the following ratio-scale conditions.

\begin{enumerate}[label=\textup{(C**\arabic*)}, leftmargin=*, itemsep=2pt]
    \item  For selected instruments,
    \(|\beta_{x,m}|\) is bounded away from zero. Define
    \[
    \eta_{l,m}:=\frac{\delta_{l,m}}{\beta_{x,m}},
    \qquad
    \psi_{a,m}:=\frac{\phi_{a,m}}{\beta_{x,m}},
    \qquad a\in\{0,1,2\},\ l\in\{1,2\}.
    \]
    After centering, assume
    \(\Var(\eta_{l,m})=\sigma_{\eta_l}^2\),
    \(\Var(\psi_{a,m})=\sigma_{U,R}^2\),
    \(\Cov(\eta_{1,m},\eta_{2,m})=0\), and
    \(\Cov(\eta_{l,m},\psi_{a,m})=0\) for \(l=1,2\) and \(a=0,1,2\).

    \item The weights satisfy \(\max_{1\le m\le K}|Kw_m-1|\xrightarrow{p}0\).

    \item The limiting marginal variances of the
    ratio-scale pleiotropic components defined below are positive.
\end{enumerate}

For a selected instrument \(m\), the ratio-scale pleiotropic component is
\[
\alpha_{l,m}
:=
\frac{\beta_{y_l,m}-\theta_{X\to Y_l}\,\beta_{x,m}}{\beta_{x,m}},
\qquad l=1,2,
\]
By \eqref{sim_frame},
\[
\alpha_{1,m}
=
\eta_{1,m}
+
\theta_{U,1}\psi_{0,m}\mathbf 1\{m\in\mathcal U_0\}
+
\theta_{U,1}\psi_{1,m}\mathbf 1\{m\in\mathcal U_1\},
\]
\[
\alpha_{2,m}
=
\eta_{2,m}
+
\theta_{U,2}\psi_{0,m}\mathbf 1\{m\in\mathcal U_0\}
+
\theta_{U,2}\psi_{2,m}\mathbf 1\{m\in\mathcal U_2\}.
\]

\paragraph{Convergence to the population limit.}
Condition \textup{(C**2)} implies that weighted empirical moments have the
same limits as ordinary empirical moments. For
\(f\in\{z_1,z_2,z_1^2,z_2^2,z_1z_2\}\),
\[
\left|
\sum_{m=1}^K\!\left(w_m-\tfrac1K\right)\!f(\alpha_{1,m},\alpha_{2,m})
\right|
\le
\max_m|Kw_m-1|\cdot
K^{-1}\sum_{m=1}^K|f(\alpha_{1,m},\alpha_{2,m})|
=o_p(1).
\]
Hence the weighted covariance and variances entering \(\rho_{CH}^{(N)}\)
converge in probability to the corresponding ratio-scale mixture moments
under the simulation DGP.

Only the shared-confounder category \(\mathcal U_0\) contributes to
cross-outcome ratio-scale pleiotropic covariance. Since this category has
unconditional proportion \(\pi_{\mathrm{inv}}D_{\mathrm{ov}}\),
\[
\Cov(\alpha_{1,m},\alpha_{2,m})
=
\pi_{\mathrm{inv}}D_{\mathrm{ov}}\,
\theta_{U,1}\theta_{U,2}\,\sigma_{U,R}^2.
\]
Similarly,
\[
\Var(\alpha_{1,m})
=
\sigma_{\eta_1}^2
+
\pi_{\mathrm{inv}}
\{D_{\mathrm{ov}}+(1-D_{\mathrm{ov}})r_1\}\theta_{U,1}^2\sigma_{U,R}^2,
\]
\[
\Var(\alpha_{2,m})
=
\sigma_{\eta_2}^2
+
\pi_{\mathrm{inv}}
\{D_{\mathrm{ov}}+(1-D_{\mathrm{ov}})r_2\}\theta_{U,2}^2\sigma_{U,R}^2.
\]
Combining yields exactly \eqref{eq:rho_dov_main}, identifying the population
coheterogeneity \(\rho_{CH}\) of
Appendix~\ref{app:proof_genome_wide_limit} as a function of the overlap
parameter \(D_{\mathrm{ov}}\) and invalid-instrument prevalence
\(\pi_{\mathrm{inv}}\) under the simulation data-generating process.

\paragraph{Monotonicity in \(D_{\mathrm{ov}}\).}
Let \(x=D_{\mathrm{ov}}\), \(p=\pi_{\mathrm{inv}}\),
\[
a_l:=\sigma_{\eta_l}^2+p r_l\theta_{U,l}^2\sigma_{U,R}^2,
\qquad
b_l:=p(1-r_l)\theta_{U,l}^2\sigma_{U,R}^2.
\]
Then \(a_l>0\), \(b_l\ge0\), and, up to the sign of
\(\theta_{U,1}\theta_{U,2}\), the magnitude of the limit is proportional to
\[
f(x)=\frac{p x}{\{(a_1+b_1x)(a_2+b_2x)\}^{1/2}},
\qquad 0\le x\le 1.
\]
A direct derivative gives
\[
f'(x)
=
\frac{
p\{2a_1a_2+x(a_1b_2+a_2b_1)\}
}
{2\{(a_1+b_1x)(a_2+b_2x)\}^{3/2}}>0,
\]
so \(\rho_{CH}(D_{\mathrm{ov}},\pi_{\mathrm{inv}})\) is strictly increasing in
\(D_{\mathrm{ov}}\) when \(\theta_{U,1}\theta_{U,2}>0\), strictly decreasing
when \(\theta_{U,1}\theta_{U,2}<0\), and
\(|\rho_{CH}(D_{\mathrm{ov}},\pi_{\mathrm{inv}})|\) is strictly increasing in
\(D_{\mathrm{ov}}\) in either case.

\subsection{Simulation Settings}
\label{sec:sim_settings}

The generative model for the simulations is described in the main text (Section~\ref{sec:simulation}); here we provide the specific parameter values and implementation details. The true causal effects were set as $\theta_{X\to Y_1} \in \{-0.2, -0.1, 0, 0.1, 0.2\}$ and $\theta_{X\to Y_2} \in \{-0.3, 0, 0.3\}$, and unmeasured confounding was parameterized as $\theta_{U\to X}, \theta_{U\to Y_1}, \theta_{U\to Y_2} \in \{0.3, 0.5\}$, reflecting moderate to strong confounding commonly observed in complex trait studies.

To reflect realistic GWAS settings with varying instrument strength, exposure sample sizes varied as $N_X \in \{50k, 80k, 100k, 150k, 200k, 500k, 1000k\}$. Outcome sample sizes were set as $N_{Y_1} = 0.5 \times N_X$ and $N_{Y_2} = N_X$, representing scenarios where one outcome study is smaller than the exposure study while an auxiliary outcome has comparable sample size. We set the proportion of exposure-associated SNPs to $p = 0.02$ (approximately $4k$ variants), and among these, the proportion of invalid instruments was varied as $1-\pi \in \{0.1, 0.3, 0.5, 0.7\}$, capturing settings from weak to severe instrument invalidity. The overlap of invalid instruments between outcomes was controlled through $D_{\mathrm{ov}} \in \{0.5, 0.75, 1.0\}$, representing moderate to complete overlap. Additionally, 1\% of SNPs had direct pleiotropic effects on outcomes ($\delta_{km} \neq 0$) but no association with the exposure.

Direct genetic effect sizes were drawn from normal distributions with variances calibrated to produce realistic instrument strengths: $\sigma_X^2 = 5 \times 10^{-5}$ for valid instruments, $\sigma_U^2 = 1 \times 10^{-4}$ for confounder effects, and $\sigma_{Y_1}^2 = \sigma_{Y_2}^2 = 5 \times 10^{-5}$ for direct pleiotropic effects. We considered both balanced pleiotropy ($\mu_{Y_1} = \mu_{Y_2} = 0$) and directional pleiotropy ($\mu_{Y_1} = 0.005$, $\mu_{Y_2} = 0.003$), where pleiotropic pathways systematically bias effects in one direction.

Replicates with fewer than three significant instruments were excluded to ensure estimator stability. For each parameter configuration, we performed 100 replicates. We compared our proposed IB methods (IB-Mode and IB-PRESSO) against eight existing MR estimators: IVW, MR-Egger, weighted median, weighted mode-based estimation, contamination mixture, MR-Mix, MR-cML, and MR-PRESSO. For both IB-Mode and weighted mode, we used the weighting scheme with $\phi=1$ and computed bootstrap standard errors with 100 bootstrap samples. For MR-PRESSO and IB-PRESSO, we used 5,000 bootstrap samples for outlier detection with significance threshold 0.05, and reported outlier-corrected estimates when available or raw estimates otherwise. For MR-Mix, we evaluated the causal effect over a grid of candidate values from $-0.5$ to $0.5$ with step size 0.01, and initialized the algorithm with invalid instrument proportion 0.6 and pleiotropy variance $10^{-5}$. The contamination mixture method provides 95\% confidence intervals derived from profile likelihood ratio test inversion rather than standard errors; we approximated standard errors as the confidence interval width divided by $2 \times 1.96$ for comparison purposes,  assuming approximate normality. 

As a secondary analysis, we additionally evaluated the Bayesian multi-outcome method MR2. For MR2, the two outcome summary statistics were modeled jointly using a Bayesian framework with $\texttt{EVgamma}=0.5$. We ran 4,000 MCMC iterations with a burn-in of 1,000 and thinning interval of 5, and obtained posterior summaries from the processed MCMC output. The posterior mean was used as the point estimate, with 95\% credible intervals reported. Statistical significance was assessed using the posterior inclusion probability (PIP), with a rejection threshold calibrated under the null to control the type I error rate.

All simulations were implemented in \texttt{R} version 4.3.0 using the \texttt{MendelianRandomization} package for existing methods and custom implementations for IB-Mode and IB-PRESSO.

\newpage
\subsection{Additional Real Data Analysis}

\textbf{Primary exposure BMI.} IB-Mode demonstrated strong agreement with other robust MR methods for outcomes where causal effects are well-established, including CAD and T2D, with efficiency comparable to weighted mode.
 For BMI, the principal benefit of instrument borrowing is a more precise separation of causal from pleiotropic effects rather than a large efficiency gain, consistent with the exposure-dependent efficiency pattern in \cref{causalgain}. These patterns reflect BMI's complex genetic architecture: many BMI-associated variants exhibit horizontal pleiotropy through metabolic and inflammatory pathways that differentially affect cardiometabolic outcomes. This provides a compelling case where, by leveraging auxiliary outcomes, IB methods enable a more precise separation of causal from pleiotropic effects than standard MR approaches.

\textbf{Primary exposure blood pressure traits.} IB-Mode successfully identified established causal effects of DBP and SBP on CAD and stroke while correctly returning null results for asthma, where no causal relationship is supported \citep{morrison2020mendelian}.
Blood pressure genetics are characterized by highly pleiotropic variants affecting vascular remodeling, sodium homeostasis, and sympathetic tone \citep{singh2021systematic, bertorello2015increased}. These mechanisms create structured pleiotropy where invalid instruments for one cardiovascular outcome tend to be invalid for related outcomes. IB-Mode's ability to leverage this coheterogeneity structure through auxiliary outcomes yields substantial efficiency gains for DBP and more modest gains for SBP relative to weighted mode (\cref{causalgain}).

\textbf{Primary exposure cholesterol traits.} The cholesterol analyses provide perhaps the most compelling evidence for IB-Mode's balanced performance.
 While IB-Mode correctly identified LDL's causal effect on CAD, consistent with decades of trial evidence, it simultaneously avoided false positives for HDL across all outcomes. This is particularly noteworthy because HDL represents a well-characterized example of a non-causal biomarker: observational studies consistently show protective associations, yet randomized trials of HDL-raising interventions have uniformly failed to reduce cardiovascular events \citep{morrison2020mendelian}. In contrast, weighted mode incorrectly identified a significant HDL effect on stroke. HDL genetics are notoriously complex, with variants in lipid metabolism genes (e.g., CETP, LIPC) exhibiting extensive pleiotropy across multiple lipid fractions and inflammatory markers \citep{millwood2018association, kullo2005pleiotropic}. HDL-raising variants often affect triglycerides and LDL simultaneously, creating confounding that mimics a causal effect. IB-Mode avoids this trap by using auxiliary traits to identify instruments whose effects are consistent with direct causation versus those reflecting broader metabolic dysregulation, thereby maintaining appropriate Type I error control where weighted mode fails.

%% file: supplement/1suppl_figs.tex
\newpage
\section{Supplemental Figures}
\label{Supp-Fig}
\setcounter{figure}{0}
\renewcommand{\thefigure}{S\arabic{figure}}

\vspace{1em}

\begin{figure}[H]
\centering
\begin{mdframed}[
    linewidth=1.5pt,
    roundcorner=8pt,
    innertopmargin=10pt,
    innerbottommargin=10pt,
    innerleftmargin=0pt,
    innerrightmargin=10pt,
    backgroundcolor=white
]

\begin{tikzpicture}[remember picture]
\node[inner sep=0pt] (tabular) {
\setlength{\tabcolsep}{8pt}
\renewcommand{\arraystretch}{1}
\begin{tabular}{cc}
\hspace{-0.8cm}
\begin{minipage}[c][5cm][t]{0.44\textwidth}
\caption*{\textbf{(A)}}
    \centering
    \tikz[remember picture] \node[inner sep=0pt] (panelA) {
    \begin{tikzpicture}[node distance=0.8cm and 1.5cm, every node/.style={small}, bend angle=15, scale = 0.6]
    \begin{scope}[shift={(-5,0)}]
      \draw[thick] (-1, 2) rectangle (1,-2);
      \draw[thick] (-1, 0) -- (1,0);
      \fill[green!30] (-1, 2) rectangle (1,0);
      \fill[red!30]   (-1, 0) rectangle (1,-2);
      \node[above, font=\bfseries] at (0,2.2) {Candidate \\ instruments for $X$};
      \node[small, align=center] at (0, 0.8) {Valid};
      \node[small, align=center] at (0,-0.8) {Invalid};
      \coordinate (ValidTop) at (0,1.3);
      \coordinate (InvalidMid) at (0,-1.3);
    \end{scope}
    \node[exposure] (X) at (0,0) {Exposure\\(X)};
    \node[outcome] (Y1) at ([xshift=5cm, yshift=1.5cm]X) {Outcome\\$Y_{\text{main}}$};
    \node[outcome] (Y2) at ([xshift=5cm, yshift=-1.5cm]X) {Outcome\\$Y_{\text{aux}}$};
    \node[confounder, below=of X, yshift=-0.5cm] (U) {Confounder\\(U)};
    \draw[arrow, thick, draw=black!70!black] (ValidTop) to[bend left=15] (X);
    \draw[arrow, thick, draw=black!70!black] (InvalidMid) to[bend right=15] (X);
    \draw[violation, thick, draw=red!70!black] (InvalidMid) to[bend right=30] (U);
    \draw[violation, thick, draw=red!70!black] (U) -- (X);
    \draw[violation, thick, draw=red!70!black] (U) to (Y1);
    \draw[violation, thick, draw=red!70!black] (U) to (Y2);
    \draw[arrow] (X) -- (Y1) node[midway, above, small] {$\theta_1$};
    \draw[arrow] (X) -- (Y2) node[midway, below, small] {$\theta_2$};
    \draw[dotted, thick] (Y2) to[bend right=15] (Y1);
    \end{tikzpicture}
    };
\end{minipage}
&
\vspace{1cm}
\hspace{0.3cm}
\begin{minipage}[c][5cm][t]{0.42\textwidth}
\caption*{\textbf{(B)}}
    \centering
    \vspace{2mm}
    \tikz[remember picture] \node[inner sep=0pt] (panelB) {
    \begingroup
    \setlength{\tabcolsep}{3.8pt}
    \renewcommand{\arraystretch}{1.07}
    \begin{tabular}{|c|c|c|c|c|c|}
      \hline 
      \textbf{SNP} & $\hat{\beta}_X$ & $\hat{\beta}_{Y_{main}}$ & $\hat{\beta}_{Y_{aux}}$ & $\hat{\theta}_1$ & $\hat{\theta}_2$\\
      \hline
      \textit{SNP}1 & 0.05 & 0.025 & 0.015 & 0.5 & 0.3 \\
      \textit{SNP}2 & 0.10 & 0.07 & 0.05 & 0.7 & 0.5 \\
      \textit{SNP}3 & 0.20 & 0.05 & 0.10 & 0.25 & 0.5 \\
      \textit{SNP}4 & 0.10 & 0.02 & 0.09 & 0.2 & 0.9 \\
      \textit{SNP}5 & 0.10 & 0.05 & 0.10 & 0.5 & 1 \\
      $\cdots$ & $\cdots$ & $\cdots$ & $\cdots$ & $\cdots$ & $\cdots$ \\
      \textit{SNP}K & $\beta_X^{(N)}$ & $\beta_{Y_{main}}^{(N)}$ & $\beta_{Y_{aux}}^{(N)}$ & $\frac{\beta_{Y_{main}}^{(N)}}{\beta_X^{(N)}}$ & $\frac{\beta_{Y_{aux}}^{(N)}}{\beta_X^{(N)}}$ \\
      \hline
    \end{tabular}
    \endgroup
    };
\end{minipage}
\\[10pt]
\begin{minipage}[c][5cm][t]{0.44\textwidth}
\vspace{-1cm}
\caption*{\textbf{(D)}}
    \centering
    \tikz[remember picture] \node[inner sep=0pt] (panelD) {
    \includegraphics[width=0.97\linewidth]{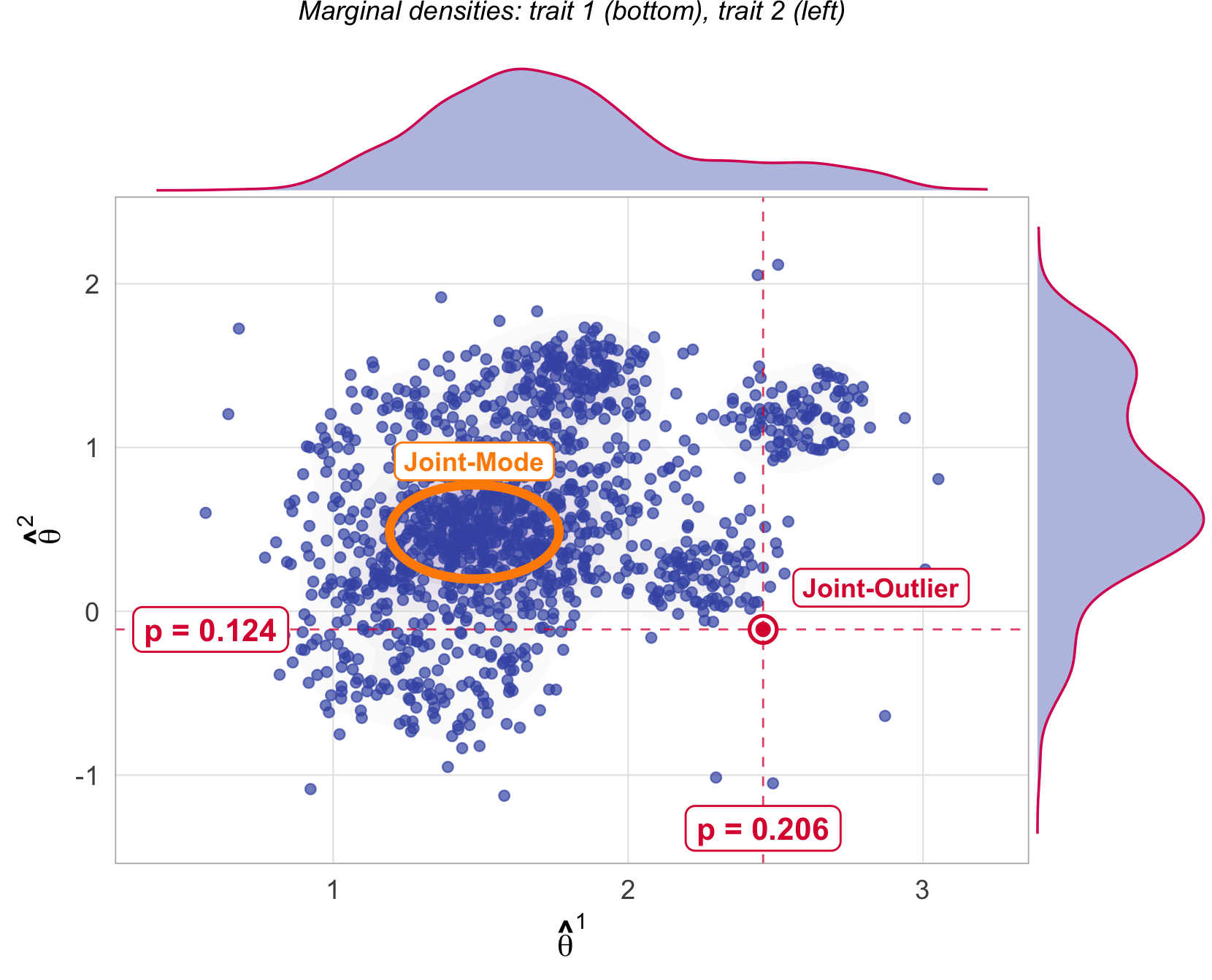}
    };
\end{minipage}
&
\hspace{-1.0cm}
\vspace{0.5cm}
\begin{minipage}[c][5cm][t]{0.40\textwidth}
\caption*{\textbf{(C)}}
    \centering
    \tikz[remember picture] \node[inner sep=0pt] (panelC) {
    \begin{tikzpicture}[x=1.30cm, y=0.8cm]
        \node[anchor=south, font=\small] at (0,5.05) {Trait 1};
        \node[anchor=south, font=\small] at (1,5.05) {Trait 2};
        \foreach \row/\name in {4/Aux 1,3/Aux 2,2/Aux 3,1/Aux 4,0/Aux 5} {
            \node[anchor=east, font=\small] at (-0.5,\row) {\name};
        }
        \foreach \row/\v in {4/0.15,3/0.82,2/0.35,1/0.60,0/0.05} {
            \pgfmathsetmacro{\rf}{255-(255-160)*\v}
            \pgfmathsetmacro{\gf}{240-(240-0)*\v}
            \pgfmathsetmacro{\bf}{240-(240-0)*\v}
            \definecolor{cellcol}{RGB}{\rf,\gf,\bf}
            \node[fill=cellcol, draw=black, minimum width=0.95cm, minimum height=0.9cm, rounded corners=2pt] at (0,\row) {};
            \node[font=\scriptsize] at (0,\row) {\pgfmathprintnumber[fixed,precision=2]{\v}};
        }
        \foreach \row/\v in {4/0.40,3/0.20,2/0.90,1/0.50,0/0.75} {
            \pgfmathsetmacro{\rf}{255-(255-160)*\v}
            \pgfmathsetmacro{\gf}{240-(240-0)*\v}
            \pgfmathsetmacro{\bf}{240-(240-0)*\v}
            \definecolor{cellcol}{RGB}{\rf,\gf,\bf}
            \node[fill=cellcol, draw=black, minimum width=0.95cm, minimum height=0.9cm, rounded corners=2pt] at (1,\row) {};
            \node[font=\scriptsize] at (1,\row) {\pgfmathprintnumber[fixed,precision=2]{\v}};
        }
        \draw[thick, orange!85!black, rounded corners=2pt] ($(0,3)+(-0.46,-0.46)$) rectangle ($(0,3)+(0.46,0.46)$);
        \draw[thick, blue!80!black, rounded corners=2pt]   ($(1,2)+(-0.46,-0.46)$) rectangle ($(1,2)+(0.46,0.46)$);
        \begin{scope}[shift={(1.75,0)}]
            \draw[thick] (0,0) rectangle (0.32,2.7);
            \foreach \yy in {0,0.07,...,2.63}{
                \pgfmathsetmacro{\t}{\yy/2.7}
                \pgfmathsetmacro{\r}{255-(255-160)*\t}
                \pgfmathsetmacro{\g}{240-(240-0)*\t}
                \pgfmathsetmacro{\b}{240-(240-0)*\t}
                \definecolor{barc}{RGB}{\r,\g,\b}
                \draw[fill=barc, draw=none] (0,\yy) rectangle (0.32,{\yy+0.07});
            }
            \node[anchor=west, font=\tiny] at (0.33,0) {0};
            \node[anchor=west, font=\tiny] at (0.33,2.7) {1};
            \node[anchor=west, font=\small] at (0.32,1.35) {$\hat\rho_{CH}^{(N)}$};
        \end{scope}
    \end{tikzpicture}
    };
\end{minipage}
\end{tabular}
};

\begin{scope}[overlay]
    \draw[->, very thick, black, line width=2pt] 
        (panelB.south) -- (panelC.north);
    \draw[->, very thick, black, line width=2pt] 
        (panelB.south west) to[bend left=12] (panelD.north east);
    \draw[->, very thick, black, line width=2pt] 
        (panelC.west) -- (panelD.east);
\end{scope}

\end{tikzpicture}
\end{mdframed}
\vspace{-1.5ex}
\caption{
  \textbf{Graphical Summary:} \textit{ \textbf{The Instrument Borrowing approach systematically leverages genetic associations with related traits, enabling improved identification and estimation of causal effects in Mendelian randomization analyses.} \textbf{(A)} DAG illustrating sharing of valid/invalid instruments across two outcome traits due to an underlying common confounder.  \textbf{(B)} GWAS summary statistics and corresponding ratio-estimates for two outcome traits in relationship to an exposure ($X$). \textbf{(C)} Selection of appropriate auxiliary traits for instrumental borrowing is guided by the proposed coheterogeneity statistic $\hat\rho_{CH}^{(N)}$. \textbf{(D)} Scatterplot of the ratio-estimates across the two outcome traits reveals how mode identification and outlier detection are improved through joint analysis.}
}
\label{graph_summ}
\end{figure}

\begin{figure}[h]
\centering
\begin{tikzpicture}[scale=1.15, every node/.style={scale=0.9}]

\draw[thick, line width=1.2pt] (0, 0.5) rectangle (1.8, 10);
\node[align=center, font=\bfseries, rotate=90] at (0.9, 5.25) {All SNPs ($M = 200k$)};

\draw[thick, fill=blue!10, line width=1pt] (2.0, 7.8) rectangle (3.8, 10);
\node[align=center, font=\small] at (2.9, 8.9) {Valid IVs\\[2pt]($\gamma_m \neq 0$)};

\draw[thick, fill=red!10, line width=1pt] (2.0, 5.3) rectangle (3.8, 7.6);
\node[align=center, font=\small] at (2.9, 6.45) {Invalid IVs\\[2pt]($\phi_{lm} \neq 0$)};

\draw[thick, fill=green!10, line width=1pt] (2.0, 2.8) rectangle (3.8, 5.1);
\node[align=center, font=\small] at (2.9, 3.95) {Pleiotropic\\[2pt]($\alpha_{km} \neq 0$)};

\draw[thick, fill=gray!10, line width=1pt] (2.0, 0.5) rectangle (3.8, 2.6);
\node[align=center, font=\small] at (2.9, 1.55) {Null\\[2pt]variants};

\draw[->, thick, line width=1.2pt] (3.8, 8.9) -- (4.2, 8.9);
\draw[->, thick, line width=1.2pt] (3.8, 6.45) -- (4.2, 6.45);
\draw[->, thick, line width=1.2pt] (3.8, 3.95) -- (4.2, 3.95);

\draw[thick, fill=blue!20, line width=1pt] (4.4, 7.8) rectangle (6.5, 10);
\node[align=center, font=\small] at (5.45, 8.9) {Effect on $X$ only\\[4pt]No pleiotropy};

\draw[thick, fill=red!20, line width=1pt] (4.4, 6.7) rectangle (6.5, 7.6);
\node[align=center, font=\small] at (5.45, 7.15) {Shared\\[2pt]via $U_0$};

\draw[thick, fill=red!20, line width=1pt] (4.4, 5.9) rectangle (6.5, 6.6);
\node[align=center, font=\small] at (5.45, 6.25) {$Y_1$-specific\\[2pt]via $U_1$};

\draw[thick, fill=red!20, line width=1pt] (4.4, 5.1) rectangle (6.5, 5.8);
\node[align=center, font=\small] at (5.45, 5.45) {$Y_2$-specific\\[2pt]via $U_2$};

\draw[thick, fill=green!20, line width=1pt] (4.4, 3.9) rectangle (6.5, 4.8);
\node[align=center, font=\small] at (5.45, 4.35) {Both\\[2pt]outcomes};

\draw[thick, fill=green!20, line width=1pt] (4.4, 3.1) rectangle (6.5, 3.8);
\node[align=center, font=\small] at (5.45, 3.45) {$Y_1$ only};

\draw[thick, fill=green!20, line width=1pt] (4.4, 2.3) rectangle (6.5, 3.0);
\node[align=center, font=\small] at (5.45, 2.65) {$Y_2$ only};

\node[draw, thick, fill=cyan!15, ellipse, minimum width=2.4cm, minimum height=1.6cm, line width=1.2pt] (X) at (10.5, 8.5) {$\boldsymbol{X}$};

\node[draw, thick, fill=yellow!15, ellipse, minimum width=2.2cm, minimum height=1.5cm, line width=1.2pt] (Y1) at (8.5, 5) {$\boldsymbol{Y_1}$};
\node[draw, thick, fill=yellow!15, ellipse, minimum width=2.2cm, minimum height=1.5cm, line width=1.2pt] (Y2) at (12.5, 5) {$\boldsymbol{Y_2}$};

\node[draw, thick, fill=orange!15, rounded corners=3pt, minimum width=1.7cm, minimum height=1.4cm, line width=1pt] (U1) at (7, 1.2) {\small $U_1$};
\node[draw, thick, fill=purple!15, rounded corners=3pt, minimum width=1.7cm, minimum height=1.4cm, line width=1pt] (U0) at (10.5, 1.2) {\small $U_0$};
\node[draw, thick, fill=brown!15, rounded corners=3pt, minimum width=1.7cm, minimum height=1.4cm, line width=1pt] (U2) at (14, 1.2) {\small $U_2$};

\draw[->, very thick, dotted, blue!60!black, line width=1.6pt] (6.5, 9.2) .. controls (8, 9.5) and (9, 9.2) .. (X.north west);

\draw[->, thick, dotted, red!60!black, line width=1.3pt] (6.5, 7.4) .. controls (7.5, 8.8) and (8.5, 8.9) .. (X.west);
\draw[->, thick, dotted, red!60!black, line width=1.3pt] (6.5, 6.25) .. controls (7.8, 7.8) and (9, 8.2) .. (X.south west);
\draw[->, thick, dotted, red!60!black, line width=1.3pt] (6.5, 5.45) .. controls (8, 6.8) and (9.2, 7.5) .. (X.south);

\draw[->, thick, dotted, purple!60!black, line width=1.3pt] (6.5, 7.15) .. controls (8, 5.5) and (9.5, 3) .. (U0.north);

\draw[->, thick, dotted, orange!60!black, line width=1.3pt] (6.5, 6.25) .. controls (6.8, 4.5) and (6.8, 3) .. (U1.north);

\draw[->, thick, dotted, brown!60!black, line width=1.3pt] (6.5, 5.45) .. controls (10.5, 4.5) and (13, 3) .. (U2.north);

\draw[->, thick, dotted, green!60!black, line width=1.3pt] (6.5, 4.35) .. controls (8.5, 3.5) and (9.5, 2.5) .. (U0.north west);

\draw[->, thick, dotted, green!60!black, line width=1.3pt] (6.5, 3.45) .. controls (6.8, 2.5) .. (U1.north);

\draw[->, thick, dotted, green!60!black, line width=1.3pt] (6.5, 2.65) .. controls (11, 2) .. (U2.north);

\draw[->, very thick, dashed, black, line width=2pt] (X.south west) -- (Y1.north);
\node[left, font=\normalsize\bfseries] at (9.2, 7) {$\theta_{X\to Y_1}$};

\draw[->, very thick, dashed, black, line width=2pt] (X.south east) -- (Y2.north);
\node[right, font=\normalsize\bfseries] at (11.8, 7) {$\theta_{X\to Y_2}$};

\draw[->, thick, orange!60!black, line width=1.4pt] (U1.north) .. controls (7.5, 5) and (8.5, 7) .. (X.south west);
\draw[->, thick, purple!60!black, line width=1.4pt] (U0.north) .. controls (10.5, 4.5) .. (X.south);
\draw[->, thick, brown!60!black, line width=1.4pt] (U2.north) .. controls (13.5, 5) and (12.5, 7) .. (X.south east);

\draw[->, thick, orange!60!black, line width=1.4pt] (U1.north) -- (Y1.south);

\draw[->, thick, purple!60!black, line width=1.4pt] (U0.north) .. controls (9, 3) .. (Y1.south);
\draw[->, thick, purple!60!black, line width=1.4pt] (U0.north) .. controls (12, 3) .. (Y2.south);

\draw[->, thick, brown!60!black, line width=1.4pt] (U2.north) -- (Y2.south);

\end{tikzpicture}
\caption{Simulation structure showing SNP categories, confounding pathways, and trait relationships. SNPs are partitioned into valid IVs (direct effect on $X$ only via $\gamma_m$), invalid IVs (effect on $X$ and confounders via $\gamma_m$ and $\phi_{lm}$), pleiotropic variants (effect on confounders only via $\alpha_{km}$), and null variants. Invalid instruments are subpartitioned by confounder association: shared confounder $U_0$ (affects both outcomes and $X$), or outcome-specific confounders $U_1$ and $U_2$ (affect $Y_1$ or $Y_2$ respectively, and $X$). Pleiotropic variants similarly affect confounders but not $X$. The overlap parameter $D_{\mathrm{ov}}$ controls the proportion of invalid instruments associated with $U_0$. Dotted arrows show direct genetic effects; solid arrows show confounding pathways; dashed arrows show causal effects.}
\label{fig:simulation_structure}
\end{figure}

\begin{figure}[!htbp]
    \centering
    \includegraphics[width=\linewidth, trim=0 0 0 55, clip]{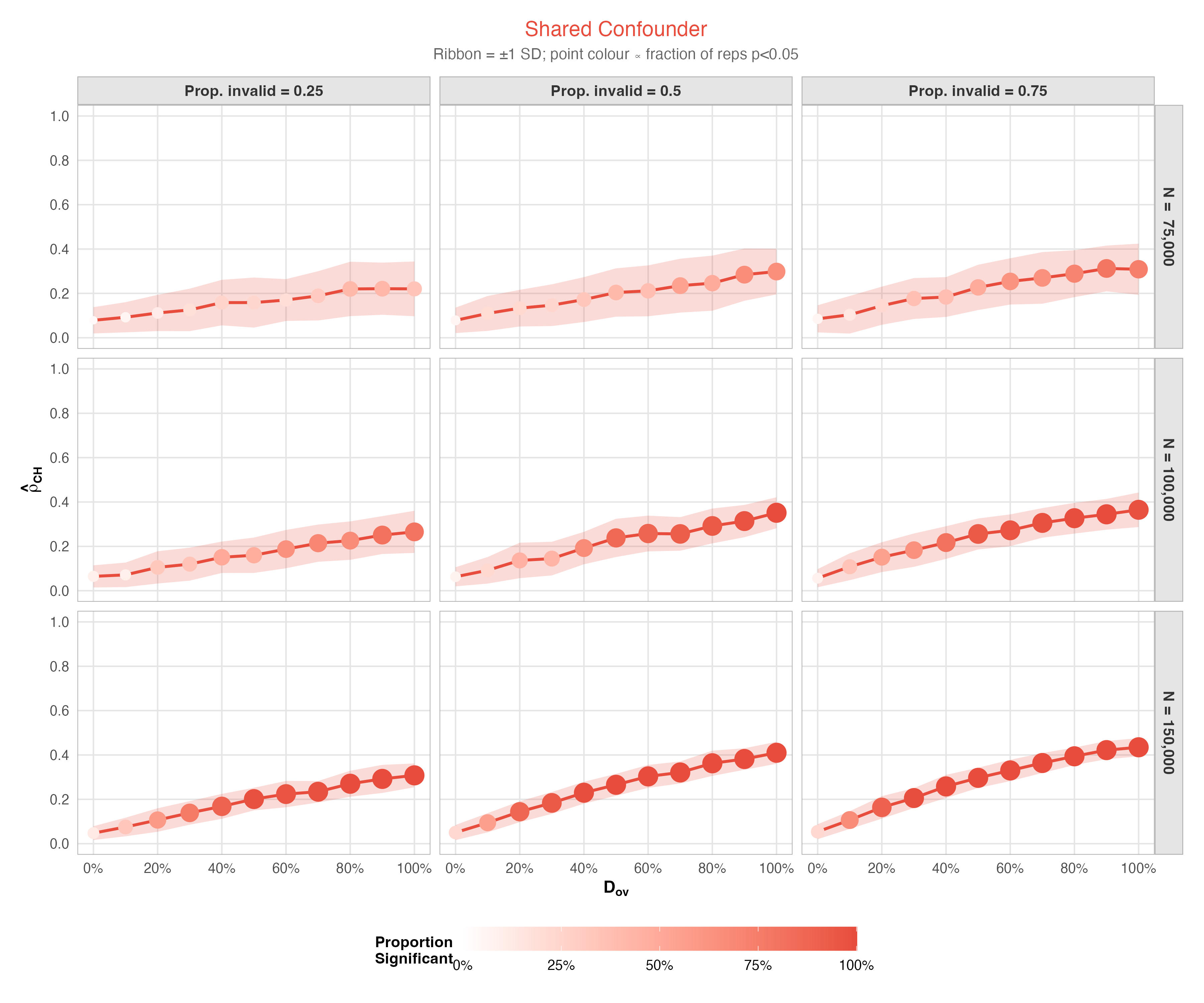}
    \caption{\textit{\textbf{Coheterogeneity statistics tracks overlap in shared confounders.}
Each panel shows the mean absolute coheterogeneity statistic ($\hat{\rho}_{CH}$) across 100 simulations, plotted against the overlap in invalid instruments ($D_{\mathrm{ov}}$) between two outcomes. In this scenario, invalid instruments share a common confounder but exhibit no directional pleiotropy. Point size and color reflect the fraction of replicates with significant coheterogeneity ($p < 0.05$).}}
    \label{fig:cohet-Dcov}
\end{figure}

\begin{figure}[!htbp]
    \centering
    \includegraphics[width=\linewidth, trim=0 0 0 55, clip]{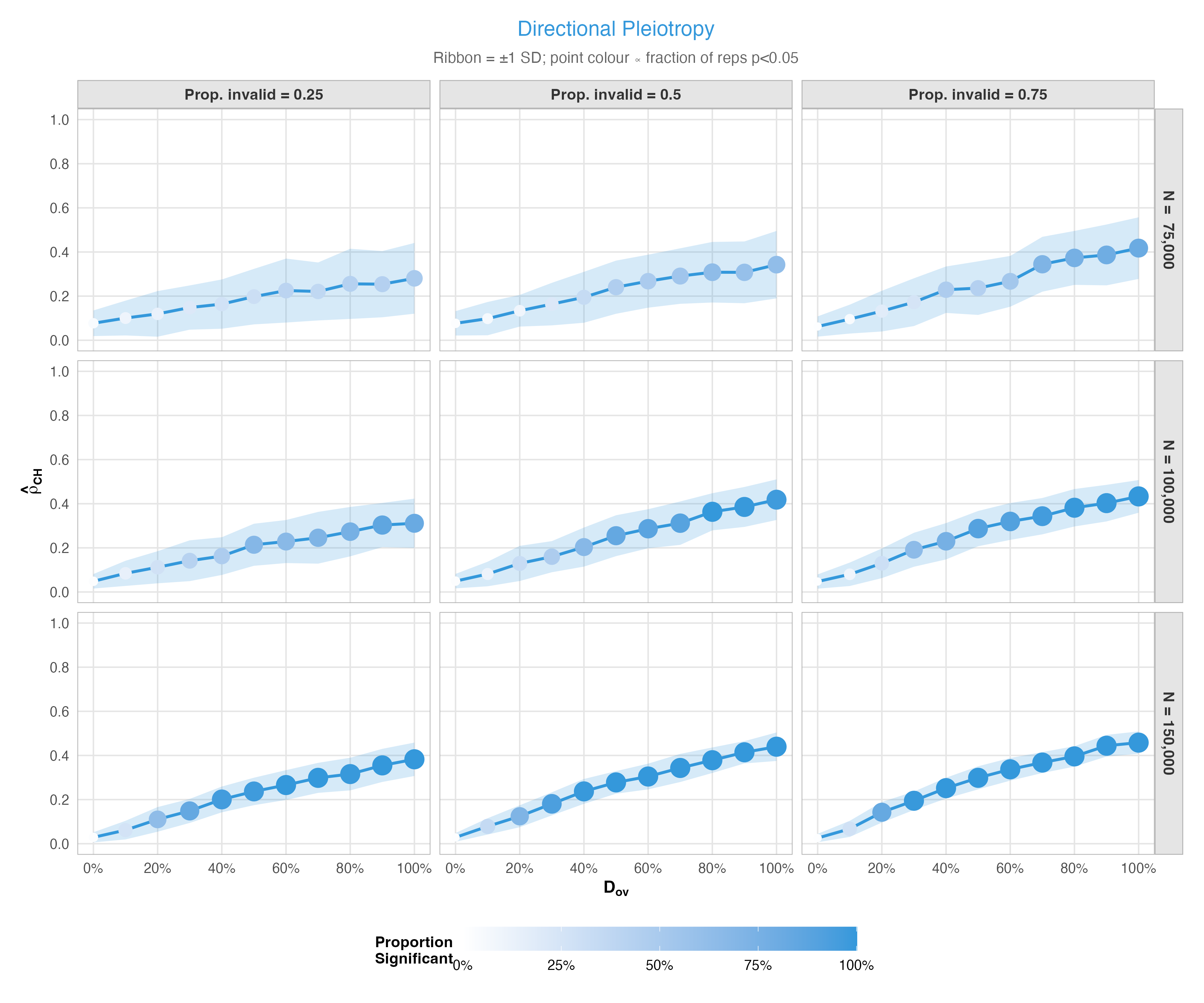}
    \caption{\textit{\textbf{Coheterogeneity statistic detects overlap in directional pleiotropic effects in the absence of shared confounders.}
Each panel shows the mean absolute coheterogeneity statistic ($\hat{\rho}_{CH}$) across 100 simulations, plotted against the overlap in invalid instruments ($D_{\mathrm{ov}}$) between two outcomes. In this scenario, invalid instruments exhibit shared directional pleiotropic effects but no shared confounders. Point size and color reflect the fraction of replicates with significant coheterogeneity ($p < 0.05$).}}
    \label{fig:cohet-Dirpleit}
\end{figure}

\begin{figure}
    \centering
    \includegraphics[width=0.49\linewidth]{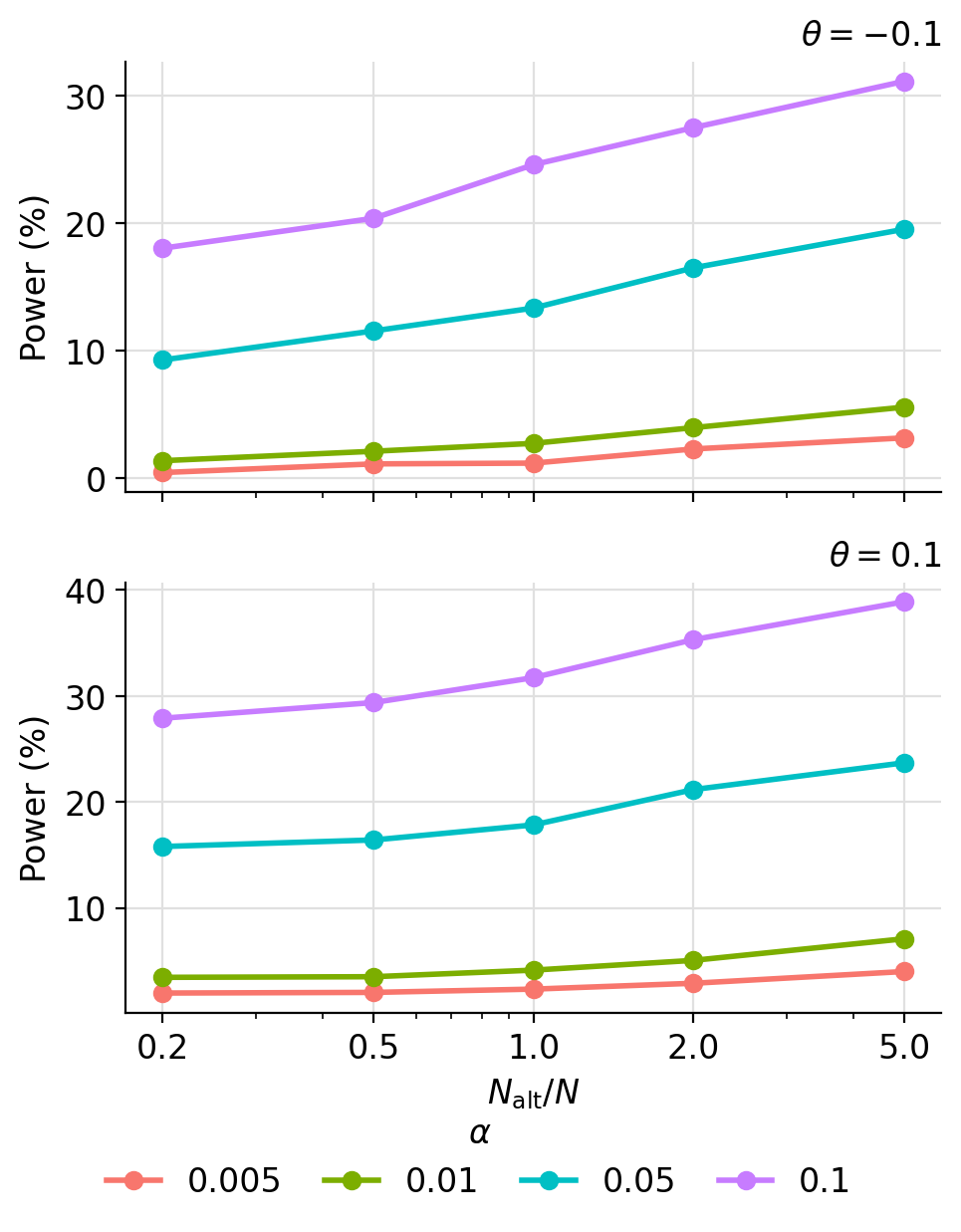}\hfill
    \includegraphics[width=0.49\linewidth]{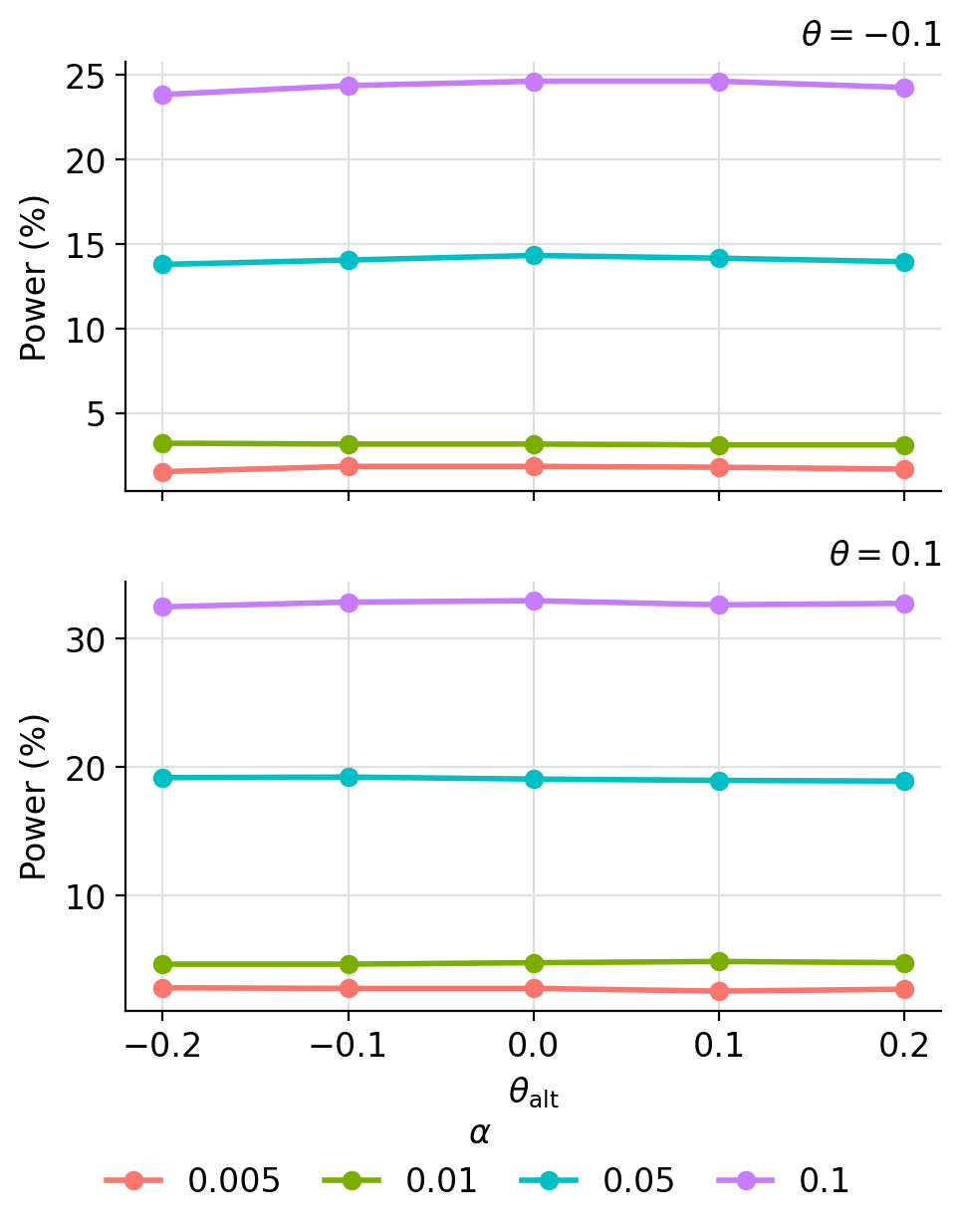}
    \caption{\textit{\textbf{IB-Mode power sensitivity to auxiliary sample size and secondary-trait effect.} Left: power versus auxiliary outcome sample-size ratio \(N_{\text{alt}}/N\) (log-scaled x-axis), stratified by \(\theta\), with curves for nominal \(\alpha\) levels; settings are fixed at \(N=10^5\), invalid-IV proportion \(0.5\), and overlap \(0.75\). Right: power versus secondary-trait causal effect \(\theta_{\text{alt}}\), stratified by \(\theta\), with curves for the same \(\alpha\) levels and the same fixed settings; results are averaged over auxiliary sample-size ratio. Both panels correspond to the InSIDE-violated (DN) scenario.}}
    \label{fig:power_combined}
\end{figure}

\begin{figure}
    \centering
    \includegraphics[width=0.5\linewidth]{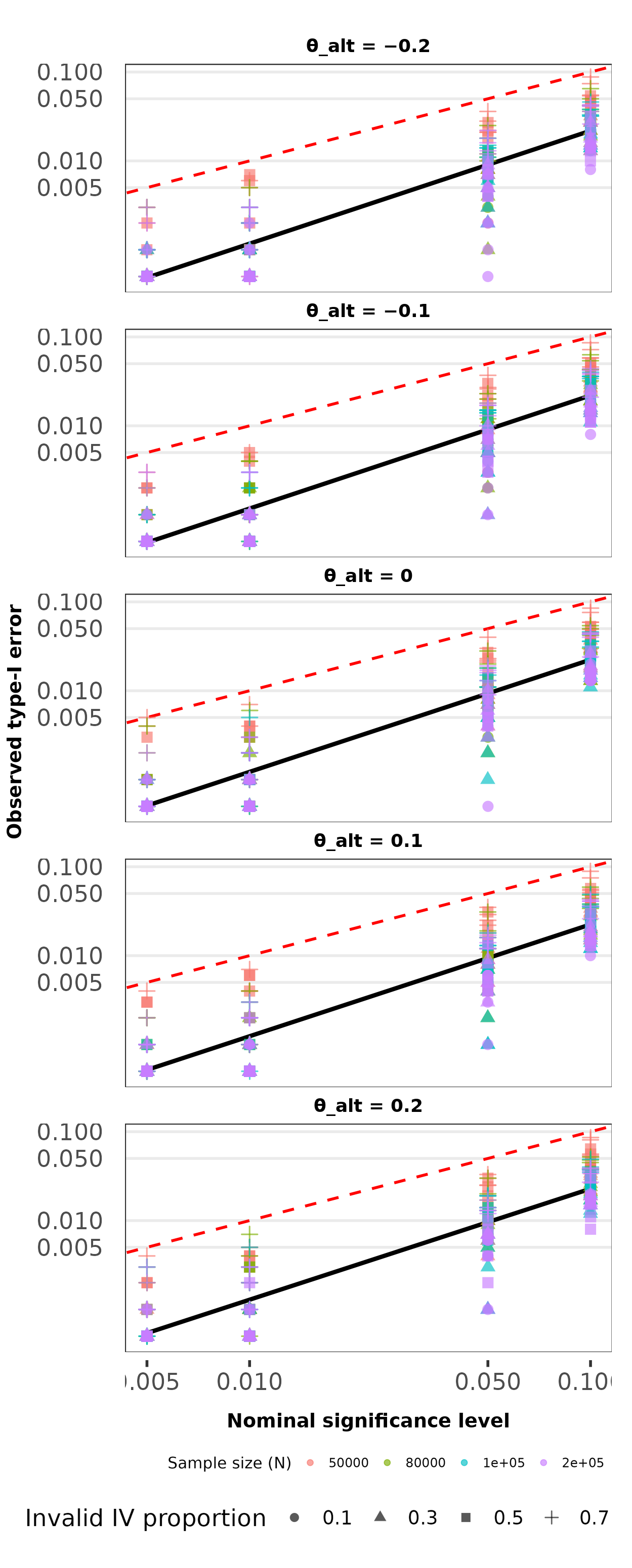}
    \caption{\textit{\textbf{Type-I error calibration of IB-Mode under the null target effect.} This figure evaluates calibration when the target-trait causal effect is null (\(\theta=0\)) and IB-Mode is run at \(\phi=1\). For each panel, the x-axis is the nominal significance level and the y-axis is the observed type-I error (empirical null rejection rate), with both axes shown on a \(\log_{10}\) scale to improve separation at small \(\alpha\). Results are shown for the InSIDE-violated (DN) scenario---directional pleiotropy with instrument invalidity arising from shared confounders---at an invalid-instrument overlap of $D_{\mathrm{ov}}=0.75$ (matching the scenario of main-text Figure~3). Panels correspond to \(\theta_{\text{alt}} \in \{-0.2,-0.1,0,0.1,0.2\}\), point color indicates sample size \(N\), and point shape indicates invalid-IV proportion. The red dashed line is the identity line (\(y=x\), exact calibration), and the black line is the panel-wise linear trend. Nominal levels shown correspond to \(\alpha \in \{0.005,0.01,0.05,0.1\}\).}
}
    \label{fig:type-1-error-calibration}
\end{figure}

\begin{figure}
    \centering
    \includegraphics[width=0.66\linewidth]{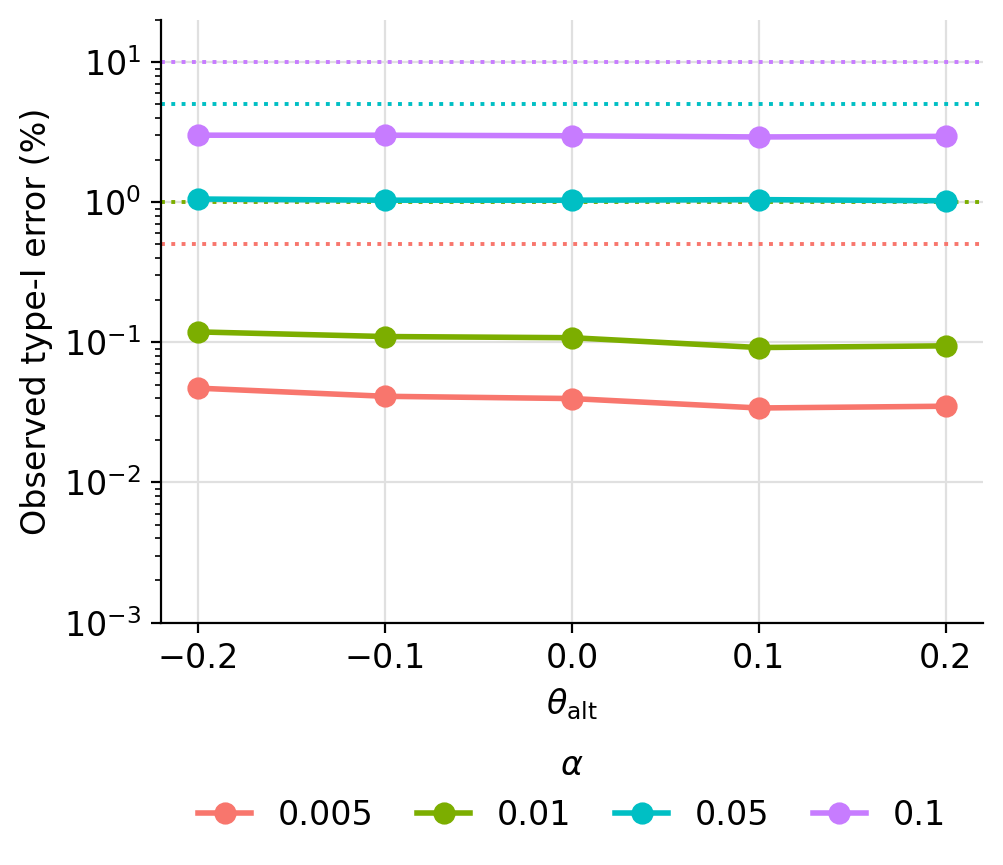}
    \caption{\textit{\textbf{Robustness of IB-Mode type-I error to secondary-trait effect size.} This figure shows observed type-I error of IB-Mode (y-axis, \%) versus \(\theta_{\text{alt}} \in \{-0.2,-0.1,0,0.1,0.2\}\) (x-axis), with separate curves for each nominal \(\alpha\), shown for the InSIDE-violated (DN) scenario. Points and solid lines are means across remaining settings (sample size \(N\), invalid-IV proportion, and overlap); dotted horizontal lines mark the corresponding nominal \(\alpha\) values. The y-axis is on a \(\log_{10}\) scale for better visualization.}
}
    \label{fig:robust_aux_trait}
\end{figure}

\begin{figure}[!h]
  \centering
  \begin{subfigure}[b]{0.9\textwidth}
    \centering
    \includegraphics[width=\textwidth]{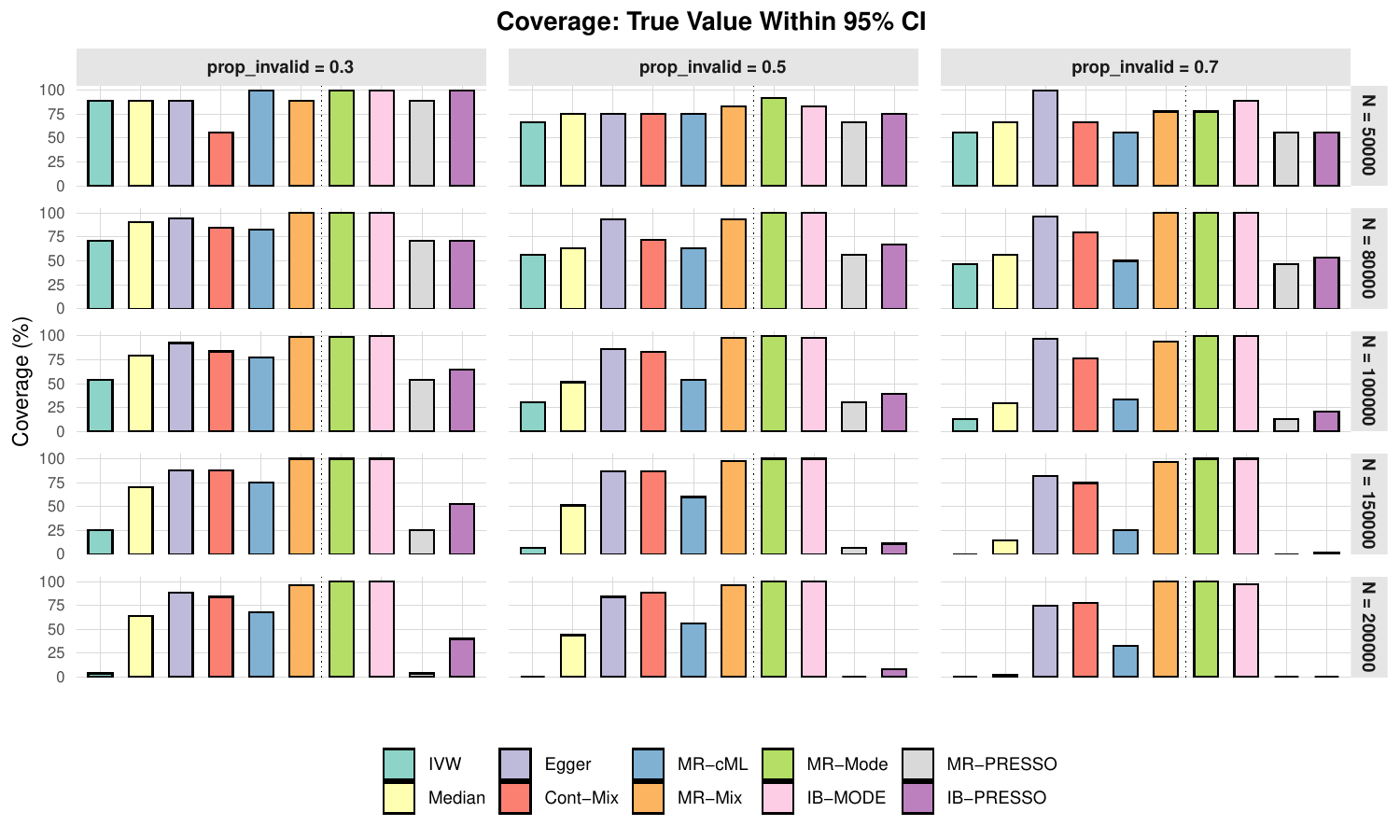}
  \end{subfigure}

  \vspace{0.5cm}

  \begin{subfigure}[b]{0.9\textwidth}
    \centering
    \includegraphics[width=\textwidth]{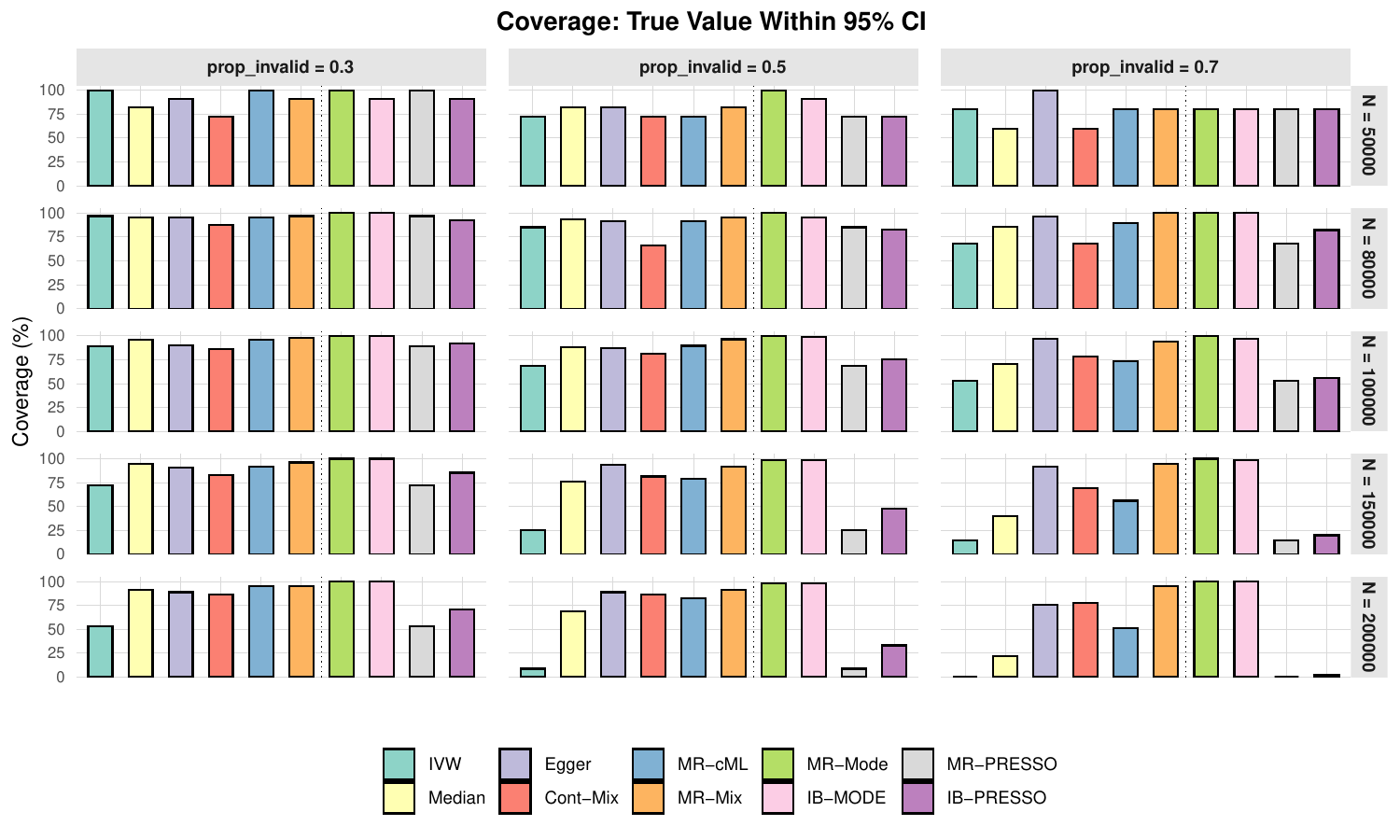}
    \caption{Coverage probability of 95\% confidence intervals.}
  \end{subfigure}

  \caption{
    \textbf{Coverage of 95\% confidence intervals for different MR methods.} Simulations are conducted under models where invalid instruments arise due to hidden heritable confounders leading to violation of the InSIDE assumption. The overlap in underlying invalid instruments across two outcome traits due to shared confounders is assumed to be $D_{\mathrm{ov}}=0.75$. Additionally, invalid instruments are allowed to have directional pleiotropic effects. The exposure has a causal effect of $\theta_{X\to Y_2}=0.3$ on the secondary (auxiliary) outcome trait. \textbf{(Top)} Coverage when the true causal effect for the primary outcome trait is $-0.1$. \textbf{(Bottom)} Coverage when the true causal effect for the primary outcome trait is $0.1$.
  }
  \label{fig:Main_simulation_results_coverage}
\end{figure}

\begin{figure}[!htp]
    \centering
    \includegraphics[width=0.7\linewidth]{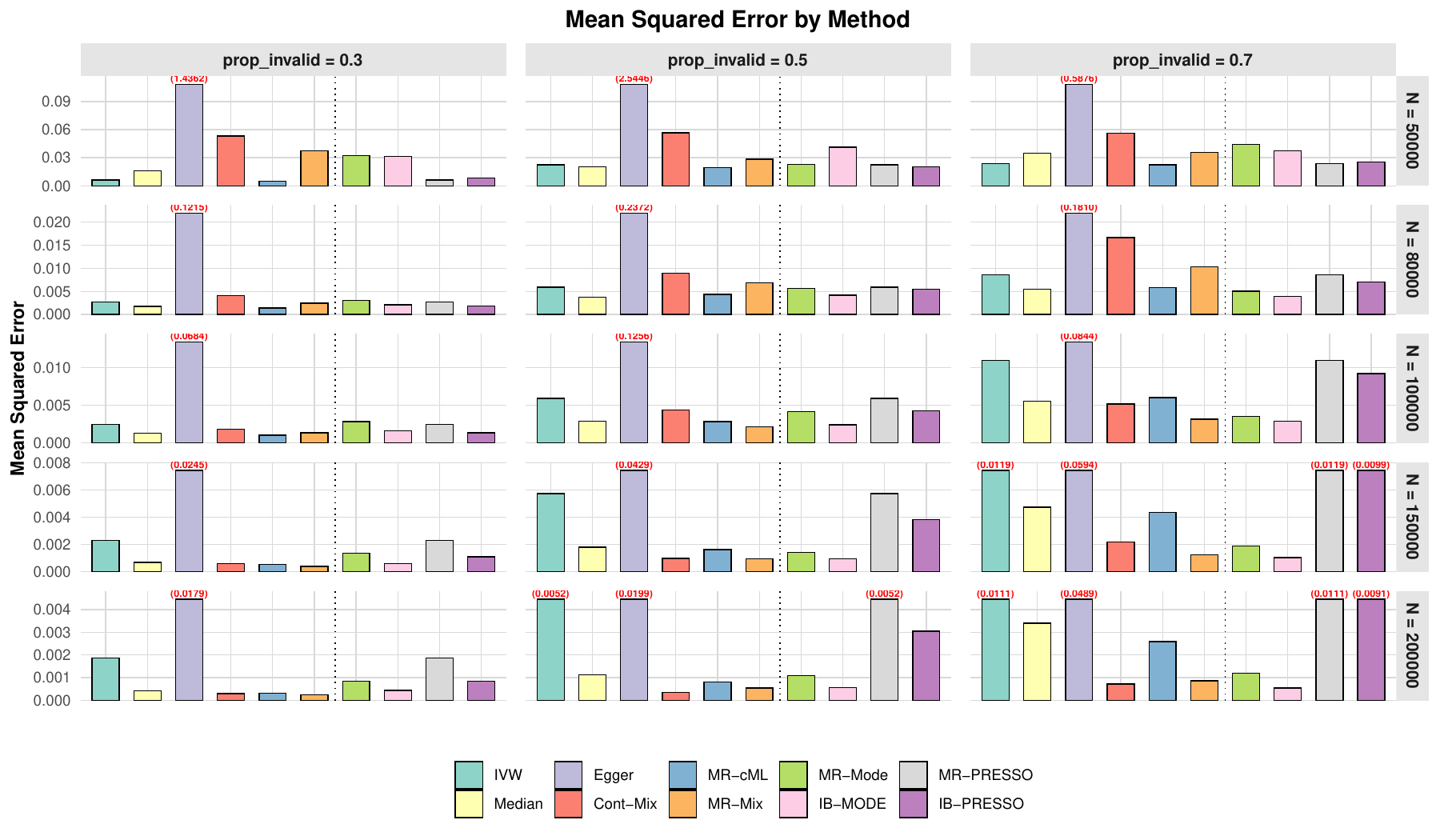}\\[0.5em]
    \includegraphics[width=0.7\linewidth]{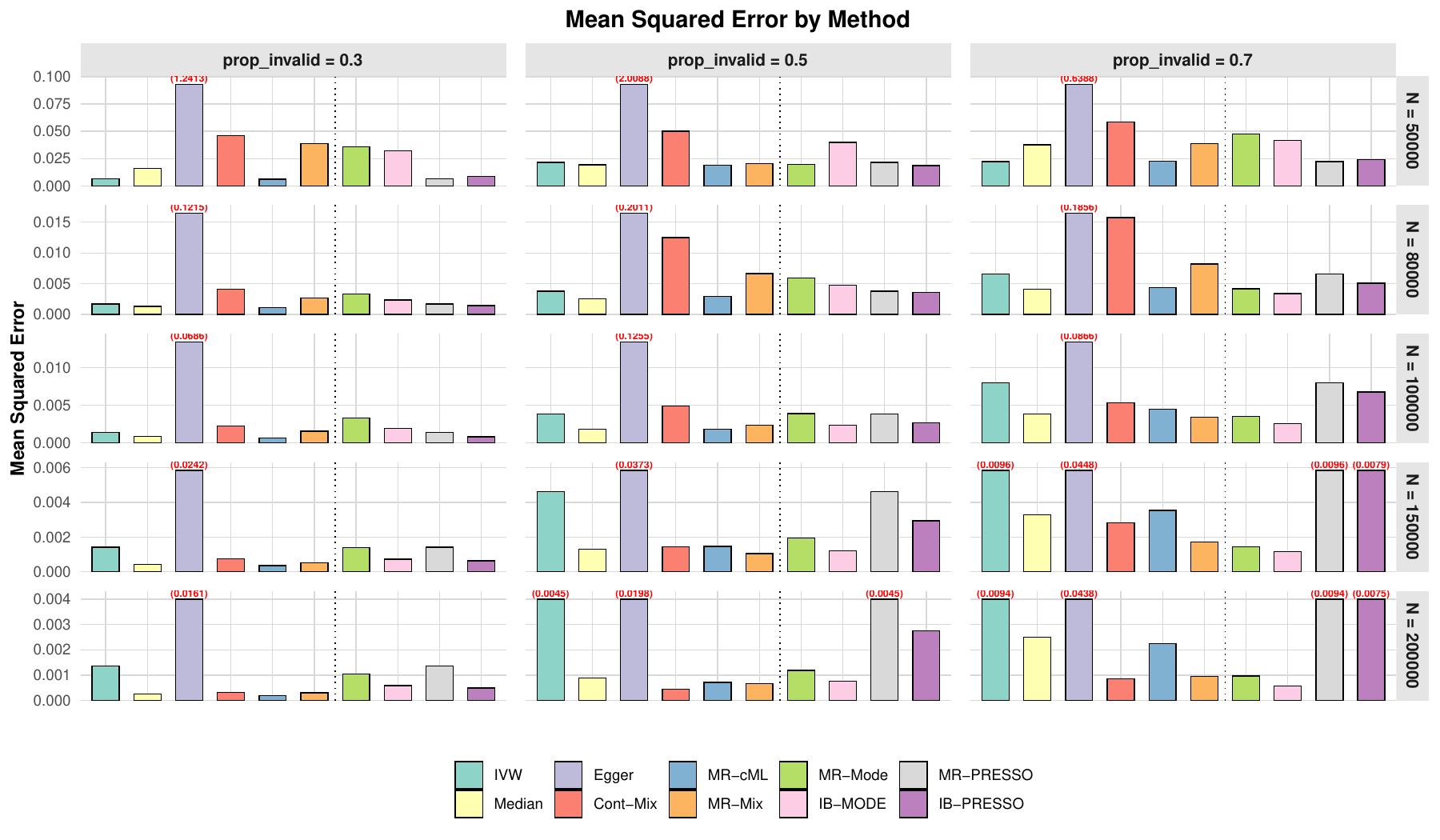}
    
    \caption{
        \textbf{Mean squared error (MSE) of different MR methods.} Simulations are conducted under models where invalid instruments arise due to hidden heritable confounders leading to violation of the InSIDE assumption. The overlap in underlying invalid instruments across two outcome traits due to shared confounders is assumed to be $D_{\mathrm{ov}}=0.75$. Additionally, invalid instruments are allowed to have directional pleiotropic effects. The exposure has a causal effect of $\theta_{X\to Y_2}=0.3$ on the secondary (auxiliary) outcome trait. \textbf{(Top)} MSE when the true causal effect for the primary outcome trait is 0. \textbf{(Bottom)} MSE when the true causal effect for the primary outcome trait is 0.1. Bars are truncated at $4\times$ the row-median MSE; actual values for truncated methods are shown in red.
        }
    \label{figsup:panel2x2_simulation}
\end{figure}

\begin{figure}
    \centering
    \includegraphics[width=1\linewidth]{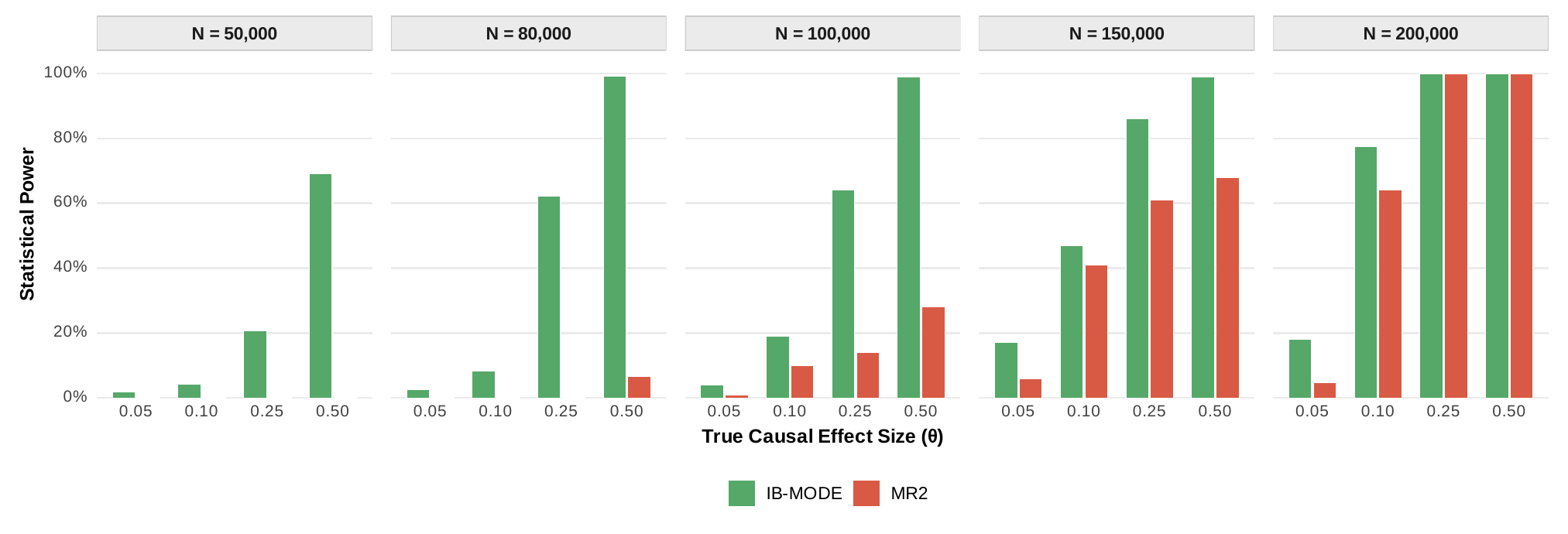}
    \caption{\textit{\textbf{Statistical Power Comparison: IB-Mode vs $\mathrm{MR}^2$.} We compared IB-Mode (green) against $\mathrm{MR}^2$ (red) across five sample sizes ($N \in \{50{,}000$--$200{,}000\}$) and four causal effect sizes ($\theta \in \{0.05, 0.10, 0.25, 0.50\}$) using 100 simulated datasets per scenario. Data were generated under a pleiotropy model with 200,000 variants (70\% invalid instruments), GWAS sample sizes $N_X = N$ and $N_y = N/2$ with 50\% overlap, and IV selection at $p < 5 \times 10^{-8}$. $\mathrm{MR}^2$ was calibrated to maintain Type I error $\leq 0.001$ via posterior inclusion probability thresholding.}}
    \label{fig:power_comparison_0.001}
\end{figure}

\begin{figure}[H]
    \centering
        \includegraphics[width=0.95\linewidth]{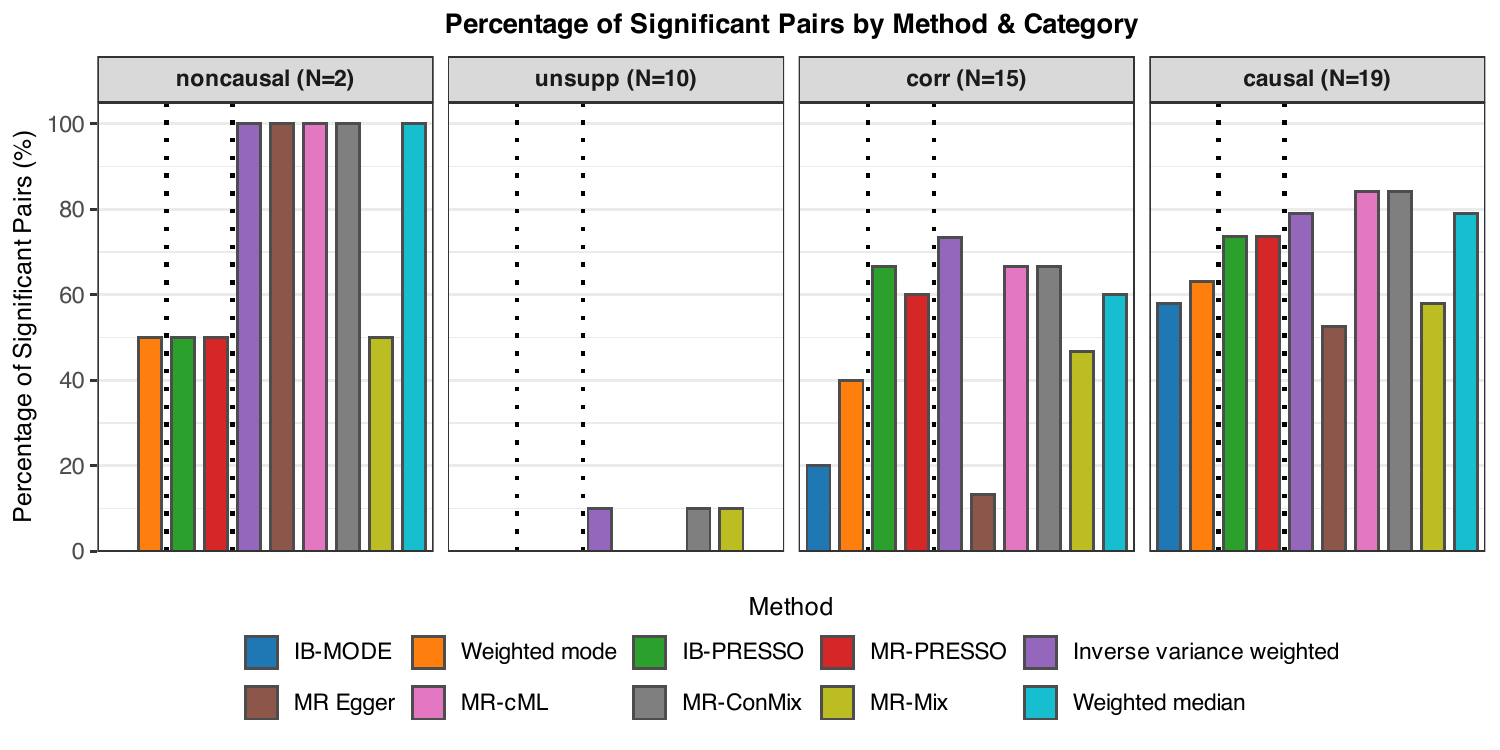}
      
    \caption{\textit{Re-analysis of hypothesized exposure-outcome relationships using  IB-based methods where the auxiliary-trait is selected as the one which has second highest value for coheterogeneity statistics ($\hat\rho_{CH}^{(N)}$) with the primary outcome trait.  The corresponding results when the auxiliary traits were chosen to have the highest value $\hat\rho_{CH}^{(N)}$ are shown in \cref{fig:main_multi_panel}B.}}
    \label{fig:sig_pair_IB2}
\end{figure}

\clearpage

%% file: supplement/2suppl_tabs.tex
\subsection{Supplemental Tables}
\label{Supp-Tab}
\setcounter{table}{0}
\renewcommand{\thetable}{S\arabic{table}}

\begin{table}[htbp]
\centering
\caption{\textit{MVP Outcome Traits: Trait Type, Description, and Sample Size (Effective SS for binary trait)}}
\begin{tabular}{lll}
\hline
binary & Coronary Artery Disease (CAD) & 177837.55 \\
binary & Stroke (Str/STK) & 89278.09 \\
binary & Transient Ischemic Attack (TIA) & 59316.20 \\
binary & Type 1 Diabetes (T1DM) & 66104.45 \\
binary & Type 2 Diabetes (T2DM) & 402913.45 \\
binary & Hyperlipidemia (HLD) & 291947.58 \\
binary & Sleep Apnea (OSA) & 48678.00 \\
binary & Hypertension (HTN) & 324905.98 \\
binary & Myocardial Infarction (MI) & 144151.05 \\
binary & Angina Pectoris (AP) & 138086.02 \\
binary & Pulmonary Heart Disease (PHD) & 89282.74 \\
binary & Pulmonary Embolism (PE) & 59142.70 \\
binary & Cardiomegaly (CMG) & 38275.76 \\
binary & Cardiomyopathy (CM) & 79783.72 \\
binary & Cardiac Conduction Disorders (CCD) & 220964.07 \\
binary & Cardiac Dysrhythmias (CD) & 374944.77 \\
binary & Congestive Heart Failure (CHF) & 205419.82 \\
binary & Atherosclerosis (AS) & 105886.66 \\
binary & Peripheral Vascular Disease (PVD) & 203570.30 \\
qt & Hemoglobin A1c (A1C) & 98735 \\
qt & Body Mass Index (BMI) & 118993 \\
qt & Brain Natriuretic Peptide (BNP) & 5833 \\
qt & Blood Urea Nitrogen (BUN\_BSP) & 117876 \\
qt & Creatine Kinase-MB (CKMB\_Abs) & 11686 \\
qt & Creatinine (Creat\_BSP) & 116707 \\
qt & C-Reactive Protein (CRP\_L) & 5051 \\
qt & Diastolic Blood Pressure (Diastolic) & 119332 \\
qt & Estimated GFR (eGFR) & 110856 \\
qt & Glucose (Glucose) & 118233 \\
qt & HDL Cholesterol (HDLC) & 113085 \\
qt & LDL Cholesterol (LDLC) & 113213 \\
qt & Systolic Blood Pressure (Systolic) & 119331 \\
qt & Triglycerides (Trig) & 107730 \\
qt & Troponin I (TroponinI) & 10375 \\
\hline
\end{tabular}
\label{Description}
\end{table}

\begin{table}[ht]
\centering
\caption{Summary of exposures and outcomes analyzed. Traits from \citet{morrison2020mendelian} are listed separately from additional traits introduced in this study. Negative controls are indicated.}
\label{tab:traits_summary_compact}
\footnotesize
\begin{tabularx}{\linewidth}{lXX}
\toprule
Source & Exposures & Outcomes \\
\midrule
From \citet{morrison2020mendelian} &
\textit{TG}, \textit{HDL}, \textit{LDL}, alcohol, smoke, \textit{BF}, \textit{BW}, \textit{BMI}, height, \textit{FG}, \textit{SBP}, \textit{DBP} &
\textit{CAD}, stroke, \textit{T2D}, asthma\textsuperscript{\dag} \\
Additional (this study) &
\textit{vitamin D} &
\textit{HTN}, \textit{HDL-C}, \textit{LDL-C}, \textit{eGFR} \\
\bottomrule
\end{tabularx}

\vspace{2pt}
\raggedright\footnotesize
\textsuperscript{\dag} Negative control outcome.
\end{table}

\begin{table}[ht]
\centering
\caption{GWAS sources and sample sizes for exposures}
\begin{tabular}{l l l}
\hline
Exposure                   & GWAS Source                  & Sample Size ($N$) \\
\hline
BMI                        & GIANT \citep{locke2015genetic}       & 339,224 \\
HDL-C, LDL-C, TG           & GLGC \citep{dron2025breadth}         & 1,654,960 \\
SBP, DBP                   & \citep{keaton2024genome}             & 1,028,980 \\
vitamin D                  & \citep{revez2020genome}              & 417,580 \\
Fasting glucose            & \citep{chen2021trans}                & 200,622 \\
Height                     & GIANT \citep{wood2014defining}       & 253,288 \\
Birth weight               & UK Biobank \citep{NealeLab2017_UKBiobank} & 193,063 \\
Body fat percentage        & UK Biobank \citep{Elsworth2018_BodyFatPercentage} & 454,633 \\
Smoking status             & UK Biobank \citep{Elsworth2018_EverSmoked} & 461,066 \\
Alcoholic drinks per week  & \citep{liu2019association}            & 335,394 \\
\hline
\end{tabular}
\end{table}

\begin{table}
\centering
\caption{\textit{Summary of methods supporting causal or correlated relationships between exposures and outcomes across association categories. The selection of diseases, risk factors and the classification of their relationship are adapted from \cite{morrison2020mendelian} (\checkmark: support from method)}}
\begin{adjustbox}{max width=\textwidth}
\begin{tabular}{l l l c c c c c c c c c c}
\hline
Exposure & Outcome & Category & IB-Mode & IB-PRESSO & Inverse variance weighted & MR Egger & MR-ConMix & MR-Mix & MR-PRESSO & MR-cML & Weighted median & Weighted mode \\
\hline
Smoking & Asthma & causal & & & & & & & & & & \\
BF & CAD & causal & \checkmark & \checkmark & \checkmark & & \checkmark & \checkmark & \checkmark & \checkmark & \checkmark & \checkmark \\
BMI & CAD & causal & \checkmark & \checkmark & \checkmark & \checkmark & \checkmark & \checkmark & \checkmark & \checkmark & \checkmark & \checkmark \\
DBP & CAD & causal & \checkmark & \checkmark & \checkmark & \checkmark & \checkmark & \checkmark & \checkmark & \checkmark & \checkmark & \checkmark \\
Height & CAD & causal & \checkmark & \checkmark & \checkmark & & \checkmark & & \checkmark & \checkmark & \checkmark & \\
LDL & CAD & causal & \checkmark & \checkmark & \checkmark & \checkmark & \checkmark & \checkmark & \checkmark & \checkmark & \checkmark & \checkmark \\
SBP & CAD & causal & \checkmark & \checkmark & \checkmark & \checkmark & \checkmark & \checkmark & \checkmark & \checkmark & \checkmark & \checkmark \\
Smoking & CAD & causal & & \checkmark & & & & & & \checkmark & & \\
TG & CAD & causal & \checkmark & \checkmark & \checkmark & \checkmark & \checkmark & \checkmark & \checkmark & \checkmark & \checkmark & \checkmark \\
BF & Stroke & causal & & \checkmark & \checkmark & & \checkmark & & \checkmark & \checkmark & \checkmark & \\
BMI & Stroke & causal & & & \checkmark & & \checkmark & \checkmark & & \checkmark & \checkmark & \\
DBP & Stroke & causal & \checkmark & \checkmark & \checkmark & \checkmark & \checkmark & & \checkmark & \checkmark & \checkmark & \checkmark \\
LDL & Stroke & causal & & \checkmark & \checkmark & \checkmark & \checkmark & & \checkmark & \checkmark & \checkmark & \checkmark \\
SBP & Stroke & causal & \checkmark & \checkmark & \checkmark & \checkmark & \checkmark & & \checkmark & \checkmark & \checkmark & \checkmark \\
Smoking & Stroke & causal & & & & & & & & & & \\
BF & T2D & causal & \checkmark & \checkmark & \checkmark & \checkmark & \checkmark & \checkmark & \checkmark & \checkmark & \checkmark & \checkmark \\
BMI & T2D & causal & \checkmark & \checkmark & \checkmark & \checkmark & \checkmark & \checkmark & \checkmark & \checkmark & \checkmark & \checkmark \\
FG & T2D & causal & \checkmark & \checkmark & \checkmark & & \checkmark & \checkmark & \checkmark & \checkmark & \checkmark & \checkmark \\
Smoking & T2D & causal & & & & & \checkmark & \checkmark & & & & \\
BF & Asthma & correlated & & \checkmark & \checkmark & & \checkmark & & \checkmark & \checkmark & \checkmark & \\
BMI & Asthma & correlated & & \checkmark & \checkmark & & \checkmark & & \checkmark & \checkmark & \checkmark & \\
Alcohol & CAD & correlated & & & & & & & & & & \\
BW & CAD & correlated & & \checkmark & \checkmark & & & & \checkmark & & & \\
FG & CAD & correlated & \checkmark & \checkmark & \checkmark & & \checkmark & & \checkmark & \checkmark & \checkmark & \\
Alcohol & Stroke & correlated & & & & & & \checkmark & & & & \\
BW & Stroke & correlated & & & & & & & & & & \\
FG & Stroke & correlated & & & & & & & & & & \\
Height & Stroke & correlated & & & \checkmark & & & & & \checkmark & & \\
TG & Stroke & correlated & & \checkmark & \checkmark & & \checkmark & & \checkmark & \checkmark & & \\
Alcohol & T2D & correlated & \checkmark & & & & \checkmark & \checkmark & & & \checkmark & \checkmark \\
BW & T2D & correlated & \checkmark & \checkmark & \checkmark & & \checkmark & \checkmark & \checkmark & \checkmark & \checkmark & \checkmark \\
DBP & T2D & correlated & \checkmark & \checkmark & \checkmark & & \checkmark & \checkmark & \checkmark & \checkmark & \checkmark & \checkmark \\
HDL & T2D & correlated & & \checkmark & \checkmark & & \checkmark & \checkmark & \checkmark & \checkmark & \checkmark & \checkmark \\
LDL & T2D & correlated & & & & \checkmark & & & & & & \\
SBP & T2D & correlated & & \checkmark & \checkmark & & \checkmark & \checkmark & \checkmark & \checkmark & \checkmark & \checkmark \\
TG & T2D & correlated & & \checkmark & \checkmark & \checkmark & \checkmark & \checkmark & & \checkmark & \checkmark & \checkmark \\
Alcohol & Asthma & unsupported & & & & & & & & & & \\
BW & Asthma & unsupported & & & & & & & & & & \\
DBP & Asthma & unsupported & & & & & & & & & & \\
FG & Asthma & unsupported & & & & & & & & & & \\
HDL & Asthma & unsupported & & & & & & & & & & \\
Height & Asthma & unsupported & & & & & & & & & & \\
LDL & Asthma & unsupported & & & & & & & & & & \\
SBP & Asthma & unsupported & & & & & & & & & & \\
TG & Asthma & unsupported & & & \checkmark & & & & & & & \\
Height & T2D & unsupported & & & & & \checkmark & \checkmark & & & & \\
HDL & CAD & noncausal & & \checkmark & \checkmark & \checkmark & \checkmark & \checkmark & \checkmark & \checkmark & \checkmark & \\
HDL & Stroke & noncausal & & & \checkmark & \checkmark & \checkmark & & & \checkmark & \checkmark & \checkmark \\
\hline
\end{tabular}
\label{summ_method_morrison}
\end{adjustbox}
\end{table}

\begin{table}[!h]
\centering
\caption{\textit{Percentage efficiency gain (\%) of IB-Mode (vs Weighted mode) and IB-PRESSO (vs MR-PRESSO) across exposures for the causal pairs. IB-Mode gains are reported at the default bandwidth $\phi=1$ and at the more conservative $\phi=0.5$.}}
\resizebox{\ifdim\width>\linewidth\linewidth\else\width\fi}{!}{
\begin{tabular}{lrrr}
\toprule
Exposure & IB-Mode ($\phi{=}1$, \%) & IB-Mode ($\phi{=}0.5$, \%) & IB-PRESSO (\%) \\
\midrule
FG       & -43.0  & -31.7  &  59.7 \\
TG       & -42.5  &  -4.4  &  39.6 \\
BF       &  58.5  & 208.0  & 213.2 \\
BMI      &   3.6  &  98.0  &  50.1 \\
DBP      &  92.6  &  95.5  &  15.6 \\
Height   &  66.3  &  37.8  &   0.0 \\
LDL      & -56.1  &  52.7  &  67.5 \\
SBP      &   9.7  &  15.2  &  27.7 \\
Smoking  &  61.9  &  -5.7  &  13.7 \\
\bottomrule
\end{tabular}}
\label{causalgain}
\end{table}

\clearpage